\theoremstyle{plain}
\newtheorem{theorem}{Theorem}[section]
\newcommand{\BTHM}{\begin{thm}} \newcommand{\ETHM}{\end{thm}}
\newtheorem{corollary}[theorem]{Corollary}
\newcommand{\BCR}{\begin{corollary}} \newcommand{\ECR}{\end{corollary}}
\newtheorem{lemma}[theorem]{Lemma}
\newcommand{\BL}{\begin{lemma}}   \newcommand{\EL}{\end{lemma}}
\newtheorem{claim}[theorem]{Claim}
\newcommand{\BCM}{\begin{claim}}   \newcommand{\ECM}{\end{claim}}
\newtheorem{proposition}[theorem]{Proposition}
\newcommand{\BP}{\begin{proposition}}   \newcommand{\EP}{\end{proposition}}
\newtheorem{assm}[theorem]{Assumption}
\newcommand{\BASM}{\begin{assm}}   \newcommand{\EASM}{\end{assm}}
\theoremstyle{definition}
\newtheorem{definition}{Definition}[section]
\newcommand{\BD}{\begin{definition}}   \newcommand{\ED}{\end{definition}}
\newtheorem{con}[theorem]{Conjecture}
\newcommand{\BCONJ}{\begin{con}}   \newcommand{\ECONJ}{\end{con}}
\newtheorem{problem}[theorem]{Problem}
\newcommand{\BPR}{\begin{problem}}   \newcommand{\EPR}{\end{problem}}
\newenvironment{rem}{\noindent{\bf Remark:~~}}{}
\newcommand{\BREM}{\begin{rem}} \newcommand{\EREM}{\end{rem}}
\newenvironment{discussion}{\noindent{\bf Discussion:~~\\}}{}
\newcommand{\BDIS}{\begin{discussion}} \newcommand{\EDIS}{\end{discussion}}
\newtheorem{observation}{Observation}[section]
\numberwithin{equation}{section}
\newcommand{\poly}{{\rm poly}}
\newcommand{\ID}{\operatorname{ID}}
\newcommand{\EID}{\operatorname{EID}}
\newcommand{\UID}{\operatorname{UID}}
\newcommand{\LCA}{\operatorname{LCA}}
\newcommand{\depth}{\operatorname{depth}}
\newcommand{\LCALabel}{\mathsf{ANC}}
\newcommand{\Gsub}[1]{G \setminus \{ {#1} \} }
\newcommand{\conn}{\mathsf{conn}}
\newcommand{\ccid}{\mathsf{CID}}
\newcommand{\parent}{par}
\newcommand{\heavy}{h}
\newcommand{\FTLabel}[1]{ \mathsf{L}_{\mathsf{ {#1} }} }
\newcommand{\VFTLabel}[1]{ \mathsf{VL}_{\mathsf{ {#1} }} }
\newcommand{\EFTLabel}{ \mathsf{EL}}
\newcommand{\AnSet}{\mathsf{AnSet}}
\newcommand{\Null}{\mathsf{null}}
\newcommand{\nlights}{\operatorname{nl}}
\newcommand{\Sketch}{\mathsf{Sketch}}
\newcommand{\XOR}{\mathsf{XOR}}
\title{\~{O}ptimal Dual Vertex Failure Connectivity Labels}
\author{
	Merav Parter \thanks{This project is funded by the European Research Council (ERC) under the European Union’s Horizon 2020 research and innovation programme (grant agreement No. 949083), and by the Israeli Science Foundation (ISF), grant No. 2084/18.}\\
	\small Weizmann Institute \\
	\small merav.parter@weizmann.ac.il
	\and				
	Asaf Petruschka \\
	\small Weizmann Institute \\
	\small asaf.petruschka@weizmann.ac.il 
}
\date{}
\begin{document}

\maketitle

\begin{abstract}

In this paper we present succinct labeling schemes for supporting connectivity queries under vertex faults.
For a given $n$-vertex graph $G$, an $f$-VFT (resp., EFT) connectivity labeling scheme is a distributed data structure that assigns each of the graph edges and vertices a short label, such that given the labels of a vertex pair $u$ and $v$, and the labels of at most $f$ failing \emph{vertices} (resp., edges) $F$, one can determine if $u$ and $v$ are connected in $G \setminus F$. The primary complexity measure is the length of the individual labels. Since their introduction by [Courcelle, Twigg, STACS '07], FT labeling schemes have been devised only for a limited collection of graph families. A recent work [Dory and Parter, PODC 2021]  provided EFT labeling schemes for general graphs under \emph{edge} failures, leaving the vertex failure case fairly open. 

We provide the first sublinear $f$-VFT labeling schemes for $f \geq 2$ for any $n$-vertex graph. Our key result is $2$-VFT connectivity labels with $O(\log^3 n)$ bits. Our constructions are based on analyzing the structure of dual failure replacement paths on top of the well-known heavy-light tree decomposition technique of [Sleator and Tarjan, STOC 1981]. We also provide $f$-VFT labels with sub-linear length (in $|V|$) for any $f=o(\log\log n)$, that are based on a reduction to the existing EFT labels.

\end{abstract}

\newpage 

\tableofcontents
\newpage 

\section{Introduction}
Connectivity labels are among the most fundamental distributed data-structures, with a wide range of applications to graph algorithms, distributed computing and communication networks. The error-prone nature of modern day communication networks poses a demand to support a variety of logical structures and services, in the presence of vertex and edge failures. 
In this paper we study fault-tolerant (FT) connectivity labeling schemes, also known in the literature as \emph{forbidden-set} labeling. In this setting, it is required to assign each of the graph's vertices (and possibly also edges) a short name (denoted as \emph{label}), such that given the labels of a vertex pair $u$ and $v$, and the labels of a faulty-set $F$, it possible to deduce -- using no other information -- whether $u$ and $v$ are connected in $G \setminus F$. Since their introduction by Courcelle and Twigg \cite{CourcelleT07} and despite much activity revolving these topics, up until recently FT-labels have been devised only for a restricted collection of graph families. This includes graphs with bounded tree-width, planar graphs, and graphs with bounded doubling dimension \cite{CourcelleT07,abraham2012fully,AbrahamCGP16}. Hereafter, FT-labeling schemes under $f$ faults of vertices (resp., edge) are denoted by $f$-VFT labeling (resp., $f$-EFT). 

A recent work by Dory and Parter \cite{DoryP21} provided the first EFT-labeling schemes for general $n$-vertex graphs, achieving poly-logarithmic label length, independent of the number of faults $f$. For graphs with maximum degree $\Delta$, their labels immediately provide VFT-labels with $\widetilde{O}(\Delta)$ bits\footnote{By including in the label of vertex $v$ the EFT labels of all
	 edges incident to $v$.}.  The dependency on $\Delta$ is clearly undesirable, as it might be linear in $n$. This dependency can be explained by the fact that a removal of single vertex might decompose the graph into $\Theta(\Delta)$ disconnected components. The latter behavior poses a challenge for the labeling algorithm that must somehow compress the information on this large number of components into a short label.

While the $\Delta$ dependency seems to be inherent in the context of distributed vertex connectivity \cite{pritchard2011fast,parter2019small},  Baswana and Khanna and Baswana et al. \cite{khanna2010approximate,BaswanaCHR18} overcome this barrier for the single vertex fault case. Specifically, they presented a construction of distance oracles and labels, that maintain approximate distances in the presence of a single vertex fault with near linear space. This provides, in particular, $1$-VFT approximate-distance labels of polylogarithmic length.  Their constructions are based on exploiting the convenient structure of single-fault replacement paths. $1$-VFT connectivity labels of logarithmic length are easy to achieve using block-cut trees \cite{west_introduction_2000}, as discussed later on.

When turning to handling dual vertex failures,  it has been noted widely that there is
a sharp qualitative difference between a single failure and two or more failures. This one-to-two jump has been established by now for a wide-variety of fault-tolerant settings, e.g., reachability oracles \cite{Choudhary16}, distance oracles \cite{duan2009dual}, distance preservers \cite{parter2015dual,GuptaK17,ParterDISC20} and vertex-cuts \cite{HopcroftT73,BattistaT89,BattistaT96,GeorgiadisILP15}. In the lack of any $f$-VFT labeling scheme with sublinear length for any $f\geq 2$, we focus on the following natural question: 
\begin{quote}
	\emph{Is it possible to design dual vertex failure connectivity labels of $\widetilde{O}(1)$ length?}
\end{quote}

The only prior $2$-VFT labeling schemes known in the literature have been provided for \emph{directed} graphs in the special case of \emph{single-source} reachabilty by Choudhary \cite{Choudhary16}. By using the well-known tool of independent trees \cite{GeorgiadisT12}, \cite{Choudhary16} presented a construction of dual-failure \emph{single-source} reachability data structures, that also provide labels of $O(\log^3 n)$ bits. Note that in a sharp contrast to undirected connectivity that admit $O(\log n)$-bit labels, (all-pairs) reachability labels require \emph{linear} length, even in the fault-free setting \cite{DulebaGJ20}.

\paragraph{Representation of Small Vertex Cuts: Block-Cut and SPQR Trees.} 
The block-cut tree representation of a graph compactly encodes all of its single cut vertices (a.k.a. articulation points), and the remaining connected components upon the failure of each such vertex \cite{west_introduction_2000}. By associating each vertex of the original graph with a corresponding node in the block-cut tree and using standard tree labels techniques, $1$-VFT connectivity labels are easily achieved.

Moving on to dual failures, we have the similar (but more complex) SPQR-tree representation \cite{BattistaT89}, which encodes all cut pairs (i.e., vertex pair whose joint failure disconnects the graph). However, it is currently unclear to us how to utilize this structure for $2$-VFT connectivity labels.  The main issue is generalizing the vertex-node association from the block-cut tree to SPQR tree: each vertex may appear in many nodes with different `virtual edges' adjacent to it, corresponding to different cut-mates forming a cut-pair with it.

Kanevsky, Tamassia, Di Battista, and Chen \cite{KanevskyTBC91}
extended the SPQR structure to represent 3-vertex cuts.
While these representations are currently limited to cuts of size at most $3$, we hope that the approach taken in this paper can be extended to handle larger number of faults. In addition, it is arguably more distributed friendly, as it is based on basic primitives such as the heavy-light tree decomposition, which can be easily implemented in the distributed setting.  

\paragraph{On the Gap Between Edge vs. Vertex Connectivity.} Recent years have witnessed an enormous progress in our understanding of vertex cuts, from a pure graph theoretic perspective \cite{PettieY21} to many algorithmic applications  \cite{NanongkaiSY19,LiNPSY21,PettieY21,HeLW21}. Despite this exciting movement, our algorithmic toolkit for handling vertex cuts is still considerably limited compared to the counterpart setting of edge connectivity. Indeed, near-linear time sequential algorithms for edge connectivity and minimum weighted edge cuts are known for years since the celebrated result of Karger \cite{Karger93}, and its recent improvements by \cite{Ghaffari0T20,GawrychowskiMW20}. In contrast, only very recently, Forster et al. \cite{ForsterNYSY20} provided the first near linear time sequential algorithm for detecting small \emph{vertex} cuts of size $\widetilde{O}(1)$. Despite a large collection of recent groundbreaking results \cite{NanongkaiSY19,LiNPSY21,PettieY21,HeLW21}, to this date, no subquadratic-time algorithm is known for the entire connectivity regime.   

\paragraph{Additional Related Work.}
Our dual-failure vertex connectivity labels are also closely related to \emph{connectivity sensitivity oracles} \cite{DuanConnectivitySODA17,DuanP20}, that provide low-space centralized data-structure for supporting connectivity queries in presence of vertex faults. The main goal in our setting is to provide a \emph{distributed} variant of such construction, where each vertex holds only $S(n)/n$ bits of information, where $S(n)$ is the global space of the centralized data-structure. Duan and Pettie \cite{DuanConnectivitySODA17,DuanP20} provided an ingenues construction that supports multiple vertex faults in nearly optimal space of $\widetilde{O}(n)$. These constructions are built upon highly centralized building blocks, and their distributed implementation is fairly open.

\subsection{Our Contribution}\label{subsec:contribution}

We first present new constructions of $1$-VFT and $2$-VFT labeling schemes with polylogarithmic length. On a high level, our approach is based on analyzing the structure of dual failure replacement paths, and more specifically their intersection with a given heavy-light tree decomposition of (a spanning tree of) the graph. Throughout, we denote the number of graph vertices (edges) by $n$ (resp., $m$).

\paragraph{Warm-Up: $1$-VFT Connectivity Labels.} As a warm-up to our approach, we consider the single fault setting and provide a simple label description that uses only the heavy-light decomposition technique. 
\begin{theorem}[$1$-VFT Connectivity Labels]\label{thm:1vft}
For any $n$-vertex graph, there is a \emph{deterministic} $1$-VFT connectivity labeling scheme with label length of $O(\log^2 n)$ bits. The decoding algorithm takes $\poly(\log n)$ time. The labels are computed in $\widetilde{O}(m)$ randomized centralized time, or $\widetilde{O}(D)$ randomized congest rounds. 
\end{theorem}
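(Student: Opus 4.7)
The plan is to reduce $1$-VFT connectivity on $G$ to a single ``is $z$ strictly on the $x$-to-$y$ path'' query on the block-cut tree of $G$, and then answer such tree queries via standard heavy-light decomposition labels of $O(\log^2 n)$ bits.

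First I would assume without loss of generality that $G$ is connected (otherwise process each connected component separately and prepend an $O(\log n)$-bit component identifier to every label, so that a query between vertices in different components immediately returns ``disconnected''). Let $T = T(G)$ be the block-cut tree of $G$, whose nodes are the blocks (maximal biconnected subgraphs, ``B-nodes'') and the cut vertices (``C-nodes''), with an edge joining each C-node $C_c$ to every B-node whose block contains $c$. Define $\tau : V(G) \to V(T)$ by $\tau(v) = C_v$ if $v$ is a cut vertex, and $\tau(v) = B_v$ (the unique block containing $v$) otherwise. The classical fact I would invoke is that for any $f \in V(G)$ and any $u, v \neq f$, the vertices $u$ and $v$ are disconnected in $\Gsub{f}$ iff $f$ is a cut vertex and $C_f$ lies strictly on the $\tau(u)$-to-$\tau(v)$ path in $T$: deleting $C_f$ from $T$ partitions the remaining nodes into subtrees that correspond bijectively to the connected components of $\Gsub{f}$.

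Next I would root $T$ arbitrarily and apply heavy-light decomposition. Each node $x \in V(T)$ receives a label $L_{HL}(x)$ of $O(\log^2 n)$ bits encoding the at most $O(\log n)$ heavy paths crossed on the root-to-$x$ walk, together with the entry index on each. From two such labels one computes in $\poly(\log n)$ time both the ancestor relation and the $\LCA$, and hence decides, for any three nodes $x, y, z$, whether $z$ lies strictly on the $x$-to-$y$ path (namely, iff $\LCA(x,y)$ is a proper ancestor of $z$ while $z$ is an ancestor of $x$ or of $y$). The label of $v \in V(G)$ then consists of (i) $v$'s $O(\log n)$-bit identifier, (ii) a bit flagging whether $v$ is a cut vertex, (iii) the component identifier of $v$, and (iv) $L_{HL}(\tau(v))$; the total length is $O(\log^2 n)$. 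Decoding on labels $L(u), L(v), L(f)$ outputs ``disconnected'' iff $u=f$, $v=f$, $u$ and $v$ lie in different components, or $f$ is a cut vertex and $\tau(f)$ lies strictly on the $\tau(u)$-to-$\tau(v)$ path in $T$; correctness follows from the block-cut characterization above.

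For construction time, the block-cut tree is computable in $\widetilde{O}(m)$ centralized time by DFS-based biconnectivity and in $\widetilde{O}(D)$ randomized CONGEST rounds via known distributed biconnectivity routines, after which the heavy-light decomposition of $T$ and all labels are extracted by a single tree traversal within the same budget. The main obstacle I anticipate is not correctness, which is essentially the classical block-cut reduction, but cleanly handling the boundary cases in the decoder where $\tau(u)$ or $\tau(v)$ coincides with or is adjacent to $C_f$, and where $u$ and $v$ share a block that contains the failed vertex $f$; in both situations the ``strictly on the path'' test must be formulated uniformly across B-node and C-node types, which can be resolved with a small amount of careful bookkeeping in the label.
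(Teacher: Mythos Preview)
Your proposal is correct and is in fact the classical block-cut tree route that the paper itself alludes to in the introduction as the ``easy'' way to obtain $1$-VFT labels. However, the paper deliberately takes a different path: it fixes an arbitrary spanning tree $T$ of $G$, applies heavy-light decomposition to $T$ (not to the block-cut tree), and for every vertex $a$ stores, for each $b' \in I(a)$ with parent $b$, the component identifier $\ccid(b', G\setminus\{b\})$ together with the bit $\conn(s,b',G\setminus\{b\})$. Decoding locates the child $x'$ of the failed vertex $x$ on $T[x,w]$, which by Lemma~\ref{lem:child_on_path} lies in $I(w)\cup I(x)$, and then compares stored component IDs. The two approaches buy different things. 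Your block-cut reduction is shorter and requires no per-subgraph connectivity information, but it does not generalize: the paper explicitly notes that the analogous SPQR structure for two faults does not admit an obvious vertex-to-node mapping. The paper's spanning-tree approach is engineered precisely as a warm-up whose ingredients (interesting sets, stored $\ccid$'s in $G\setminus\{\cdot\}$, replacement-path reasoning) scale to the $2$-VFT construction in Section~\ref{sec:dual-fault}. On the construction-time side, your appeal to ``known distributed biconnectivity routines'' for the $\widetilde{O}(D)$ CONGEST bound is the weakest link; the paper obtains this bound not via block-cut trees but via graph sketches aggregated over the BFS tree (Appendix~\ref{sec:distributed-comp}), and it is not obvious that building and labeling the block-cut tree distributively fits the same budget without a comparable machinery.
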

While this construction is presented mainly to introduce our technique, it also admits an efficient distributed implementation which follows by the recent work of \cite{DistCut2022}.

\paragraph{$2$-VFT Connectivity Labels.} We then turn to consider the considerably more involved setting of supporting two vertex failures. In the literature, heavy-light tree decomposition have been proven useful mainly for handling single vertex faults, e.g. in \cite{khanna2010approximate}. The only dual-failure scheme of \cite{Choudhary16} is tailored to the \emph{single-source} setting.  By carefully analyzing dual-failure replacement paths and their interaction with the heavy-light tree decomposition of a given spanning tree, we provide deterministic labeling schemes of $O(\log^3 n)$ bits. Our main technical contribution in this paper is stated as follow:

\begin{theorem}[$2$-VFT Connectivity Labels]\label{thm:2vft}
For any $n$-vertex graph, there is a deterministic $2$-VFT connectivity labeling scheme with label length of $O(\log^3 n)$ bits. The decoding algorithm takes $\poly(\log n)$ time. The labels are computed in $\widetilde{O}(n^2)$ time.
\end{theorem}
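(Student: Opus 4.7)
The plan is to build on the warm-up construction of \Cref{thm:1vft}, lifting its single-fault heavy-light machinery to handle dual failures. I would fix a spanning tree $T$ rooted at an arbitrary vertex $r$, apply a heavy-light decomposition (HLD) to $T$, and design a label for $u$ that consists of (i) standard HLD ancestor/LCA information of length $O(\log^2 n)$, and (ii) an $O(\log^2 n)$-bit ``block'' of connectivity information for each of the $O(\log n)$ heavy paths on $u$'s root-path. Summed over the $O(\log n)$ relevant heavy paths, this yields the claimed $O(\log^3 n)$-bit labels.

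The core structural step is to show that any dual-failure replacement $u$--$v$ path in $G\setminus\{x,y\}$ can be charged to a constant number of ``escape'' non-tree edges that cross a few specific heavy paths containing or adjacent to $x$ and $y$. Concretely, I would prove a lemma of the form: for any $F=\{x,y\}$ and any $u,v$, the components of $G\setminus F$ are determined by, for each heavy path $P$ meeting $\{x,y\}$ or lying on $\pi_T(u,v)$, the bipartition of ``$P$-side'' subtrees induced by non-tree edges that do not use $F$. This is the natural two-fault extension of the single-fault tree-detour decomposition used in \Cref{thm:1vft}, but now the detour may itself be cut by the second failure, forcing a second round of rerouting. Formalizing this ``cascade of two detours'' and bounding its complexity is what I expect to be the main obstacle.

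With the structural lemma in place, the encoding for a vertex $u$ on heavy path $P$ (as one of its ancestor heavy paths) would store, for each pair of potential failure locations $(a,b)\in P\times P$ (and also one-sided versions $(a,\cdot)$ where the second fault is off $P$), a short $\mathbb{F}_2$-linear sketch over the non-tree edges incident to the ``$u$-side'' of $T\setminus\{a,b\}$. Using a Karger/Ahn--Guha--McGregor-style connectivity sketch of $O(\log^2 n)$ bits allows two such sketches to be XOR-combined at decoding time to test whether $u$ and $v$ end up in the same component of $G\setminus F$. The decoder, given labels of $u,v,x,y$, locates $x$ and $y$ in the HLD using the ancestor labels, identifies the $O(\log n)$ heavy paths flagged by the structural lemma, and combines the corresponding sketch blocks from $u$'s and $v$'s labels to test connectivity heavy-path by heavy-path.

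For the $\widetilde{O}(n^2)$ construction time claimed in \Cref{thm:2vft}, I would compute $T$ and its HLD in $\widetilde{O}(m)$ time, then for each of the $O(n)$ candidate first-fault vertices $x$ rebuild the auxiliary structure of $G\setminus\{x\}$ (its components, heavy path intersections, and aggregated sketches) and use it to populate, in $\widetilde{O}(n)$ total work per $x$, every vertex's sketch entries involving $x$ paired with any second fault. The hard part is the structural lemma and the verification that the sketch contents really suffice for the decoder; once that is settled, both the label-length bound and the construction-time bound follow from standard HLD and linear-sketch accounting.
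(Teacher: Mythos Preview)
Your plan has a real gap at the encoding step. You propose to store, for each ancestor heavy path $P$ of $u$, ``for each pair of potential failure locations $(a,b)\in P\times P$'' a connectivity sketch, and you budget $O(\log^2 n)$ bits per heavy path. But a heavy path can have $\Theta(n)$ vertices, so $P\times P$ has $\Theta(n^2)$ pairs; you have not explained any mechanism that compresses all of these into $O(\log^2 n)$ bits. AGM-style sketches do let you XOR subtree sketches to handle arbitrary \emph{edge} deletions, but deleting a \emph{vertex} $x$ removes all of its incident edges, and cancelling those from a sketch requires $\deg(x)$ edge labels at query time --- exactly the $\Delta$-dependence the paper is trying to avoid. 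Moreover, linear sketches are randomized, while the theorem asserts a \emph{deterministic} scheme; your proposal gives no derandomization. The ``cascade of two detours'' structural lemma you gesture at is the crux, and you concede it is the main obstacle without indicating how to prove it.

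For comparison, the paper's proof does not use sketches at all. It first augments every label with the $1$-VFT labels of \Cref{thm:1vft} and with \emph{single-source} $2$-VFT labels (from \cite{Choudhary16} or \Cref{lem:ss-2vft_labels}), which lets the decoder assume w.l.o.g.\ that both $u$ and $v$ are $xy$-disconnected from the root $s$ (condition (C2)). This assumption is what makes the combinatorics tractable. The proof then splits on whether $x,y$ are independent or dependent in $T$. In the independent case, each label stores, for every $b'\in I(a)$, the ``last exit'' vertex $\ell_{b'}$ of the replacement path $P_{s,b',\parent(b')}$ together with, for every $c\in I^{\uparrow}(\ell_{b'})$, explicit bits $\ccid(b',G\setminus\{b,c\})$ and $\conn(b',\heavy(b),G\setminus\{b,c\})$, $\conn(b',\heavy(c),G\setminus\{b,c\})$; this is $O(\log n)\times O(\log n)$ entries of $O(\log n)$ bits each. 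In the dependent case, the paper first isolates an ``all-heavy'' configuration (where $u,v,y\in T_{\heavy(x)}$) and handles it via a single carefully chosen vertex $f_{b'}$ on each replacement path and one stored connectivity bit --- condition (C2) is what rules out the bad locations of $x$ relative to $f_{b'}$. The general dependent case is then \emph{reduced} to the all-heavy case using small ``analog sets'' $\AnSet(u,x)$ stored only for light ancestors $x$ of $u$. None of this machinery appears in your outline, and it is precisely what makes $O(\log^3 n)$ bits achievable deterministically.
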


Our dual-failure labeling scheme uses, in a complete black-box manner, single-source labels. For this purpose, we can use the $O(\log^3 n)$-bit labels of Choudhary \cite{Choudhary16}. We also provide an alternative construction that is based on the undirected tools of heavy-path tree decomposition, rather than using the tool of independent trees as in  \cite{Choudhary16}. Our single-source labels provide a somewhat improved length of $O(\log^2 n)$ bits\footnote{We note that it might also be plausible to improve the label size of \cite{Choudhary16} to $O(\log^2 n)$ bits, by reducing the size of their range-minima labels.}, but more importantly convey intuition for our all-pairs constructions.
Since our labels are built upon a single arbitrary spanning tree that can be assumed to have depth $O(D)$, we are hopeful that this approach is also more distributed-friendly. 
Specifically, as the depth of the independent trees using in \cite{Choudhary16} might be linear in $n$, their distributed computation might be too costly for the purpose of dual vertex cut computation.
Moreover, currently the tool of independent trees is limited to only two cut vertices, which also poses a barrier for extending this technique to handle multiple faults. 
In Appendix \ref{sec:single-source-dual}, we show:
\begin{lemma}\label{lem:ss-2vft_labels}
There is a \emph{single-source} 2-VFT connectivity labeling scheme with label length $O(\log^2 n)$ bits.
That is, given an $n$-vertex graph $G$ with a fixed source vertex $s$, one can label the vertices of $G$ such that given query of vertices $\langle t,x,y \rangle$ along with their labels, the connectivity of $s$ and $t$ in $G \setminus \{x,y\}$ can be inferred.
\end{lemma}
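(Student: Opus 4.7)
The plan is to construct the labels on top of the heavy-light decomposition of a spanning tree $T$ of $G$ rooted at $s$, reducing a 2-VFT single-source query to $O(1)$ lookups in small per-heavy-path sketches stored in the queried vertex's label. Fix any spanning tree $T$ of $G$ rooted at $s$ and apply heavy-light decomposition, obtaining a family of heavy paths $\mathcal{P}$. Each root-to-vertex path of $T$ meets at most $O(\log n)$ heavy paths. The label $L(v)$ of each vertex $v$ will then store: (i) $O(\log n)$ bits of ancestor/LCA information together with depth, heavy-path identifier, and position; (ii) an $O(\log n)$-bit $1$-VFT label encoding the chain of articulation points separating $s$ from $v$ in $G$, obtained from $G$'s block-cut tree; and (iii) for each of the $O(\log n)$ heavy paths $P$ that intersect $v$'s ancestor chain in $T$, an $O(\log n)$-bit sketch $\sigma_P(v)$. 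This yields a total label length of $O(\log^2 n)$.

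Given $L(t), L(x), L(y)$, I first use the ancestor labels to determine which of $x,y$ are proper ancestors of $t$. If neither is, the tree path from $s$ to $t$ is intact and the pair is connected. If exactly one of $x,y$ is an ancestor of $t$, I reduce to the $1$-VFT label together with a single sketch lookup. The interesting case is that both $x$ and $y$ are proper ancestors of $t$; WLOG $x$ is a strict ancestor of $y$. Let $P_x$ and $P_y$ be their heavy paths. If $P_x = P_y$, both failures lie on the same heavy path above $t$'s exit from $P_x$, and the sketch $\sigma_{P_x}(t)$ is designed so that, for any such pair of positions on $P_x$, it reports whether their simultaneous removal disconnects $s$ from $t$. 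If $P_x \neq P_y$, so that $P_x$ lies strictly above $P_y$ along $t$'s chain, I first use $\sigma_{P_y}(t)$ to check whether $s$ and $t$ remain connected in $G \setminus \{y\}$, and, assuming they do, then use $\sigma_{P_x}(t)$ to check whether $x$ is still a cut separating them in $G \setminus \{y\}$.

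The sketches $\sigma_P(v)$ will be designed precisely to answer both flavors of questions from $v$'s perspective. For a single proposed failure $z \in P$ on $v$'s ancestor chain, $\sigma_P(v)$ records, via a constant number of $O(\log n)$-bit range pointers along $P$, the ``highest non-tree escape'' from $v$'s subtree that avoids $z$, from which the 1-VFT-style block-cut information can be used to decide connectivity in $G \setminus \{z\}$. For the residual-cut question arising in the main case, the supporting structural claim is that once $y$ is removed and $s,t$ remain connected, the set of ancestors $x$ above $y$ that separate $s$ from $t$ in $G \setminus \{y\}$ forms a chain in the block-cut tree of the $s$–$t$ component of $G \setminus \{y\}$, and as $y$ varies along $P_y$, the resulting ``cut-upgrade'' transitions on any heavy path $P_x$ above $P_y$ admit a one-dimensional monotone description which packs into $O(\log n)$ bits of range pointers on $P_x$.

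The main obstacle is this last structural claim: showing that the residual cut structure on $P_x$, as a function of $y$, is sufficiently monotone to be encoded from $t$'s point of view without storing information about each possible $y$ separately. I expect the proof to proceed by a careful case analysis of dual-failure replacement paths in $T$, keying on whether an escape edge from $v$'s subtree lands inside or outside $P_x$, and on the relative order of escape endpoints along $P_x$. Once established, this monotonicity immediately yields the $O(\log n)$-bit-per-heavy-path bound and hence the total $O(\log^2 n)$-bit label length claimed in Lemma~\ref{lem:ss-2vft_labels}.
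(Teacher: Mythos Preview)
Your plan has a genuine gap, and in fact the step you call easy is where most of the difficulty lies.

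You write: ``If exactly one of $x,y$ is an ancestor of $t$, I reduce to the $1$-VFT label together with a single sketch lookup.'' This does not work. Suppose $x$ is the unique ancestor of $t$ among $\{x,y\}$. The replacement path $P_{s,t,x}$ that your $1$-VFT information encodes may well pass through $y$: $y$ can sit in $T_x$ on a different branch than $t$, or be independent of $x$ altogether, and still lie on the only escape route around $x$. In the paper's proof this ``one ancestor'' situation is the bulk of the work and splits into three nontrivial cases (Down: $y\in T_{x'}$; Side: $y\in T_x\setminus T_{x'}$; Independent: $x,y$ unrelated), each requiring its own carefully chosen auxiliary vertices (e.g.\ $\alpha_{x'},\beta_{x'}$ in Down, $g_{x'}$ in Side, $\ell_{x'}$ and $d_{x,Q_y}$ in Independent) and a multi-step decoding. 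Your sketches $\sigma_P(t)$ are indexed only by heavy paths on $t$'s ancestor chain, so when $y$ is not an ancestor of $t$ they cannot see $y$ at all.

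For the ``both ancestors'' case you isolate the right subproblem, but the monotonicity you posit (``as $y$ varies along $P_y$, the resulting cut-upgrade transitions on any heavy path $P_x$ above $P_y$ admit a one-dimensional monotone description'') is neither proved nor evidently true, and you yourself flag it as the main obstacle. The paper's treatment of this case (the Up case) does not go via any such monotonicity; instead it stores, for each vertex $u$, a handful of extremal vertices $a_u,\,b_{u,Q},\,c_u,\,q_{\heavy(u)}$ defined by replacement-path structure, and decoding is a four-step cascade in which each step either terminates or pins down the next extremal vertex. The $O(\log n)$-per-heavy-path budget is met because only $O(1)$ such vertices are stored per path, not because of monotonicity in $y$. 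Similarly, your same-heavy-path subcase asks $\sigma_{P_x}(t)$ to answer a two-parameter query (both positions on $P_x$) from $O(\log n)$ bits, with no argument given.

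In short, the outline sounds plausible but the actual content---the definitions of $\sigma_P(v)$ and the structural lemma you call ``the main obstacle''---is absent, and the one-ancestor case is dismissed where it in fact dominates the proof.
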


\paragraph{$f$-VFT Connectivity Labels.} Finally, we turn to consider labeling schemes in the presence of multiple vertex faults. By combining the notions of sparse vertex certificates \cite{cheriyan1993scan} with the EFT-labeling scheme of \cite{DoryP21}, in Appendix \ref{sec:sublinear} we show:
\begin{theorem}[$f$-VFT Connectivity Labels]\label{thm:fvft}
There is a $f$-VFT connectivity labeling scheme with label length $\widetilde{O}(n^{1-1/2^{f-2}})$ bits, hence of sublinear length of any $f=o(\log\log n)$. 
\end{theorem}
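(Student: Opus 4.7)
The plan is to reduce vertex failures to edge failures on a sparsified version of $G$ and then invoke the EFT labeling scheme of \cite{DoryP21} as a black box, recursing on $f$ to progressively handle more failures. The key ingredient is a sparse vertex-connectivity certificate \cite{cheriyan1993scan}: for every graph $G$ and every integer $k$, one can compute a spanning subgraph $H \subseteq G$ with $O(kn)$ edges such that for every vertex set $F$ with $|F| \leq k$, the connected components of $G \setminus F$ and $H \setminus F$ coincide. In particular, for a failure set $F$ with $|F| \leq f$, the connectivity of any $u, v \notin F$ in $G \setminus F$ matches the connectivity in $H \setminus E_F$, where $E_F = \bigcup_{w \in F} E_w$ and $E_w$ denotes the set of $H$-edges incident to $w$, since removing a vertex has the same effect on pairwise connectivity among the remaining vertices as removing all its incident edges.

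The labels are constructed recursively on $f$, with base case $f=2$ supplied by \Cref{thm:2vft}, giving $L(2) = O(\log^3 n) = \widetilde{O}(n^{1 - 1/2^{0}})$. For the inductive step at level $f \geq 3$, compute the sparse $f$-certificate $H$ of $G$ and fix a degree threshold $T$ to be tuned below. Let $S = \{ v \in V : \deg_H(v) > T \}$; since $\sum_v \deg_H(v) = O(fn)$, we have $|S| = O(fn/T)$. The $f$-VFT label of each $v \in V$ bundles three ingredients: (i)~the $\widetilde{O}(1)$-bit EFT label of $v$ with respect to $H$; (ii)~if $\deg_H(v) \leq T$, the EFT labels of all edges in $E_v$, accounting for $\widetilde{O}(T)$ bits; (iii)~for every $s \in S$, the recursively-defined $(f-1)$-VFT label of $v$ computed for the subgraph $H \setminus \{s\}$, accounting for $O(|S| \cdot L(f-1))$ bits.

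To decode a query $(u, v, F)$ with $|F| \leq f$, split by cases. If $F \cap S = \emptyset$, every $w \in F$ is low-degree, so ingredient (ii) of $w$'s label exposes the EFT labels of every edge in $E_w$; feeding these at most $fT$ edge labels, together with the EFT labels of $u$ and $v$, into the Dory--Parter EFT decoder determines the connectivity of $u$ and $v$ in $H \setminus E_F$, which equals their connectivity in $G \setminus F$. Otherwise pick any $s \in F \cap S$, extract from ingredient (iii) (indexed by $s$) the recursive $(f-1)$-VFT labels of $u$, $v$, and the vertices of $F \setminus \{s\}$, and invoke the recursive decoder on $H \setminus \{s\}$; correctness follows from $G \setminus F = (G \setminus \{s\}) \setminus (F \setminus \{s\})$ together with the fact that $H \setminus \{s\}$ inherits the $(f-1)$-certificate property from $H$. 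Balancing ingredients (ii) and (iii) by choosing $T = \widetilde{\Theta}(\sqrt{n \cdot L(f-1)})$ yields the recursion $L(f) = \widetilde{O}(\sqrt{n \cdot L(f-1)})$, which, together with $L(2) = \widetilde{O}(1)$, unfolds to $L(f) = \widetilde{O}(n^{1 - 1/2^{f-2}})$. This length is sublinear in $n$ precisely when $2^{f-2} = o(\log n)$, i.e., $f = o(\log\log n)$. The main technical subtlety I expect is verifying that the certificate of \cite{cheriyan1993scan} preserves component structure under \emph{every} vertex-failure set of size $\leq f$ (rather than only preserving the value of the minimum vertex cut); once this ``connectivity-preserving'' version of the certificate is in hand, the rest amounts to recursive bookkeeping and a black-box call to the EFT labels.
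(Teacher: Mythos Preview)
Your proposal is correct and follows essentially the same approach as the paper: sparsify to an $f$-vertex-connectivity certificate with $O(fn)$ edges, split vertices by a degree threshold, store EFT labels of incident edges for low-degree vertices and recursive $(f{-}1)$-VFT labels in $G\setminus\{s\}$ for each high-degree $s$, and decode by case analysis on whether some failed vertex is high-degree. The only cosmetic difference is that the paper fixes the threshold $\Delta=2f\,n^{1-1/2^{f-2}}$ up front rather than deriving it by balancing, but this yields the same recursion $L(f)=\widetilde{O}\bigl(\sqrt{n\,L(f-1)}\bigr)$ and the same final bound.
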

This for example, provides $3$-VFT labels of $\widetilde{O}(\sqrt{n})$ bits.

\subsection{Preliminaries}\label{sec:prelim}
Given a connected $n$-vertex graph $G = (V,E)$, we fix an arbitrary source vertex $s \in V$, and a spanning tree $T$ of $G$ rooted at $s$.
We assume each vertex $a$ is given a unique $O(\log n)$-bit identifier $\ID(a)$.
Let $\parent(a)$ be the parent of $a$ in $T$, $T_a$ be the subtree of $T$ rooted at $a$, and $T_a^+$ be the tree obtained from $T_a$ by connecting $\parent(a)$ to $a$.
The (unique) tree path between two vertices $a,b$ is denoted $T[a,b]$.\footnote{Note that $T[a,b]$ is a path, but $T_a$ is a subtree.}
Let $\depth(a)$ be the hop-distance of vertex $a$ from the root $s$ in $T$, i.e. the number of edges $T[s,a]$. We say that vertex $a$ is \emph{above} or \emph{higher}  (resp., \emph{below} or \emph{lower}) than vertex $b$ if $\depth(a)$ is smaller (resp., larger) than $\depth(b)$.
The vertices $a,b$ are said to be \emph{dependent} if $a$ is an ancestor of $b$ in $T$ or vice-versa, and \emph{independent} otherwise.
We denote by $\LCA (A)$ the lowest/least common ancestor in $T$ of all vertices in $A \subseteq V$.

For two paths $P,Q \subseteq G$, define the concatenation $P \circ Q$ as the path formed by concatenating $Q$ to the end of $P$.
The concatenation is well defined if for the last vertex $p_\ell$ of $P$ and the first vertex $q_f$ of $Q$ it either holds that $p_\ell = q_f$ or that $(p_\ell, q_f) \in E$.
We use the notation $P(a,b]$ for the subpath of $P$ between vertices $a$ and $b$, excluding $a$ and including $b$.
The subpaths $P[a,b)$, $P[a,b]$ and $P(a,b)$ are defined analogously.
We extend this notation for tree paths, e.g. $T(a,b]$ denotes the subpath of $T[a,b]$ obtained by omitting $a$.
When we specify $P$ as an $a$-$b$ path, we usually think of $P$ as directed from $a$ to $b$. E.g., a vertex $c \in P$ is said to be the \emph{first} having a certain property if it is the closest vertex to $a$ among all vertices of $P$ with the property.
A path $P$ \emph{avoids} a subgraph $H \subseteq G$ if they are vertex disjoint, i.e. $V(P) \cap V(H) = \emptyset$.

For a subgraph $G' \subseteq G$, let $\deg(a,G')$ be the degree of vertex $a$ in $G'$.
We denote by $\conn(a,b,G')$ the connectivity status of vertices $a$ and $b$ in $G'$, which is $1$ if $a$ and $b$ are connected in $G'$ and $0$ otherwise.
We give arbitrary unique $O(\log n)$-bit IDs to the connected components of $G'$, e.g. by taking the maximal vertex ID in each component.
We denote by $\ccid(a, G')$ the ID of the connected component containing vertex $a$ in $G'$. 
For a failure (or fault) set $F \subseteq V$, we say that two vertices $a,b$ are \emph{$F$-connected} if $\conn(a,b,G \setminus F) = 1$, and  \emph{$F$-disconnected} otherwise. In the special cases where $F = \{x\}$ or $F = \{x,y\}$ for some $x,y \in V$, we use respectively the terms $x$-connected or $xy$-connected.

\paragraph{Replacement Paths.} For a given (possibly weighted) graph $G$, vertices $a,b \in V$ and a faulty set $F \subseteq V$, the \emph{replacement path} $P_{a,b,F}$ is the shortest $a$-$b$ path in $G \setminus F$. In our context, as we are concerned with connectivity rather than in shortest-path distances, we assign weights to the graph edges for the purpose of computing replacement paths with some convenient structure w.r.t a given spanning tree $T$. Specifically, by assigning weight of $1$ to the $T$-edges, and weight $n$ to non $T$-edges, the resulting replacement paths ``walk on $T$ whenever possible''. Formally, this choice of weights ensures the following property of the replacement paths: 
for any two vertices $c,d \in P_{a,b,F}$ such that $T[c,d] \cap F = \emptyset$, $P_{a,b,F}[c,d] = T[c,d]$. Also note that these replacement paths are shortest w.r.t our weight assignment, but might not be shortest w.r.t their
 number of edges.
We may write $P_{a,b,x}$ when $F = \{ x \}$.

\paragraph{Heavy-Light Tree Decomposition.}
Our labeling schemes use the classic heavy-light tree decomposition technique introduced by Sleator and Tarjan \cite{SleatorT83}.
This is inspired by the work of Baswana and Khanna \cite{khanna2010approximate} applying this technique in the fault-tolerant setting.
The \emph{heavy child} of a non-leaf vertex $a$ in $T$, denoted $\heavy(a)$, is the child $b$ of $a$ that maximizes the number of vertices in its subtree $T_b$ (ties are broken arbitrarily in a consistent manner.).
A vertex is called \emph{heavy} if it is the heavy child of its parent, and \emph{light} otherwise. 
A tree edge in $T$ is called \emph{heavy} if it connects a vertex to its heavy child, and \emph{light} otherwise. The set of heavy edges induces a collection of tree paths, which we call \emph{heavy paths}. 
Let $a,b \in V$ such that $a$ is a strict ancestor of $b$, and let $a'$ be the child of $a$ on $T[a,b]$.
Then $a$ is called a \emph{heavy ancestor} of $b$ if $a'$ is heavy, or a \emph{light ancestor} of $b$ if $a'$ is light. Note that a heavy ancestor of $b$ need not be a heavy vertex itself, and similarly for light ancestors.
We observe that if $b$ is a light child of $a$, then $T_b$ contains at most half of the vertices in $T_a$. Consequently, we have:
\begin{observation}\label{obs:heavy-light}
	Any root-to-leaf path in $T$ contains only $O(\log n)$ light vertices and edges.
\end{observation}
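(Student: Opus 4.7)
The plan is to leverage directly the halving property highlighted in the sentence immediately preceding the observation: if $b$ is a light child of $a$, then $|T_b| \leq |T_a|/2$. This halving itself comes from the fact that $\heavy(a)$ is defined to maximize subtree size among the children of $a$, so $|T_a| \geq 1 + |T_{\heavy(a)}| + |T_b| \geq 1 + 2|T_b|$ for any light child $b$ of $a$. Given this, the bound on the number of light edges of a root-to-leaf path essentially writes itself.

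Concretely, I would fix an arbitrary root-to-leaf path $P = (s = v_0, v_1, \ldots, v_k)$ in $T$, let $L$ denote the number of indices $i \in \{1, \ldots, k\}$ for which the edge $(v_{i-1}, v_i)$ is light, and then chain subtree-size inequalities along $P$. Each step only shrinks the subtree (since $T_{v_i} \subseteq T_{v_{i-1}}$), and each light step halves it. This yields
\[
1 \;\leq\; |T_{v_k}| \;\leq\; |T_{v_0}| / 2^L \;=\; n/2^L,
\]
and therefore $L \leq \log_2 n$. The bound on the number of \emph{light vertices} on $P$ then follows at no extra cost, because every light vertex on $P$ other than possibly $s$ itself is the lower endpoint of a light edge on $P$; this contributes at most $L + 1 = O(\log n)$ light vertices.

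There is really no obstacle to overcome: this is a classical one-line consequence of the way heavy children are chosen in the Sleator--Tarjan decomposition. The only conceptual point worth isolating is the trivial reduction from counting light vertices to counting light edges, which is needed to cover both halves of the observation's statement.
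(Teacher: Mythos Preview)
Your proposal is correct and follows exactly the approach the paper indicates: it explicitly carries out the halving argument that the paper only hints at in the sentence preceding the observation, and the paper offers no further proof beyond the word ``Consequently.'' Your additional remark reducing the count of light vertices to light edges is a clean way to cover both parts of the statement, which the paper leaves implicit.
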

Our labeling schemes are based on identifying for each vertex $a$ a small number of $\emph{interesting}$ vertices, selected based on the heavy-light decomposition.
\begin{definition}\label{def:intersets}
	The \emph{interesting set} of a vertex $a$ is defined to be $I(a) = \{b \in T[s,a] \mid b \text{ is light}\} \cup \{ \heavy(a) \}$ (where $\{\heavy(a)\}$ is interpreted as the empty set if $a$ is a leaf).
	That is, $I(a)$ consists of all the light vertices on $T[s,a]$, along with the heavy child of $a$ (if it exists).
	The \emph{upper-interesting set} of $a$ is defined to be $I^{\uparrow} (a) = \{\parent(b) \mid b \in I(a) \} \cup \{a\}$.
	That is, $I^{\uparrow} (a)$ consists of all the light ancestors of $a$ and $a$ itself.
\end{definition}
We make extensive use of the following useful properties of interesting sets, which are immediate to prove.
\begin{lemma}\label{lem:instersting_sets}
	For any $a\ \in V$, $|I(a)| = O(\log n)$ and $|I^{\uparrow}(a)| = O(\log n)$.
\end{lemma}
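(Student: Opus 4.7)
The plan is to derive both bounds as direct corollaries of Observation~\ref{obs:heavy-light}, which caps the number of light vertices on any root-to-leaf path by $O(\log n)$. The only subtlety is recognizing that the $T[s,a]$ portion of the definition of $I(a)$ lies on such a root-to-leaf path (extend $T[s,a]$ arbitrarily down to a leaf if $a$ is not itself one), so Observation~\ref{obs:heavy-light} applies verbatim.

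First I would bound $|I(a)|$. By Observation~\ref{obs:heavy-light}, the number of light vertices on $T[s,a]$ is $O(\log n)$. Since $I(a)$ adds at most one further vertex, namely $\heavy(a)$, this yields $|I(a)| \le O(\log n) + 1 = O(\log n)$.

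For $|I^{\uparrow}(a)|$, the quickest route is the trivial inequality $|I^{\uparrow}(a)| \le |I(a)| + 1$, which follows immediately from the definition $I^{\uparrow}(a) = \{\parent(b) : b \in I(a)\} \cup \{a\}$ and the fact that the parent map cannot increase cardinality. Combined with the previous bound this gives $|I^{\uparrow}(a)| = O(\log n)$. If one prefers the more informative characterization stated informally in Definition~\ref{def:intersets}, one can check that $I^{\uparrow}(a) = \{c : c \text{ is a light ancestor of } a\} \cup \{a\}$: every light vertex $b$ on $T[s,a]$ contributes its parent, which is by definition a light ancestor of $a$; conversely, every light ancestor $c$ of $a$ arises in this way by taking $b$ to be the child of $c$ on $T[c,a]$; and $\heavy(a) \in I(a)$ contributes $a$ itself. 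Applying Observation~\ref{obs:heavy-light} once more to bound the number of light ancestors of $a$ then reproduces the $O(\log n)$ bound.

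There is really no obstacle here, which is why the authors call the properties ``immediate to prove''; the only thing one has to be mildly careful about is not double-counting $\parent(\heavy(a)) = a$ when unifying with the $\{a\}$ term, and not conflating ``light ancestor'' (defined via the child on the downward path) with ``light vertex'' (which is what Observation~\ref{obs:heavy-light} counts) -- but both uses of Observation~\ref{obs:heavy-light} are legitimate since a vertex $c$ is a light ancestor of $a$ exactly when the child of $c$ on $T[c,a]$ is light, establishing a bijection between light ancestors of $a$ and light vertices on $T(s,a]$.
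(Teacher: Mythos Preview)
Your proposal is correct and matches the paper's approach: the paper simply asserts that these properties are ``immediate to prove,'' and your derivation from Observation~\ref{obs:heavy-light} is exactly the intended immediate argument. Your extra care about the bijection between light ancestors of $a$ and light vertices on $T(s,a]$ is fine but more than the paper bothers to spell out.
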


\begin{lemma}\label{lem:child_on_path}
	Let $a,b \in V$ such that $a \in T[s,b]$.
	
	(1) If $a \neq b$, then for the child $a'$ of $a$ on $T[a,b]$ it holds that $a' \in I(a) \cup I(b)$.
	
	(2) If $a \notin I^{\uparrow} (b)$, then $a \neq b$ and the child of $a$ on $T[a,b]$ is $\heavy(a)$.	
\end{lemma}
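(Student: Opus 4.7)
The plan is to unfold the definitions and run a short case analysis on the child $a'$ of $a$ on the tree path $T[a,b]$. Throughout, the hypothesis $a \in T[s,b]$ says that $a$ is an ancestor of $b$ (possibly equal to $b$), so as soon as we know $a \neq b$ the child $a'$ of $a$ on $T[a,b]$ is well defined and lies on $T[s,b]$.

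For part~(1), I would split on whether $a'$ is heavy or light at $a$. If $a' = \heavy(a)$, then $a' \in I(a)$ directly from the definition $I(a) = \{\text{light vertices on }T[s,a]\} \cup \{\heavy(a)\}$. Otherwise $a'$ is a light child of $a$, so $a'$ is itself a light vertex; since $T[a,b] \subseteq T[s,b]$, we have $a' \in T[s,b]$ and hence $a' \in I(b)$. The only boundary case that needs a moment's thought is $a' = b$, which is harmless: $b$ always lies on $T[s,b]$, so if $b$ happens to be light then $b \in I(b)$ by the very same clause.

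For part~(2), the first observation is that $b \in I^{\uparrow}(b)$ is built into the definition $I^{\uparrow}(b) = \{\parent(c) \mid c \in I(b)\} \cup \{b\}$, so the assumption $a \notin I^{\uparrow}(b)$ forces $a \neq b$, giving us the child $a'$ of $a$ on $T[a,b]$. Then I would argue by contradiction: suppose $a' \neq \heavy(a)$. Then $a'$ is a light child of $a$, so by the same reasoning as in part~(1), $a'$ is a light vertex on $T[s,b]$, putting $a' \in I(b)$. Consequently $\parent(a') = a$ lies in $\{\parent(c) \mid c \in I(b)\} \subseteq I^{\uparrow}(b)$, contradicting the hypothesis. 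Hence $a' = \heavy(a)$.

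Both items are essentially bookkeeping from the definitions, so I do not anticipate a real obstacle; the only subtleties are the boundary case $a' = b$ in part~(1) and the explicit inclusion of $b$ in $I^{\uparrow}(b)$ that powers the $a \neq b$ conclusion in part~(2).
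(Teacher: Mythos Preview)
Your proposal is correct. The paper does not spell out a proof of this lemma, merely remarking that it is ``immediate to prove,'' and your straightforward case analysis on whether the child $a'$ equals $\heavy(a)$ or is light is exactly the intended definitional unfolding.
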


\paragraph{Extended Vertex IDs.}
To avoid cumbersome definitions in our labels, it is convenient to augment the vertex IDs with additional $O(\log n)$ bits of information, resulting in \emph{extended IDs}.
The main ingredient is \emph{ancestry labels} \cite{kannan1992implicit}: these are $O(\log n)$-bit labels $\LCALabel_T (a)$ for each vertex $a$, such that given $\LCALabel_T (a)$ and $\LCALabel_T (b)$ one can infer whether $a$ is an ancestor of $b$ in $T$.
The extended ID of a vertex $a$ is\footnote{If $a$ is a leaf, we simply omit from $\EID(a)$ the information regarding $\heavy(a)$.}
$
\EID(a) = \left[ \ID(a), \LCALabel_{T}(a), \ID(\heavy(a)), \LCALabel_{T}(\heavy(a)) \right]
$.
Thus, given $\EID(a)$ and $\EID(b)$, one can determine whether $a$ is an ancestor of $b$ in $T$, and moreover, whether it is a light or a heavy ancestor.
We will not explicitly refer to the extended IDs, but rather use them as follows:
\begin{itemize}
	\item The label of any vertex $a$ always (implicitly) stores $\EID(a)$.
	\item Whenever a label stores a given vertex $a$, it additionally stores $\EID(a)$.
\end{itemize}
This enables us to assume throughout that we can always determine the (heavy or light) ancestry relations of the vertices at play.

\section{Single Failure Connectivity Labels}\label{sec:single-fault}

In this section we warm-up by considering the single failure case of Theorem \ref{thm:1vft}.
\begin{algorithm}[!h]
	\caption{Construction of label $\FTLabel{1F} (a)$ for vertex $a$}\label{alg:1f-label}
	\For{each $b' \in I(a)$ with $\parent(b')=b$}{
		\textbf{store} vertices $b,b'$ and the values $\conn (s, b', \Gsub{b})$, $\ccid(b', \Gsub{b})$\;
	}
\end{algorithm}

\noindent The key observation for decoding is:
\begin{claim}\label{claim:1vft}
	Given $\FTLabel{1F} (w)$ and  $\FTLabel{1F} (x)$, one can determine the $x$-connectivity of $w$ and $s$, and also find $\ccid(w, \Gsub{x})$ in case $w,s$ are $x$-disconnected.
\end{claim}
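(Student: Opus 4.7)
The plan is to locate the child of $x$ on the tree path $T[x,w]$ and then to read off the connectivity information that was stored for this child in one of the two given labels. First, using the ancestry bits carried inside the extended IDs, I would test whether $x$ is an ancestor of $w$ in $T$. If it is not, then the path $T[s,w]$ avoids $x$, so $w$ and $s$ are trivially $x$-connected (and the $\ccid$ part of the query is moot). From this point I assume the nontrivial case in which $x$ is a proper ancestor of $w$, and denote by $x'$ the child of $x$ that lies on $T[x,w]$, so that in particular $w \in T_{x'}$.

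Next I would extract $x'$ from the two labels. By Lemma \ref{lem:child_on_path}(1) we have $x' \in I(x) \cup I(w)$, and the relevant dichotomy is whether $x \in I^{\uparrow}(w)$, which can be decided from the extended IDs alone. In the first subcase, $x \notin I^{\uparrow}(w)$, Lemma \ref{lem:child_on_path}(2) gives $x' = \heavy(x)$; this vertex is already recorded in $\EID(x)$, and Algorithm \ref{alg:1f-label} applied with $a = x$ and $b' = \heavy(x) \in I(x)$ has saved $\conn(s, \heavy(x), \Gsub{x})$ and $\ccid(\heavy(x), \Gsub{x})$ into $\FTLabel{1F}(x)$. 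In the complementary subcase, $x \in I^{\uparrow}(w)$ with $x \neq w$, so $x = \parent(b')$ for a light child $b'$ of $x$ sitting on $T[s,w]$; this $b'$ belongs to $I(w)$ and coincides with $x'$, and the corresponding iteration of Algorithm \ref{alg:1f-label} with $a = w$ has recorded $\conn(s, x', \Gsub{x})$ and $\ccid(x', \Gsub{x})$ inside $\FTLabel{1F}(w)$. In both cases the correct entry is singled out by matching the stored pair $(b, b')$ against the vertex $x$ given via $\EID(x)$.

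Finally, I would close the argument by noting that $T_{x'}$ is entirely contained in $G \setminus \{x\}$, since $x \notin T_{x'}$; hence the tree path $T[x', w] \subseteq T_{x'}$ witnesses that $w$ and $x'$ lie in the same component of $\Gsub{x}$. It follows that $\conn(s, w, \Gsub{x}) = \conn(s, x', \Gsub{x})$ and, in the disconnected case, $\ccid(w, \Gsub{x}) = \ccid(x', \Gsub{x})$, both of which were retrieved in the previous step. The main obstacle I anticipate is ensuring that one of the two labels must contain an entry for $x'$; this is exactly what the heavy-light decomposition guarantees, since any proper descendant of $x$ lying on $T[s, w]$ either is $\heavy(x)$ (recorded in $\FTLabel{1F}(x)$) or sits below a light child of $x$ which then necessarily belongs to $I(w)$ (and so is recorded in $\FTLabel{1F}(w)$), and the two situations are distinguishable purely from the ancestry bits stored in the extended IDs.
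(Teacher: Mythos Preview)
Your proof is correct and follows essentially the same approach as the paper: reduce to the child $x'$ of $x$ on $T[x,w]$ via the fault-free path $T[x',w]$, and then invoke Lemma~\ref{lem:child_on_path}(1) to conclude that $x' \in I(w) \cup I(x)$ so that the needed $\conn$ and $\ccid$ values are present in one of the two labels. Your version is slightly more explicit in that you determine \emph{which} of the two labels holds the entry (via the test $x \in I^{\uparrow}(w)$), whereas the paper simply observes that one can search both labels for the stored pair $(b,b')$ with $b=x$; both are valid.
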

\begin{proof}
	If $w$ is not a descendant\footnote{This is checked using extended IDs $\EID(w)$ and $\EID(x)$.} of $x$, then $T[s,w]$ is failure-free, so $w$ and $s$ are $x$-connected and we are done.
	Assume now that $w$ is a descendant of $x$, and let $x'$ be the child of $x$ on $T[x,w]$.
	Then $T[x',w]$ is failure-free, hence $x'$ and $w$ are $x$-connected.
	Therefore, it suffices to determine the values $\conn(s,x',\Gsub{x})$ and $\ccid(x', \Gsub{x})$.
	Lemma \ref{lem:child_on_path}(1) guarantees that $x' \in I(w) \cup I(x)$, hence the required values are stored either in $\FTLabel{1F} (w)$ or in $\FTLabel{1F} (x)$ (by setting $b=x$ and $b'=x'$).
\end{proof}
Given $\FTLabel{1F} (u)$, $\FTLabel{1F}  (v)$ and $\FTLabel{1F}  (x)$, we determine the $x$-connectivity of $u,v$ as follows.
We apply Claim \ref{claim:1vft} twice, with $w=u$ and with $w=v$.
If we find the component IDs of both $u$ and $v$ in $\Gsub{x}$, 
we compare them and answer accordingly. 
However, if this is not the case, then we must discover that one of $u,v$ is $x$-connected to $s$, so we should answer affirmatively iff the other is $x$-connected to $s$. This completes the decoding algorithm of Theorem \ref{thm:1vft}. The preprocessing time analysis is deferred Appendix \ref{sec:missing-proofs}.

\def\APPENDTIMEONEF{
\paragraph{Preprocessing Time for $1$-VFT Labels (Theorem \ref{thm:1vft}).}
We start by computing a $2$-vertex connectivity certificates $G' \subseteq G$ such that $|E(G')|=O(n)$, which can be done in $O(m)$ time by \cite{cheriyan1993scan}. We then apply the labeling algorithm on the graph $G'$. It is easy to see that each label $\FTLabel{1F}  (v)$ can be computed by applying $\widetilde{O}(1)$ connectivity algorithms, hence taking $\widetilde{O}(n)$ time. This completes the deterministic construction of Theorem \ref{thm:1vft}. The linear-time randomized computation, in fact, follows by our distributed computation of these labels which are based on \cite{DistCut2022}, is deferred to Appendix \ref{sec:distributed-comp}. 
}

\section{Dual Failure Connectivity Labels}\label{sec:dual-fault}

\subsection{Technical Overview}\label{sec:techniques}
In the following we provide high-level intuition for our main technical contribution of dual failure connectivity labels. Throughout, the query is given by the tuple $\langle u,v, x,y\rangle$, where $x,y$ are the vertex faults. Recall that our construction is based on some underlying spanning tree $T$ rooted at some vertex $s$ (that we treat as the \emph{source}). Similarly to the $1$-VFT construction, we compute the heavy-light tree decomposition of $T$, which classifies the tree edges into heavy and light. 


We distinguish between two structural cases depending on the locations of the two faults, $x$ and $y$. The first case which we call \emph{dependent} handles the setting where $x$ and $y$ have ancestry/descendant relations. The second \emph{independent} case assumes that $x$ and $y$ are not dependent, i.e., $\LCA(x,y)\notin \{x,y\}$. 

Our starting observation is that by using single-source $2$-VFT labels in a black-box manner, we may restrict our attention to the hard case where the source $s$ is $xy$-disconnected from both $u$ and $v$. 
Quite surprisingly, this assumption yields meaningful restrictions on the structure of key configurations, as will be demonstrated shortly.

\paragraph{Dependent Failures.}
To gain intuition, we delve into two extremes: the easy \emph{all-light} case where $u,v,y$ are all \emph{light} descendants of $x$, and the difficult \emph{all-heavy} case where they are all \emph{heavy} descendants of $x$.
Consider first the easy all-light case.
As every vertex has only $O(\log n)$ light ancestors, each of $u,v,y$ has the budget to prepare by storing its $1$-VFT label w.r.t the graph $\Gsub{x}$. Then, for decoding, we simply answer the single-failure query $\langle u,v,y \rangle$ in $\Gsub{x}$.

We turn to consider the all-heavy case, which turns out to be an important core configuration.
Here, we no longer have the budget to prepare for each possible failing $x$, and a more careful inspection is required.
The interesting case is when $y \in T[u,v]$.
We further focus in this overview on the following instructive situation: $y$ is not an ancestor of $v$, but is a \emph{heavy} ancestor of $u$.
It is then sufficient to determine the $xy$-connectivity of $\heavy(y)$ and $\parent(y)$. See Figure \ref{fig:tech_overview} (left).
\begin{figure}
	\centering
	\includegraphics[scale=0.5]{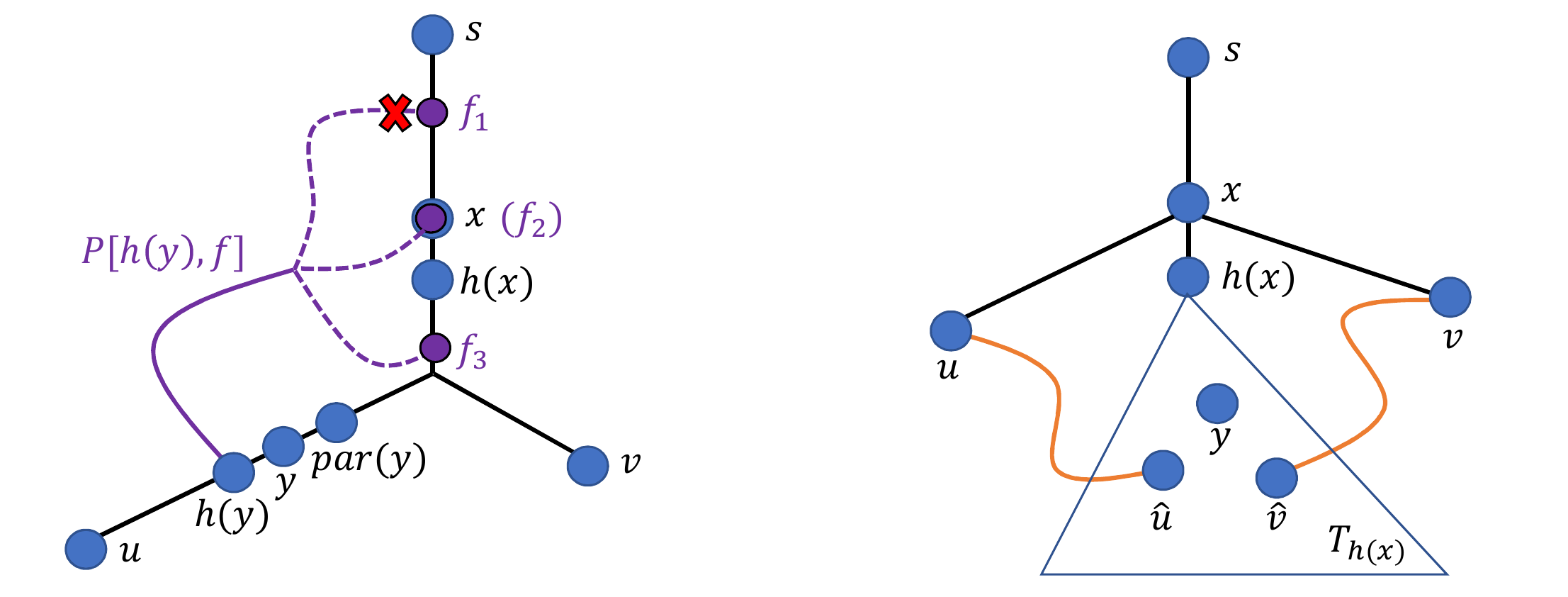}
	\caption{
		Left: Illustration of the all-heavy configuration. 
		Letting $P = P_{\heavy(y),\parent(y),y}$, the purple path represents the prefix $P[\heavy(y),f]$ of $P$ until the first time it hits $T[s,\parent(y)]$.
		Vertices $f_1, f_2, f_3$ correspond to different options for the location of $f$: above $x$, equals $x$, or below $x$. The $f_1$ option is marked X as it is excluded by our analysis. 
		Right: Illustration of the reduction to the all-heavy case. The analog vertices $\widehat{u}, \widehat{v}$ are chosen from $\AnSet(u,x), \AnSet(v,x)$, respectively.
	}
	\label{fig:tech_overview}
\end{figure}
Naturally, $y$ is most suited to prepare in advance for this situation, as follows.
Let $P = P_{\heavy(y), \parent(y), y}$  be the $\heavy(y)$-$\parent(y)$ replacement path avoiding $y$, and let $f \in P$ be the \emph{first} vertex (i.e., closest to $\heavy(y)$) from $T[s, \parent(y)]$.
Surprisingly, it suffices for the labeling algorithm to include in the label of $y$ the identity of $f$, along with a single bit representing the connectivity of $\heavy(y)$ and $\parent(y)$ in $G \setminus \{f,y\}$.

This limited amount of information turns out to be sufficient thanks to the useful structures of the replacement paths. Since $x$ is an ancestor of $y$, we need to consider the possible locations of $x$ within $T[s, \parent(y)]$.
The key observation is that $x$ cannot lie below $f$, i.e. in $T(f,\parent(u)]$: otherwise, $T[u, \heavy(y)] \circ P[\heavy(y), f] \circ T[f, s]$ is a $u$-$s$ path avoiding $x,y$, which we assume does not exist!
Now, if $x$ is above $f$, i.e. in $T[s,f)$, then $P$ is fault-free, so we determine that $\heavy(y),\parent(y)$ are $xy$-connected. If $x = f$, we simply have the answer stored explicitly by the label of $y$.
The complete solution for the all-heavy case is of a similar flavor, albeit somewhat more involved.

We then handle the general dependent failures case by reducing to the all-heavy configuration, which we next describe in broad strokes.
First, the case where $y$ is a light descendant of $x$ is handled directly using $1$-VFT labels, in a similar manner to the all-light case.
In the remaining case where $y \in T_{\heavy(x)}$, a challenge arises when (at least) one of $u,v$, say $u$, is a light descendant of $x$. We exploit the fact that the label of $u$ has the \emph{budget} to prepare for light ancestors, and store in this label a small and carefully chosen set of vertices from $T_{\heavy(x)}$, called the \emph{analog set} $\AnSet(u,x)$. Our decoding algorithm in this case replaces the given $\langle u, v, x, y \rangle$ query with an \emph{analogous} all-heavy query $\langle \widehat{u}, \widehat{v}, x, y \rangle$ for some $\widehat{u} \in \AnSet(u,x)$ and $\widehat{v} \in \AnSet(v,x)$. See Figure \ref{fig:tech_overview} (right). The reduction's correctness is guaranteed by the definition of analog sets.

\paragraph{Independent Failures.}
Our intuition comes from our solution to the  \emph{single-source} independent-failures case, described in Appendix \ref{sec:single-source-dual-ind}.
As we assume that both $u,v$ are $xy$-disconnected from $s$, we know that the corresponding decoding algorithm \emph{rejects} both queries $\langle u, x, y \rangle$ and  $\langle v, x, y \rangle$.
Rejection instances can be of two types: \emph{explicit reject} or \emph{implicit reject}.

If $\langle u,x,y \rangle$ is an explicit reject instance, then the algorithm rejects by tracking down an explicit bit stored in one of the labels of $u,x,y$, and returning it. This bit is of the form $\conn(s, \widetilde{u}, \Gsub{x,y})$ for some vertex $\widetilde{u}$ which is $xy$-connected to $u$.
So, in explicit reject instances, one of the vertices $u,x,y$  has \emph{prepared in advance} by storing this bit.
In contrast, if it is an implicit reject instance, then the algorithm detects \emph{at query time} that $\langle u,x,y \rangle$ match a specific, highly-structured fatal configuration leading to rejection.
Specifically, this configuration implies that $u$ is $xy$-connected to both $\heavy(x)$ and $\heavy(y)$.
Thus, when reaching implicit rejection, we can infer useful structural information.

Getting back to our original query $\langle u, v, x, y \rangle$, the idea is to handle all of the four possible combinations of implicit or explicit rejects for $\langle u, x, y \rangle$ or $ \langle v, x, y \rangle$.
Our approach is then based on augmenting the single-source $2$-VFT labels in order to provide the decoding algorithm with a richer information in the explicit rejection cases.

The presented formal solution distills the relevant properties of the corresponding (augmented) single-source labels and decoding algorithm. This approach has the advantage of having a succinct, clear and stand-alone presentation which does not require any prior knowledge of our single-source solution, but might hide some of  the aforementioned intuition.

\paragraph{Setting Up the Basic Assumptions.} 
We now precisely describe the basic assumptions that we enforce as preliminary step.
These are:
\begin{enumerate}[(C1)]
	\item $u$ and $v$ are both $x$-connected and $y$-connected.
	\item Both $u$ and $v$ are $xy$-disconnected from the source $s$.
\end{enumerate}
To verify condition (C1), we augment the $2$-VFT label of each vertex with its $1$-VFT label from Theorem \ref{thm:1vft}. If (C1) is not satisfied, then clearly $u,v$ are $xy$-disconnected, so we are done.
For (C2), we further augment the labels with the corresponding single-source $2$-VFT labels of \cite{Choudhary16} (or alternatively, our own such labels of Lemma \ref{lem:ss-2vft_labels}).
Using them we check whether $u$ and $v$ are $xy$-connected to $s$.
If both answers are affirmative, then $u,v$ are $xy$-connected.
If the answers are different, then $u,v$ are $xy$-disconnected.
Hence, the only non-trivial situation is when (C2) holds.
As the $1$-VFT and single-source $2$-VFT labels consume only $O(\log^3 n)$ bits each, the above mentioned augmentations are within our budget.

The following sections present our $2$-VFT connectivity labeling scheme in detail: Section \ref{sec:dual-ind} handles the independent-failures case, and Section \ref{sec:dual-dep} considers the dependent-failure case.
The final label is obtained by adding the sublables provided in each of these sections.
We note that by using the extended IDs of the vertices, it is easy for the decoding algorithm to detect which of the cases fits the given $\langle u, v, x, y \rangle$ query.

\subsection{Two Failures are Independent}\label{sec:dual-ind}

The independence of the failures allows us to enforce a stronger version of condition (C1):
\begin{enumerate}[(C3)]
	\item $u$, $v$ and $s$ are all $x$-connected and $y$-connected.
\end{enumerate}
Condition (C3) is verified using the 1-VFT labels of $u,v,x,y$ and $s$.
As the label of $s$ is not given to us, we just augment the label of every vertex also with the $1$-VFT label of $s$.
If (C3) fails, we are done by the following claim.
Missing proofs are deferred to Appendix \ref{sec:missing-proofs}.

\begin{claim}\label{claim:c3}
	If condition (C3) does not hold, then $u$ and $v$ are $xy$-connected.
\end{claim}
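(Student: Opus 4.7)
The plan is to exploit condition (C1) to reduce the problem to exhibiting \emph{some} $u$-$v$ path in $G\setminus\{x,y\}$, and then use the failure of (C3) together with the independence of $x$ and $y$ to pin down where that path must live.

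First I would observe that since (C1) holds, we have $u\sim_x v$ and $u\sim_y v$, where $\sim_F$ denotes $F$-connectivity. If (C3) fails, the obstruction must therefore involve $s$: either $s$ is $x$-disconnected from $u$ (and hence from $v$), or $s$ is $y$-disconnected from $u$ (and hence from $v$). By the symmetry of the roles of $x$ and $y$, it is enough to handle the first case, so assume $s\not\sim_x u$. This assumption forces $x$ to lie on the tree path $T[s,u]$, because otherwise $T[s,u]$ would be a fault-free path in $G\setminus\{x\}$ connecting $s$ to $u$. Hence $x$ is a proper ancestor of $u$ in $T$, and by the same reasoning (using $u\sim_x v$, which implies $s\not\sim_x v$) also a proper ancestor of $v$.

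The crucial structural step is the following containment claim. Let $C$ be the connected component of $u$ (equivalently, of $v$) in $G\setminus\{x\}$. I claim $C\subseteq T_x\setminus\{x\}$. Indeed, suppose some $w\in C$ lies outside $T_x$. Then $w$ is not a descendant of $x$, so $x\notin T[s,w]$, and consequently $T[s,w]$ is a $w$-$s$ path in $G\setminus\{x\}$. This would put $s$ in the same component as $w$, and thus as $u$, contradicting $s\not\sim_x u$.

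With this containment in hand, the conclusion is immediate. Since $x$ and $y$ are independent, $y$ is not a descendant of $x$, so $y\notin T_x$, and in particular $y\notin C$. Any $u$-$v$ path inside $C$ (which exists by definition of $C$) therefore avoids both $x$ and $y$, proving $u\sim_{xy}v$. The main (and essentially only) nontrivial step is the structural claim $C\subseteq T_x\setminus\{x\}$; everything else follows directly from the definitions, the symmetry between $x$ and $y$, and the independence hypothesis.
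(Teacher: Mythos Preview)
Your proof is correct and takes essentially the same approach as the paper. The paper fixes a $u$-$v$ path $P$ avoiding $x$ and argues directly that $y\notin P$ (else $P[u,y]\circ T[y,s]$ would $x$-connect $u$ to $s$); your version packages the same idea as the structural containment $C\subseteq T_x\setminus\{x\}$ and then observes $y\notin T_x$ by independence, which is a slightly stronger intermediate statement but rests on the identical tree-path argument.
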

\def\CTHREEPROOF{
\begin{proof}
	Assume (C3) does not hold. By (C1), this can happen only if one of $u,v$, say $u$, is disconnected from $s$ under one of the failures, say $x$. Namely, $u,s$ are $x$-disconnected.
	Now, (C1) also ensures that there is a $u$-$v$ path $P$ avoiding $x$. 
	We assert that $P$ also avoids $y$, which completes the proof.
	Assume otherwise, and consider the $u$-$s$ path $P' = P[u,y] \circ T[y,s]$.
	By the independence of $x,y$ we have that $x \notin P'$, which contradicts the fact that $u,s$ are $x$-disconnected. 
\end{proof}
}\CTHREEPROOF

\noindent Our general strategy is to design labels $\FTLabel{P} (a)$ for each vertex $a$ that have following property:
\begin{enumerate}[(P)]
\item For any $\langle u,v,x,y \rangle$ with independent failures\footnote{Which satisfy conditions (C1), (C2) and (C3).} $x,y$, there exists\footnote{At least one of $h(x),h(y)$ exists: else, $x,y$ are leaves, so  $T \setminus \{x, y\}$ spans $\Gsub{x,y}$, contradicting (C2).} $z \in \{\heavy(x), \heavy(y)\}$ such that given the label $\FTLabel{P} (w)$ of any $w \in \{ u, v \}$ and the labels $\FTLabel{P}(x), \FTLabel{P}(y)$, one can infer the $xy$-connectivity of $w,z$, and also find $\ccid(w, \Gsub{x,y})$ in case $w,z$ are $xy$-disconnected.
\end{enumerate}
This suffices to determine the $xy$-connectivity of $u,v$ by the following lemma:

\begin{lemma}\label{lem:property-P}
	Given the $\FTLabel{P}$ labels of $u,v,x,y$, one can determine the $xy$-connectivity of $u,v$.
\end{lemma}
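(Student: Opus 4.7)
The plan is to show that property (P) immediately yields the $xy$-connectivity of $u$ and $v$ via a short case analysis pivoted on a common witness $z \in \{\heavy(x), \heavy(y)\}$. Concretely, I first invoke the decoder guaranteed by (P) twice, once with $w = u$ and once with $w = v$, against the \emph{same} $z$; this yields, for each $w \in \{u,v\}$, the bit recording whether $w$ is $xy$-connected to $z$ and, in case of disconnection, the value $\ccid(w, \Gsub{x,y})$.

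Next I distinguish three cases. If both $u$ and $v$ are $xy$-connected to $z$, transitivity places them in the same component of $\Gsub{x,y}$, so they are $xy$-connected. If exactly one of them is $xy$-connected to $z$ while the other is not, the two lie in distinct components and hence are $xy$-disconnected. Finally, if neither is $xy$-connected to $z$, the $\FTLabel{P}$ labels already expose $\ccid(u, \Gsub{x,y})$ and $\ccid(v, \Gsub{x,y})$, which we compare directly to resolve the query.

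The only subtlety I foresee is that (P) asserts only the \emph{existence} of a valid $z$ and does not a priori fix which of $\heavy(x), \heavy(y)$ it is. I expect to handle this at decoding time by trying both candidates: at least one of them is valid by (P), and the label construction (presented after this lemma) should make the correct candidate identifiable --- either through an explicit flag stored in $\FTLabel{P}(x)$ and $\FTLabel{P}(y)$, or because an incorrect choice of $z$ is detectable by inconsistency of the returned connectivity bits and component IDs with the surrounding structural information (e.g.\ whether the purported $\ccid$ values are consistent with the 1-VFT data already augmented into the labels by conditions (C1)--(C3)). The main technical obstacle, however, lies not in this case analysis but in actually establishing property (P) itself --- designing $\FTLabel{P}$ of length $O(\log^3 n)$ that genuinely encodes the required bit and component ID relative to some $z \in \{\heavy(x), \heavy(y)\}$ --- which is what the remainder of Section \ref{sec:dual-ind} will be devoted to.
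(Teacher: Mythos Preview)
Your three-case analysis is exactly the paper's argument, just phrased slightly differently: the paper says ``if we find both component IDs, compare them; otherwise one of $u,v$ is $xy$-connected to $z$, so answer affirmatively iff the other is too.'' This is equivalent to your split into both/one/neither connected to $z$.

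Your identification of the subtlety about which $z$ to use is legitimate, but your proposed resolutions (an explicit flag, or inconsistency detection via (C1)--(C3)) are not what happens and would be hard to make work as stated. The paper dispenses with the issue immediately after the lemma by strengthening the promise: the labeling scheme in fact guarantees property (P) for \emph{both} choices $z=\heavy(x)$ and $z=\heavy(y)$ (whenever they exist). With that in hand, the decoder simply fixes either candidate and runs your case analysis; no flag or consistency check is needed. So the short proof of the lemma goes through as you wrote it, once (P) is read in this stronger ``for all $z$'' form rather than the literal ``there exists $z$'' form.
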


\begin{proof}
	We apply property (P) twice, for $w=u$ and for $w=v$.
	If we find the component IDs of both $u$ and $v$ in $\Gsub{x,y}$ we just compare them and answer accordingly.
	However, if this does not happen, then we must discover that one of $u, v$ is $xy$-connected to $z$, so we should answer affirmatively iff the other is also $xy$-connected to $z$.
\end{proof}

In order to preserve the symmetry between the independent failures $x$ and $y$ (which we prefer to break in a more favorable manner in our subsequent technical arguments), we do not explicitly specify, at this point, the identity of $z \in \{\heavy(x),\heavy(y)\}$. It may be useful for the reader to think of $z$ as chosen \emph{adversarially} from $\{\heavy(x), \heavy(y)\}$, and our decoding algorithm handles each of the two possible selections. Alternatively, this can be put as follows: our labeling scheme will guarantee property (P) for $z = \heavy(x)$ \emph{and} for $z = \heavy(y)$ (if both heavy children exist).

\paragraph{Construction of $\FTLabel{P}$ Labels.}
We start with a useful property of single-fault replacement paths.
For a vertex $a \in V$ with an $s$-$a$ replacement path $P = P_{s,a,\parent(a)}$, let $\ell_{a} \in P$ be the last (closest to $a$) vertex in $T \setminus T_{\parent(a)}$.
\begin{observation}\label{obs:the_ell_vertex}
	$P_{s,a,\parent(a)} = T[s, \ell_a] \circ Q$ where $Q \subseteq T_{\parent(a)}$.
\end{observation}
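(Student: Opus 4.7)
My plan is to split $P := P_{s,a,\parent(a)}$ at the vertex $\ell_a$ into the prefix $P[s, \ell_a]$ and the suffix $Q := P(\ell_a, a]$, and to argue about each piece separately.

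For the prefix, I would first verify that $\parent(a) \notin T[s, \ell_a]$. Indeed, $T[s, \ell_a]$ is exactly the set of ancestors of $\ell_a$ in $T$; if $\parent(a)$ were one of them then $\ell_a$ would lie in $T_{\parent(a)}$, contradicting its definition as a vertex of $P$ outside $T_{\parent(a)}$. Hence $T[s, \ell_a] \cap \{\parent(a)\} = \emptyset$, and I can invoke the key replacement-path property recalled in the Preliminaries (a consequence of assigning weight $1$ to tree edges and weight $n$ to non-tree edges): for any $c, d \in P_{a,b,F}$ with $T[c,d] \cap F = \emptyset$, one has $P_{a,b,F}[c,d] = T[c,d]$. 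Applying this with $c = s$ and $d = \ell_a$ yields $P[s, \ell_a] = T[s, \ell_a]$.

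For the suffix, the claim $Q \subseteq T_{\parent(a)}$ is essentially immediate from the choice of $\ell_a$ as the \emph{last} vertex on $P$ lying outside $T_{\parent(a)}$: every vertex appearing on $P$ strictly after $\ell_a$ must therefore belong to $T_{\parent(a)}$. The concatenation $T[s, \ell_a] \circ Q$ is well-defined because the edge of $P$ leaving $\ell_a$ connects it to the first vertex of $Q$, satisfying the concatenation condition from the Preliminaries.

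I do not anticipate a real obstacle; the only sanity check is the existence of $\ell_a$, which is guaranteed whenever $\parent(a) \neq s$ since then $s \in P \setminus T_{\parent(a)}$ while $a \in T_{\parent(a)}$, forcing $P$ to leave $T_{\parent(a)}$ somewhere and hence to possess a last such vertex. Altogether the observation follows as a direct combination of the weighted-replacement-path property with the definition of $\ell_a$.
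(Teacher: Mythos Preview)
Your argument is correct and complete. The paper itself offers no proof for this observation; it is stated as self-evident from the definition of $\ell_a$ together with the ``walk on $T$ whenever possible'' property of replacement paths established in the Preliminaries. Your write-up is precisely the routine verification the paper leaves implicit, and it matches the intended reasoning.
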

We are now ready to define the $\FTLabel{P}$ labels. These are constructed by Algorithm \ref{alg:P-label}.
The label length of $O(\log^3 n)$ bits follows by Lemma \ref{lem:instersting_sets}.

$\quad$\newline
\begin{algorithm}[H]
	\caption{Construction of label $\FTLabel{P} (a)$ for vertex $a$}\label{alg:P-label}
	\For{each $b' \in I(a)$ with $\parent(b')=b$}{
		\textbf{store} vertices $b, b', \ell_{b'}$\;
		\For{each $c \in I^{\uparrow}(\ell_{b'})$}{
			\textbf{store} vertex $c$\;
			\textbf{store} $\ccid(b', \Gsub{b,c})$, $\conn(b', \heavy(b), \Gsub{b,c})$, $\conn(b', \heavy(c), \Gsub{b,c})$\;
		}
	}
\end{algorithm}

\paragraph{Decoding Algorithm for Property (P)}
Our goal is to show that given $\FTLabel{P} (w)$ for $w \in \{u,v\}$ and $\FTLabel{P} (x), \FTLabel{P} (y)$ we can indeed satisfy the promise of (P); namely, determine the $xy$-connectivity of $w,z$, and in case they are $xy$-disconnected also report $\ccid(w, \Gsub{x,y})$.

One of the failures, say $x$, must be an ancestor of $w$ in $T$.
Otherwise, $w$ would have been connected to $s$ in $\Gsub{x,y}$, contradicting (C2).
Denote by $x'$ the child of $x$ on $T[x,w]$.
The independence of $x,y$ guarantees that $T[x',w]$ is fault-free, hence $x',w$ are $xy$-connected.

It now follows from (C2) that $x',s$ are $xy$-disconnected, and from (C3) that $x',s$ are $x$-connected.
The latter ensures that $\ell_{x'}$ is well-defined.
By Observation \ref{obs:the_ell_vertex}, $P_{s,x',x} = T[s,\ell_{x'}] \circ Q$ where $Q \subseteq T_x$, so $y \notin Q$.
On the other hand, it cannot be $P_{s,x',x}$ entirely avoids $y$, as we have already established that $s,x'$ are $xy$-disconnected.
It follows that $y \in T[s,\ell_{x'}]$. See illustration in Figure \ref{fig:ind}.
\begin{figure}
	\centering
	\includegraphics[scale=0.5]{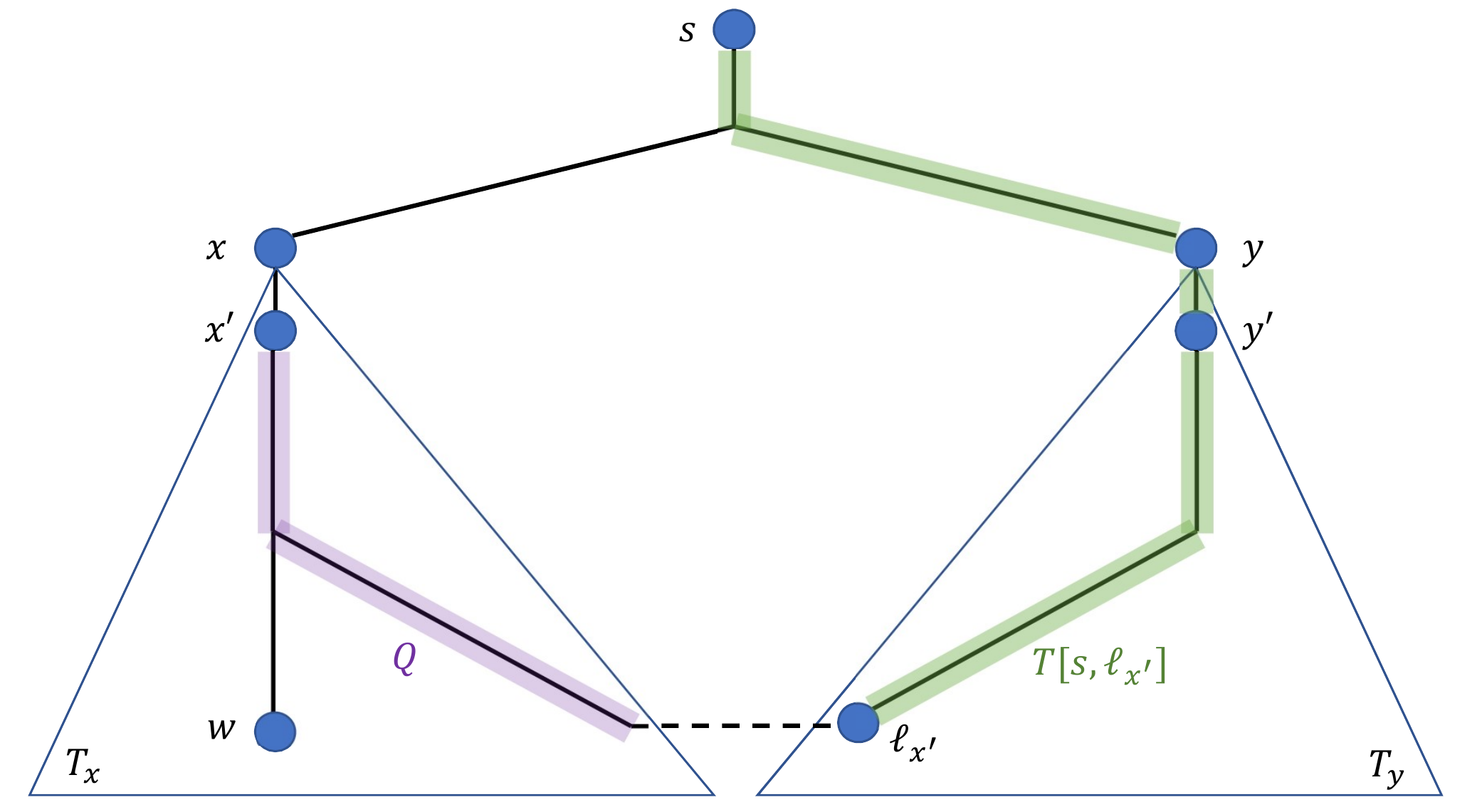}
	\caption{
		Illustration of the decoding algorithm for the independent failures case. The path $T[s,\ell_{x'}]$ is shown in green (right), and the path $Q$ is shown in purple (left). The concatenation $T[s,\ell_{x'}] \circ Q$ forms the replacement path $P_{s,x',x}$.
		The vertex $y'$ is the child of $y$ on $T[s, \ell_{x'}]$. The case where $y \notin I^{\uparrow} (\ell_{x'})$ corresponds to $y'=h(y)$.
	}
	\label{fig:ind}
\end{figure}
Note that $x' \in I(w) \cup I(x)$ by Lemma \ref{lem:child_on_path}, hence the triplet of vertices $b = x$, $b' = x'$ and $\ell_{b'} = \ell_{x'}$ is stored either in $\FTLabel{P} (w)$ or in $\FTLabel{P} (x)$.
We next distinguish between two cases, depending on whether $y$ belongs to the upper-interesting set of $\ell_{x'}$.

\smallskip\noindent\textbf{Case 1: $y \in I^{\uparrow} (\ell_{x'})$.}
 Then the following are also specified in the last label (with $c=y$):
 \begin{equation*}
	\ccid(x', \Gsub{x,y}), \;
	\conn(x', \heavy(x), \Gsub{x,y}), \;
	\conn(x', \heavy(y), \Gsub{x,y}).
 \end{equation*} 
Since $x',w$ are $xy$-connected, we can replace $x'$ by $w$ in the three values above, and reporting them guarantees property (P).

\smallskip\noindent\textbf{Case 2: $y \notin I^{\uparrow} (\ell_{x'})$.}
Then the child of $y$ on $T[y,\ell_{x'}]$ is $\heavy(y)$ by Lemma \ref{lem:child_on_path}.
Thus $T[\heavy(y), \ell_{x'}] \circ Q$ is a $\heavy(y)$-$x'$ path avoiding both $x$ and $y$.
Therefore, as $w$ is $xy$-connected to $x'$, it is also $xy$-connected to $\heavy(y)$.
So, if $z=\heavy(y)$ we are done.
However, if $z = \heavy(x)$, we recover by simply repeating the algorithm when $y,\heavy(y),x$ play the respective roles of $x,x',y$.
The triplet $y, \heavy(y), \ell_{\heavy(y)}$ is stored $\FTLabel{P} (y)$ since $h(y) \in I(y)$.
If $x \in I^{\uparrow}(\ell_{\heavy(y)})$, this label also specifies
\begin{equation*}
	\ccid(\heavy(y), \Gsub{x,y}), \;
	\conn(\heavy(y), \heavy(x), \Gsub{x,y}), \;
\end{equation*} 
and since $w,\heavy(y)$ are $xy$-connected we can replace $\heavy(y)$ by $w$ in these values, and report them to guarantee property (P).
Otherwise, we deduce (by a symmetric argument) that $w$ and $\heavy(x) = z$ are $xy$-connected, so we are done.
This concludes the decoding algorithm for property (P).
Finally, by Lemma \ref{lem:property-P}, we obtain:
\begin{lemma}\label{lem:ind-2vft}
	 There are $O(\log^3 n)$-bit labels $\FTLabel{ind}$ supporting the independent failures case.
\end{lemma}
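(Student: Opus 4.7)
The plan is to define $\FTLabel{ind}(a)$ as the concatenation of four sub-labels: (i) the $1$-VFT label of $a$ from Theorem \ref{thm:1vft}, (ii) the $1$-VFT label of the fixed source $s$ (which is stored identically in every vertex's label), (iii) a single-source $2$-VFT label of $a$ (either from \cite{Choudhary16} or from Lemma \ref{lem:ss-2vft_labels}), and (iv) the label $\FTLabel{P}(a)$ produced by Algorithm \ref{alg:P-label}. Each of (i)--(ii) contributes $O(\log^2 n)$ bits, (iii) contributes $O(\log^3 n)$ bits, and (iv) contributes $O(\log^3 n)$ bits by a double application of Lemma \ref{lem:instersting_sets} (the outer loop over $I(a)$ and the inner loop over $I^{\uparrow}(\ell_{b'})$ are each of size $O(\log n)$, and each iteration stores $O(\log n)$ bits of vertex identifiers, component IDs, and connectivity bits). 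The overall label length is therefore $O(\log^3 n)$.

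For decoding a query $\langle u,v,x,y\rangle$ with independent failures $x,y$, I will first use sub-labels (i)--(iii) to check whether the basic assumptions (C1), (C2), (C3) all hold. If (C1) fails, then by definition $u$ and $v$ are $xy$-disconnected and we return ``no''. If (C2) fails, we use the single-source $2$-VFT labels of $u$ and $v$ to recover $\conn(s,u,\Gsub{x,y})$ and $\conn(s,v,\Gsub{x,y})$; since at least one of these equals $1$, comparing the two values already decides the $xy$-connectivity of $u$ and $v$. If (C1) holds but (C3) fails, Claim \ref{claim:c3} immediately gives that $u$ and $v$ are $xy$-connected, so we return ``yes''.

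The only remaining situation is when (C1), (C2) and (C3) all hold simultaneously. In this case I will invoke Lemma \ref{lem:property-P}, which reduces the task of deciding $xy$-connectivity of $u,v$ to establishing property (P) for some $z \in \{\heavy(x),\heavy(y)\}$. Property (P) is then delivered exactly by the decoding algorithm laid out just above the lemma statement: identify the child $x'$ of $x$ on $T[x,w]$ (it exists because (C2) forces one of $x,y$ to be an ancestor of $w$, and we may rename so that $x$ is such an ancestor), locate the stored triple $(x, x', \ell_{x'})$ in either $\FTLabel{P}(w)$ or $\FTLabel{P}(x)$ using $x' \in I(w) \cup I(x)$ via Lemma \ref{lem:child_on_path}(1), and then branch on whether $y \in I^{\uparrow}(\ell_{x'})$; in the ``yes'' branch the stored triple $(y, \ccid, \conn)$ entries directly yield the values demanded by (P), and in the ``no'' branch Lemma \ref{lem:child_on_path}(2) gives that the child of $y$ on $T[y,\ell_{x'}]$ is $\heavy(y)$, which supplies a fault-free $\heavy(y)$-$x'$ path and lets us either finish (if $z=\heavy(y)$) or repeat the argument symmetrically with the roles of $x$ and $y$ swapped, pulling the needed data out of $\FTLabel{P}(y)$.

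The step I expect to require the most care, rather than being a genuine obstacle, is the bookkeeping for the symmetric ``fallback'' in the case $y \notin I^{\uparrow}(\ell_{x'})$: one must verify that when we re-run the algorithm with $y$ playing the role of $x$, the replacement path $P_{s,\heavy(y),y}$ and its associated vertex $\ell_{\heavy(y)}$ are well-defined and that either the stored entry for $c=x \in I^{\uparrow}(\ell_{\heavy(y)})$ delivers the required values in $\FTLabel{P}(y)$, or else the failure of $x$ to lie in $I^{\uparrow}(\ell_{\heavy(y)})$ yields, by the same Lemma \ref{lem:child_on_path}(2) argument, a fault-free $\heavy(x)$-$\heavy(y)$ path, settling the remaining case $z=\heavy(x)$ without any stored data. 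Once this symmetric clean-up is checked, property (P) holds on both possible adversarial choices of $z$, and the lemma follows from Lemma \ref{lem:property-P}.
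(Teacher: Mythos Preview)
Your proposal is correct and follows essentially the same approach as the paper: you bundle the $1$-VFT labels (of $a$ and of $s$), the single-source $2$-VFT labels, and the $\FTLabel{P}$ labels into $\FTLabel{ind}$, use the first three to dispose of the cases where (C1), (C2), or (C3) fail, and in the remaining case invoke Lemma~\ref{lem:property-P} together with the decoding procedure for property~(P) that the paper develops just above the lemma statement. Your identification of the symmetric fallback bookkeeping (well-definedness of $\ell_{\heavy(y)}$ and the two-branch analysis on whether $x\in I^{\uparrow}(\ell_{\heavy(y)})$) as the only delicate point matches the paper's own treatment.
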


\subsection{Two Failures are Dependent}\label{sec:dual-dep}
In this section, we consider the complementary case where $x$ and $y$ are dependent.
As previously discussed, our strategy is based on reducing the general dependent-failures case to the  well-structured configuration of the all-heavy case:
\begin{definition}
	A query of vertices $\langle u,v,x,y \rangle$ is said to be \emph{all-heavy (AH)} if $u,v,y \in T_{\heavy(x)}$.
\end{definition}
We first handle this configuration in Section \ref{sec:AH} by defining sub-labels $\FTLabel{AH}$ that are tailored to handle it.
Then, Section \ref{sec:reduction_to_AH} considers the general dependent-failures case.

\subsubsection{The All-Heavy (AH) Case}\label{sec:AH}

\paragraph{Construction of $\FTLabel{AH}$ Labels.}
We observe another useful property of single-fault replacement paths. For a vertex $a \in V$ with an $a$-$s$ replacement path $P = P_{a,s,\parent(a)}$, let $f_{a} \in P$ be the first (closest to $a$) vertex in $T[s, \parent(a))$.
\begin{observation}\label{obs:the_f_vertex}
	$P_{a,s,\parent(a)} = Q \circ T[f_a ,s]$ where $Q$ avoids $T[s,\parent(a)]$.
\end{observation}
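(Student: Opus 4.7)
The plan is to derive both halves of the decomposition $P = Q \circ T[f_a, s]$ independently, using as the main workhorse the ``walks on $T$ whenever possible'' property of replacement paths established in the Preliminaries (a consequence of the weight assignment giving $T$-edges weight $1$ and non-$T$-edges weight $n$): for any two vertices $c,d \in P$ with $T[c,d] \cap \{\parent(a)\} = \emptyset$, we have $P[c,d] = T[c,d]$. I apply this with $c = f_a$ and $d = s$. The side condition holds because $f_a \in T[s,\parent(a))$ is a strict ancestor of $\parent(a)$, hence the tree path $T[f_a, s]$ lies strictly above $\parent(a)$ in $T$ and in particular misses it. This yields $P[f_a, s] = T[f_a, s]$, which is precisely the tail of the decomposition.

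For the head, I would set $Q := P[a, f_a)$ (the prefix of $P$ up to but not including $f_a$) and verify that $Q$ avoids $T[s, \parent(a)]$. I would split this tree path as $T[s, \parent(a)) \cup \{\parent(a)\}$. The definition of $f_a$ as the \emph{first} vertex of $P$ lying in $T[s, \parent(a))$ immediately gives $Q \cap T[s, \parent(a)) = \emptyset$. Moreover, since $P$ is a replacement path avoiding the faulty vertex $\parent(a)$, the whole of $P$, and in particular $Q$, does not contain $\parent(a)$. Combining these gives $Q \cap T[s, \parent(a)] = \emptyset$, as required. The concatenation $Q \circ T[f_a, s]$ is well defined since the last vertex of $Q$ is adjacent in $P$ to $f_a$, and it reconstitutes all of $P$.

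I do not anticipate a real obstacle here; the observation is a direct structural consequence of the weight convention and of how $f_a$ is defined. The only routine care is to match the half-open/closed interval conventions so that $Q$ excludes $f_a$ itself (since $f_a \in T[s, \parent(a))$, keeping it in $Q$ would violate the avoidance claim), and to note that the statement implicitly assumes $\parent(a) \neq s$, for otherwise no $a$-$s$ path avoiding $\parent(a)$ exists and $P$ would not be defined in the first place.
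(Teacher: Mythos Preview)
Your argument is correct and is exactly the natural one: apply the ``walks on $T$ whenever possible'' property of replacement paths to the pair $f_a, s$ (whose tree path lies strictly above $\parent(a)$) to get the suffix $P[f_a,s]=T[f_a,s]$, and use the minimality of $f_a$ together with $\parent(a)\notin P$ to see that the prefix $Q=P[a,f_a)$ misses all of $T[s,\parent(a)]$. The paper states this as an observation without proof, so there is no alternative argument to compare against; your write-up supplies precisely the routine verification the authors left implicit.
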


We are now ready to define the $\FTLabel{AH}$ labels. These are constructed by Algorithm \ref{alg:AH-label}.
The label length of $O(\log^2 n)$ bits follows by Lemma \ref{lem:instersting_sets}.
\begin{algorithm}[h!]
	\caption{Construction of label $\FTLabel{AH} (a)$ for vertex $a$}\label{alg:AH-label}
	\For{each $b' \in I(a)$ with $\parent(b')=b$}{
		\textbf{store} vertices $b, b', f_{b'}$\;
		\textbf{store} $\conn(b', \parent(b), \Gsub{b,f_{b'}})$, $\ccid(b', G \setminus T[s,b])$\;
	}
\end{algorithm}

\paragraph{Decoding Algorithm for (AH) Case.}
Assume we are given an (AH)-query $\langle u,v,x,y \rangle$ along with the $\FTLabel{AH}$ labels of these vertices.
The main idea behind the construction of the $\FTLabel{AH}$ labels is to have:

\begin{claim}\label{claim_par(y)}
	If $w \in \{u,v\}$ is a descendant of $y$, then given the $\FTLabel{AH}$ labels of $w,x,y$, one can:
	
	(1) find $\ccid(w, G \setminus T[s,y])$, and
	
	(2) determine whether $w$ and $\parent(y)$ are $xy$-connected.
\end{claim}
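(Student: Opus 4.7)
The plan is to exploit the child $y'$ of $y$ on the tree path $T[y,w]$ as the central intermediary. Since $y'$ is a child of $y$ and $w \in T_{y'}$, the subpath $T[y',w]$ lies inside $T_{y'}$, hence is disjoint from $T[s,y]$; it is also fault-free, because $x$ is a strict ancestor of $y$ and therefore of $y'$, while $y = \parent(y') \notin T[y',w]$. Consequently $w$ and $y'$ are $xy$-connected and $\ccid(w, G \setminus T[s,y]) = \ccid(y', G \setminus T[s,y])$. Moreover, by Lemma \ref{lem:child_on_path}(1), $y' \in I(w) \cup I(y)$, so either $\FTLabel{AH}(w)$ or $\FTLabel{AH}(y)$ contains the record for $b=y, b'=y'$, which stores $f_{y'}$, the bit $\conn(y',\parent(y), \Gsub{y,f_{y'}})$, and $\ccid(y', G \setminus T[s,y])$. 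Part (1) then follows immediately by reporting the stored component ID (after locating $y'$ via extended IDs).

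For part (2), it suffices to determine $\conn(y', \parent(y), \Gsub{x,y})$, since $y'$ and $w$ are $xy$-connected. I would invoke Observation \ref{obs:the_f_vertex} to write $P_{y',s,y} = Q \circ T[f_{y'}, s]$, where $Q$ avoids $T[s,y]$ and $f_{y'} \in T[s,y)$, and then split on the location of $x$ along $T[s,y)$, which is detectable at query time via the extended IDs of $x$ and $f_{y'}$. If $x$ is a strict ancestor of $f_{y'}$, the path $Q \circ T[f_{y'}, \parent(y)]$ avoids $\{x,y\}$: $Q$ avoids $T[s,y] \supseteq \{x,y\}$, while $T[f_{y'}, \parent(y)]$ is a subpath of $T[s,y]$ that lies strictly below $x$ and omits $y$; hence $y'$ and $\parent(y)$ are $xy$-connected. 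If $x = f_{y'}$, then $\Gsub{y,f_{y'}} = \Gsub{x,y}$, so the stored bit is precisely the desired answer.

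The main obstacle is to rule out the remaining case where $x$ lies strictly between $f_{y'}$ and $\parent(y)$ on $T[s,y)$. This is where condition (C2) does the essential work: in such a configuration, the concatenation $T[w,y'] \circ Q \circ T[f_{y'}, s]$ would form a $w$-to-$s$ path avoiding $\{x,y\}$, because $T[w,y'] \subseteq T_{y'}$ is disjoint from $T[s,y]$, $Q$ avoids $T[s,y]$, and $T[f_{y'},s]$ is the upward tree path from $f_{y'}$, containing neither $y$ (which lies below $f_{y'}$) nor $x$ (also below $f_{y'}$ by assumption). This contradicts the $xy$-disconnection of $w$ and $s$ asserted by (C2), so this case cannot arise, completing the analysis. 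A minor corner case arises when $y'$ and $s$ are $y$-disconnected and $f_{y'}$ is undefined; then $y'$ (and hence $w$) is already $y$-disconnected from $\parent(y)$, which the labeling algorithm can signal via a $\Null$ marker in place of $f_{y'}$.
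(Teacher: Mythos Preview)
Your proposal is correct and follows essentially the same approach as the paper: introduce the child $y'$ of $y$ on $T[y,w]$, use Lemma~\ref{lem:child_on_path}(1) to locate the stored record $(b,b',f_{b'})=(y,y',f_{y'})$ in one of $\FTLabel{AH}(w),\FTLabel{AH}(y)$, report the stored component ID for part~(1), and for part~(2) do a case split on the relative position of $x$ and $f_{y'}$ on $T[s,y)$, invoking (C2) to rule out the case where $f_{y'}$ lies strictly above $x$. One small remark: your final ``corner case'' is unnecessary, since under the standing assumption (C1) the vertex $w$ (and hence $y'$) is $y$-connected to $s$, so $f_{y'}$ is always well-defined; this is exactly how the paper disposes of that issue, and no $\Null$ marker is needed.
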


\begin{proof}
	Let $y'$ be the child of $y$ on $T[y,w]$.
	Then $y',w$ are $xy$-connected as $T[y',w]$ is fault-free.
	Hence, in the following we can replace $w$ by $y'$ for determining both (1) and (2).
	Also, as $w,y'$ are (particularly) $y$-connected, and $w,s$ are $y$-connected by (C1), we have that $y',s$ are $y$-connected, so $f_{y'}$ is well-defined.
	By Lemma \ref{lem:child_on_path}, it holds that $y' \in I(w) \cup I(y)$, hence the triplet $b = y$, $b' = y'$ and $f_{b'} = f_{y'}$ is stored either in $\FTLabel{AH} (w)$ or in $\FTLabel{AH} (y)$.
	The same label also includes $\ccid(y', G \setminus T[s,y])$, so (1) follows.
	We also find there the value $\conn(y', \parent(y), \Gsub{y, f_{y'}})$.
	Note that if $f_{y'} = x$, then (2) follows as well.
	Assume now that $f_{y'} \neq x$.
	By Observation \ref{obs:the_f_vertex}, the path $P_{y',s,y}$ is of the form $Q \circ T[f_{y'},s]$ where $Q$ avoids $T[s,y]$.
	We now observe that $f_{y'} \notin T[s,x)$: this follows as otherwise, the $w$-$s$ path given by $T[w,y'] \circ Q \circ T[f_{y'},s]$ is failure-free, contradicting (C2).
	As $f_{y'}$ is, by definition, a vertex in $T[s,y)$, it follows that $f_{y'} \in T(x,y)$.
	The $y'$-$\parent(y)$ path $Q \circ T[f_{y'}, \parent(y)]$ now certifies that $y',\parent(y)$ are $xy$-connected, which gives (2). See illustration in Figure \ref{fig:dep_claim}.
\end{proof}
\begin{figure}
	\centering
	\includegraphics[scale=0.5]{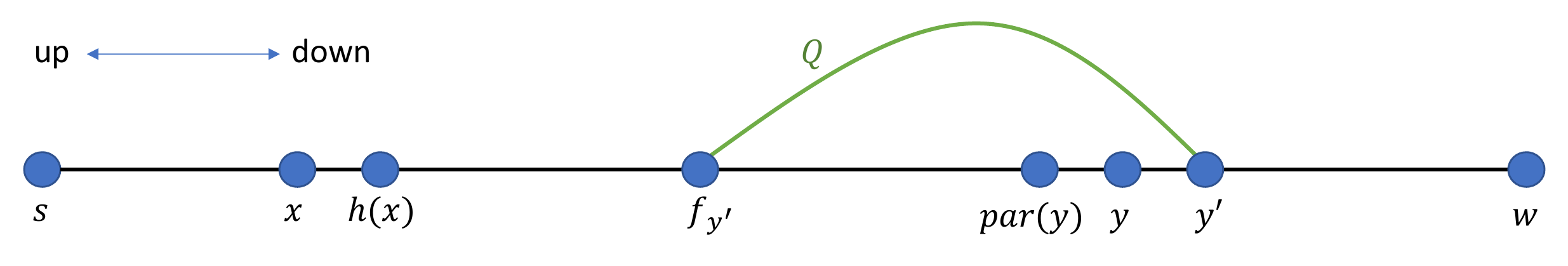}
	\caption{
		Illustration of the proof of Claim \ref{claim_par(y)}.
		The tree path from $s$ to $w$ is shown sideways, where the depth increases from left to right.
		The path $Q$ appears in green.
	}
	\label{fig:dep_claim}
\end{figure}

We next show how to use Claim \ref{claim_par(y)} for determining the $xy$-connectivity of $u,v$.
The proof divides into three cases according to the ancestry relations between $u$, $v$ and $y$.

\smallskip\noindent\textbf{Case 1: Neither of $u,v$ is a descendant of $y$.}
Then $y \notin T[u,v]$. Since both $u,v \in T_{\heavy(x)}$, also $x \notin T[u,v]$. Thus $u,v$ are $xy$-connected, and we are done.

\smallskip\noindent\textbf{Case 2: Only one of $u,v$ is a descendant of $y$.}
W.l.o.g., assume the descendant is $u$.
Then $y \neq \heavy(x)$, as otherwise $v$ would also be a descendant of $y$.
It follows that $\parent(y) \in T_{\heavy(x)}$.
Hence $T[\parent(y), v] \subseteq T_{\heavy(x)}$, so it avoids $x$.
It also avoids $y$, as $T[\parent(y),v]$ contains only ancestors of $\parent(y)$ or of $v$.
Thus, $v,\parent(y)$ are $xy$-connected.
Finally, we use Claim \ref{claim_par(y)}(2) with $w=u$ to determine the $xy$-connectivity of $u,\parent(y)$, or equivalently of $u,v$.

\smallskip\noindent\textbf{Case 3: Both $u,v$ are descendants of $y$.}
We apply Claim \ref{claim_par(y)} twice, with $w=u$ and $w=v$.
Using (2), we check for both $u$ and $v$ if they are $xy$-connected to $\parent(y)$.
The only situation in which we cannot infer the $xy$-connectivity of $u,v$ is when both answers are negative.
When this happens, we exploit (1) and compare the component IDs of $u,v$ in $G \setminus T[s,y]$.
If they are equal, then clearly $u,v$ are $xy$-connected as $x,y \in T[s,y]$.
Otherwise, we assert that we can safely determine that $u,v$ are $xy$-disconnected. 

\begin{claim}\label{claim:Tsy}
	If 
	(i) both $u$ and $v$ are $xy$-disconnected from $\parent(y)$, and (ii) $u$ and $v$ are $T[s,y]$-disconnected,
	then $u,v$ are $xy$-disconnected.
\end{claim}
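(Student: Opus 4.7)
The plan is to prove the contrapositive by contradiction: assume $u,v$ are $xy$-connected, and derive that either hypothesis (i) or hypothesis (ii) must fail. Let $P$ be a $u$-$v$ path in $G \setminus \{x,y\}$. By hypothesis (ii), $u$ and $v$ lie in distinct components of $G \setminus T[s,y]$, so $P$ must intersect $T[s,y]$. Let $p$ be the \emph{first} vertex on $P$ (traversed from $u$) that belongs to $T[s,y]$. Since $P$ avoids $\{x,y\}$, we have $p \in T[s,y] \setminus \{x,y\} = T[s,x) \cup T(x,y)$.

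I would then split into the two resulting cases. \textbf{Case A:} $p \in T[s,x)$. Then the tree path $T[p,s]$ lies strictly above $x$ and therefore avoids both $x$ and $y$ (recall $y$ is a descendant of $x$ in the (AH) setting). Concatenating, $P[u,p] \circ T[p,s]$ is a $u$-$s$ walk in $G \setminus \{x,y\}$, contradicting condition (C2). \textbf{Case B:} $p \in T(x,y)$. Then $\parent(y) \in T(x,y]$ and the tree path $T[p,\parent(y)]$ is contained in $T(x,y]$, hence avoids $x$ (all its vertices are strict descendants of $x$) and avoids $y$ (it stops at $\parent(y)$). Thus $P[u,p] \circ T[p,\parent(y)]$ witnesses that $u$ is $xy$-connected to $\parent(y)$, contradicting hypothesis (i).

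The only real subtlety is the edge case $\parent(y)=x$ (i.e., $y=\heavy(x)$), where Case B is vacuous because $T(x,y)=\emptyset$; but then $P$ must cross $T[s,y]$ in Case A, and the same contradiction with (C2) applies. The main technical ingredient already used implicitly is the (AH)-structural fact that $y$ lies in the subtree of $x$, which ensures $T[s,x)$ and $T(x,\parent(y)]$ are both disjoint from $\{x,y\}$; this is what allows the concatenated walks in both cases to avoid the faults. No step looks like a genuine obstacle, and the whole argument is a short forbidden-crossing analysis, which is why I would present it essentially in the form above.
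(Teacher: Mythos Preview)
Your proof is correct and follows essentially the same approach as the paper's: take a $u$-$v$ path $P$ in $G\setminus\{x,y\}$, use (ii) to find a vertex of $P$ on $T[s,y]\setminus\{x,y\}$, and in each of the two resulting cases concatenate a tree segment to contradict either (C2) or (i). The only cosmetic differences are that you pick the \emph{first} such vertex (any works) and you make the vacuous edge case $\parent(y)=x$ explicit.
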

\def\TSYPROOF{
\begin{proof}
	Assume towards a contradiction that there exists a $u$-$v$ path $P$  in $\Gsub{x, y}$.
	By (ii), $P$ must intersect $T[s,y]$.
	Let $a$ be a vertex in $P \cap T[s,y]$.
	As $a \notin \{x,y\}$, it holds that either $a \in T[s,x)$ or $a \in T(x,y)$.
	If $a \in T[s,x)$, then the path $P[u,a] \circ T[a,s]$ connects $u$ to $s$ in $\Gsub{x,y}$, but this contradicts (C2).
	If $a \in T(x,y)$, then the path $P[u,a] \circ T[a,\parent(y)]$ connects $u$ to $\parent(y)$ in $\Gsub{x,y}$, contradicting (i).
\end{proof}
}\TSYPROOF

This concludes the decoding algorithm for the (AH) case, and proves:
\begin{lemma}\label{lem:AH-2vft}
	There exist $O(\log^2 n)$-bit labels $\FTLabel{AH}$ supporting the all-heavy (AH) case.
\end{lemma}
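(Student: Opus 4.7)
The plan is to verify that the labels constructed by Algorithm \ref{alg:AH-label} satisfy both the length bound and the decoding guarantee for (AH) queries. The length bound is immediate: the algorithm iterates over $b' \in I(a)$ and for each iteration stores $O(1)$ items of $O(\log n)$ bits each (including the implicit extended IDs), so by Lemma \ref{lem:instersting_sets} the total is $O(\log^2 n)$ bits.

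For correctness, the plan is to first prove Claim \ref{claim_par(y)}: given an (AH) query $\langle u,v,x,y\rangle$ and a descendant $w \in \{u,v\}$ of $y$, the labels of $w,x,y$ suffice to recover $\ccid(w, G \setminus T[s,y])$ and to determine the $xy$-connectivity of $w$ with $\parent(y)$. The key step is to locate the first vertex $f_{y'}$ of $P_{y',s,y}$ in $T[s,y)$, where $y'$ is the child of $y$ on $T[y,w]$; this vertex $y'$ lies in $I(w) \cup I(y)$ by Lemma \ref{lem:child_on_path}, so the triple $(y, y', f_{y'})$ is stored in one of the given labels. Assumption (C1) guarantees $f_{y'}$ exists (since $y',s$ remain connected under the single fault $y$), and (C2) forces $f_{y'} \notin T[s,x)$: otherwise the $w$-$s$ path $T[w,y'] \circ P_{y',s,y}[y',f_{y'}] \circ T[f_{y'},s]$ would avoid both $x$ and $y$, contradicting (C2). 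Hence $f_{y'} \in T[x,y)$. If $f_{y'} = x$, the stored bit $\conn(y', \parent(y), \Gsub{y, f_{y'}})$ directly answers (2); otherwise, the portion of $P_{y',s,y}$ before $f_{y'}$ concatenated with $T[f_{y'}, \parent(y)]$ gives an $xy$-free $y'$-$\parent(y)$ path, witnessing $xy$-connectivity.

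Given the claim, the plan is to complete the decoding by a case split on which of $u,v$ descend from $y$. If neither does, then $T[u,v]$ avoids both $x$ (since $u,v \in T_{\heavy(x)}$) and $y$, so $u,v$ are $xy$-connected. If only one does, say $u$, then $\parent(y) \in T_{\heavy(x)}$ (else $v$ would descend from $y$), so $v$ is $xy$-connected to $\parent(y)$ via $T_{\heavy(x)}$, reducing the question to Claim \ref{claim_par(y)}(2) applied to $u$. If both descend from $y$, we try Claim \ref{claim_par(y)}(2) for each; if both yield $xy$-disconnection from $\parent(y)$, we fall back on Claim \ref{claim_par(y)}(1) and conclude via Claim \ref{claim:Tsy}, which asserts that mismatched component IDs in $G \setminus T[s,y]$ together with $xy$-disconnection from $\parent(y)$ imply $u,v$ are $xy$-disconnected.

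The main obstacle is Claim \ref{claim_par(y)} itself: the surprise is that storing just $f_{b'}$ plus a single connectivity bit per $b' \in I(a)$ suffices to answer queries for every possible ancestor fault $x \in T[s, \parent(b'))$. This succinctness relies crucially on (C2) pinning $f_{y'}$ to $T[x,y)$; without it, we would seemingly need to enumerate over ancestors $x$ of $y$, defeating the $O(\log^2 n)$ budget. Once Claim \ref{claim_par(y)} is established, the remainder of the argument is a routine case analysis driven by ancestry relations.
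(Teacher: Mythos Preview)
Your proposal is correct and follows essentially the same approach as the paper: the same label construction (Algorithm~\ref{alg:AH-label}), the same key claim (Claim~\ref{claim_par(y)}) with the same (C2)-based argument pinning $f_{y'}$ to $T[x,y)$, and the same three-case decoding analysis culminating in Claim~\ref{claim:Tsy}. The only differences are cosmetic (e.g., you establish $f_{y'}\notin T[s,x)$ before splitting on $f_{y'}=x$, whereas the paper splits first), and you elide the one-line justification that $\ccid(w,G\setminus T[s,y])=\ccid(y',G\setminus T[s,y])$ via the fault-free path $T[y',w]$.
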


\subsubsection{The General Dependent-Failures Case}\label{sec:reduction_to_AH}

We assume w.l.o.g. that $y$ is a descendant\footnote{We can check this using the extended IDs, and swap the roles of $x$ in $y$ if needed.} of $x$ in $T$.
Condition (C2) implies that both $u$ and $v$ are also descendants of $x$.
Recall that our general strategy is reducing to the (AH) case, as follows.
First, the case where $y$ is not in $T_{\heavy(x)}$ is handled by using 1-VFT labels \emph{in the graph $\Gsub{x}$}, which enable us to determine directly whether $u,v$ are connected in $(\Gsub{x}) \setminus \{y\} = \Gsub{x,y}$.
In the remaining case where $y \in T_{\heavy(x)}$, we show how to ``replace'' $u,v$ by \emph{$xy$-analogs}: vertices $\widehat{u}, \widehat{v}$ that are $xy$-connected to $u,v$ (respectively), and lie inside $T_{\heavy(x)}$.
Thus, the query $\langle \widehat{u}, \widehat{v}, x, y \rangle$ is an analogous (AH)-query to answer.

\paragraph{Construction of $\FTLabel{dep}$ Labels.}
The construction is based on defining, for any given vertex $a$ and ancestor $b$ of $a$, a small set of vertices from $T_{\heavy(b)}$, serving as candidates to be $bc$-analogs of $a$ for \emph{any} $c \in T_{\heavy(b)}$.
\begin{definition}[Analog Sets]
	For $a,b \in V$ such that $b$ is an ancestor of $a$ in $T$, the \emph{analog set} $\AnSet(a,b)$ consists of two (arbitrary) distinct vertices $c_1, c_2$ with the following property:
	$c_i \in T_{\heavy(b)}$ and there exists an $a$-$c_i$ path avoiding $T_{\heavy(b)}^+ \setminus \{c_i\}$.
	If there is only one such vertex, then $\AnSet(a,b)$ is the singleton containing it,
	and if there are none then $\AnSet(a,b) = \emptyset$.
\end{definition}

The $\FTLabel{dep}$ labels are constructed by Algorithm \ref{alg:dep-label}.
We use the notation $\FTLabel{1F}(a, G')$ to denote the $1$-VFT label of $a \in V$ from Theorem \ref{thm:1vft} \emph{constructed w.r.t the subgraph $G' \subseteq G$}.
The label length of $O(\log^3 n)$ bits follows by Lemma \ref{lem:instersting_sets}, Theorem \ref{thm:1vft} and Lemma \ref{lem:AH-2vft}.
\begin{algorithm}[h!]
	\caption{Construction of label $\FTLabel{dep} (a)$ for vertex $a$}\label{alg:dep-label}
	\textbf{store} $\FTLabel{AH} (a)$\;
	\For{each $b' \in I(a)$ with $\parent(b')=b$}{
		\textbf{store} vertices $b, b'$\;
		\textbf{store}  $\FTLabel{1F} (a, \Gsub{b})$, $\FTLabel{1F} (b', \Gsub{b})$\;
		\textbf{store} vertex set $\AnSet(a,b)$, and $\FTLabel{AH} (c_i)$ for each $c_i \in \AnSet(a,b)$\;
		\textbf{store} $\ccid(a, G \setminus T_{\heavy(b)}^+)$\;
	}
\end{algorithm}

\paragraph{Decoding Algorithm for Dependent Failures.}
Assume we are given a dependent-failures query $\langle u,v,x,y \rangle$ where $y$ is a descendant of $x$, along with the $\FTLabel{dep}$ labels.
We first treat the easier case where $x$ is a light ancestor of $y$ using the $1$-VFT labels in $\Gsub{x}$.

\smallskip\noindent\textbf{Case: $x$ is a light ancestor of $y$.}
Then the child $x_y$ of $x$ on the path $T[x,y]$ is light, hence $x_y \in I(y)$.
Therefore, the label of $y$ contains the $1$-VFT label $\FTLabel{1F} (y, \Gsub{x})$.
Let $x_u$ be the child of $x$ on the path $T[x,u]$, and define
\[
\widetilde{u} =
\begin{cases}
	\text{if $x_u$ is light:} & u, \\
	\text{if $x_u$ is heavy:} & x_u = \heavy(x).
\end{cases}
\]
\begin{claim}\label{claim:u_tilde}
	It holds that (i) $\widetilde{u}$ is $xy$-connected to $u$, and (ii) one can find $\FTLabel{1F} (\widetilde{u}, \Gsub{x})$.
\end{claim}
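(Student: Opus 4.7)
I would prove each of the two assertions by splitting on the definition of $\widetilde{u}$, treating the two sub-cases ($x_u$ light vs.\ $x_u$ heavy) in parallel.

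For part (i), the case $\widetilde{u}=u$ is vacuous, so the real work is the case $x_u$ heavy, in which $\widetilde{u} = \heavy(x)$. I would exhibit the tree path $T[\heavy(x),u]$ as an explicit $u$-$\widetilde{u}$ path in $\Gsub{x,y}$. Since $u$ is a descendant of $x_u = \heavy(x)$, this path lies in $T_{\heavy(x)}$ and thus avoids $x$. To see that it also avoids $y$, I would invoke the case assumption: $x$ is a light ancestor of $y$, so $x_y$ is light; on the other hand $\heavy(x)$ is heavy, hence $x_y \neq \heavy(x)$. Since the subtrees rooted at distinct children of $x$ are vertex-disjoint and $y \in T_{x_y}$, we conclude $y \notin T_{\heavy(x)}$, so $T[\heavy(x),u]$ avoids $y$ as required.

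For part (ii), I would trace through Algorithm \ref{alg:dep-label}. If $\widetilde{u} = u$, the vertex $x_u$ is light and lies on $T[s,u]$, so by Definition \ref{def:intersets} we have $x_u \in I(u)$. Running the loop in Algorithm \ref{alg:dep-label} on $a=u$ with $b' = x_u$ and $b = \parent(x_u) = x$ therefore stores $\FTLabel{1F}(u, \Gsub{x}) = \FTLabel{1F}(\widetilde{u}, \Gsub{x})$ inside $\FTLabel{dep}(u)$, which is available to the decoder. If instead $\widetilde{u} = \heavy(x)$, then $\heavy(x) \in I(x)$ directly from Definition \ref{def:intersets}; running the loop on $a = x$ with $b' = \heavy(x)$ and $b = x$ stores $\FTLabel{1F}(\heavy(x), \Gsub{x}) = \FTLabel{1F}(\widetilde{u}, \Gsub{x})$ inside $\FTLabel{dep}(x)$, which is likewise accessible since $x$ is one of the failing vertices supplied at decoding time.

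The argument is essentially a bookkeeping check once the case split is fixed, and I do not anticipate a serious obstacle. The one subtle point worth flagging is the subtree-disjointness step in (i): it is what legitimises routing purely through the tree and, in turn, crucially uses the standing hypothesis that $x$ is a \emph{light} ancestor of $y$, forcing $x_y \neq \heavy(x)$. Without this, the natural tree route from $u$ to $\heavy(x)$ could pass through $y$ and the $xy$-connectivity between $u$ and $\widetilde{u}$ would fail.
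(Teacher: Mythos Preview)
Your proposal is correct and follows essentially the same approach as the paper's own proof: split on whether $x_u$ is light or heavy, use $x_y \neq \heavy(x)$ (since $x_y$ is light) to certify that $T[\heavy(x),u]$ is fault-free for part~(i), and read off the required $\FTLabel{1F}$ label from the loop in Algorithm~\ref{alg:dep-label} via $x_u \in I(u)$ or $\heavy(x) \in I(x)$ for part~(ii). Your exposition is slightly more explicit about the subtree-disjointness step, but the argument is the same.
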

\def\UTILDEPROOF{
\begin{proof}
If $x_u$ is light: Then (i) is trivial.
For (ii), note that $x_u \in I(u)$, hence the $1$-VFT label of $\widetilde{u} = u$ with respect to $\Gsub{x}$, which is $\FTLabel{1F} (u, \Gsub{x})$, is stored in $\FTLabel{dep} (u)$.

If $x_u$ is heavy: Then $x_u \neq x_y$, hence the path $T[x_u, u]$ is failure-free, which proves (i).
For (ii), note that $x_u = \heavy(x) \in I(x)$, hence the $1$-VFT label of $\widetilde{u} = \heavy(x)$ with respect to $\Gsub{x}$, which is $\FTLabel{1F} (\heavy(x), \Gsub{x})$, is stored in $\FTLabel{dep} (x)$.
\end{proof}
}\UTILDEPROOF

We define $\widetilde{v}$ and find its $1$-VFT label with respect to $\Gsub{x}$ in a similar fashion.
Finally, we use the $1$-VFT labels to answer the \emph{single failure} query $\langle \widetilde{u}, \widetilde{v}, y \rangle$ with respect to the graph $\Gsub{x}$, which determines the $xy$-connectivity of $\widetilde{u},\widetilde{v}$, or equivalently of $u,v$. 
So in this case, the decoding algorithm directly determine the $xy$-connectivity of $u,v$, and we are done.

From now on, we assume that $x$ is a heavy ancestor of $y$, or equivalently that $y \in T_{\heavy(x)}$.

\smallskip\noindent\textbf{Replacing $u,v$ with their $xy$-analogs.}
In the following, we restrict our attention to $u$ (and the same can be applied to $v$).
Again, let $x_u$ be the child of $x$ on $T[s,u]$.
The $xy$-analog $\widehat{u}$ of $u$ is defined as:
\[
\widehat{u} =
\begin{cases}
	\text{if $x_u$ is heavy:} & u, \\
	\text{if $x_u$ is light and $\AnSet(u,x) \setminus \{y\} \neq \emptyset$:} & c \in \AnSet(u,x) \setminus \{y\},  \\
		\text{if $x_u$ is light and $\AnSet(u,x) \setminus \{y\} = \emptyset$:} & \text{undefined.}
\end{cases}
\]
The corner case where $\widehat{u}$ is undefined can be alternatively described as follows:
\begin{enumerate}[(C4)]
	\item $x_u$ is light, and no $c \in T_{\heavy(x)} \setminus \{y\}$ is connected to $u$ by a path internally avoiding $T_{\heavy(x)}^+$.
\end{enumerate}
The key observation for handling case (C4) is:
\begin{claim}\label{claim:avoid_Thx}
	If (C4) holds, then:
	
	(1) For any vertex $c \in T_{\heavy(x)} \setminus \{y\}$, $u$ and $c$ are $xy$-disconnected.
	
	(2) For any vertex $c \notin T_{\heavy(x)} \cup \{x, y\}$, 
	$u$ and $c$ are $xy$-connected iff they are $T_{\heavy(x)}^+$-connected. 
\end{claim}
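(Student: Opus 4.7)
My plan is to derive part (1) directly from assumption (C4) via a ``first-hit'' truncation argument, and then obtain part (2) as a short corollary of part (1).

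For part (1), I will argue by contradiction: suppose some $c \in T_{\heavy(x)} \setminus \{y\}$ is $xy$-connected to $u$, witnessed by a path $P$ in $G \setminus \{x, y\}$. Since $x_u$ is light, $x_u \neq \heavy(x)$ and $x_u \neq x$, so $T_{x_u}$ is disjoint from $T_{\heavy(x)}^+$; in particular $u \in T_{x_u}$ lies outside $T_{\heavy(x)}^+$. Traversing $P$ from $u$, let $c^\ast$ be the first vertex belonging to $T_{\heavy(x)}^+$ (which exists because $c \in T_{\heavy(x)}^+$). Since $x, y \notin P$, we have $c^\ast \neq x$ and $c^\ast \neq y$, so $c^\ast \in T_{\heavy(x)} \setminus \{y\}$. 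By the choice of $c^\ast$, the prefix $P[u, c^\ast]$ avoids $T_{\heavy(x)}^+ \setminus \{c^\ast\}$. This exhibits $c^\ast$ as an eligible element of $\AnSet(u, x) \setminus \{y\}$, contradicting (C4).

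For part (2), the ``if'' direction is immediate because $\{x, y\} \subseteq T_{\heavy(x)}^+$, so any $u$-$c$ path in $G \setminus T_{\heavy(x)}^+$ also avoids $\{x, y\}$. For the ``only if'' direction, given any $u$-$c$ path $P$ in $G \setminus \{x, y\}$, I will show $P$ must in fact be disjoint from $T_{\heavy(x)}^+$. Since $x \notin P$ it suffices to rule out intersections with $T_{\heavy(x)}$: any such $v \in P \cap T_{\heavy(x)}$ would satisfy $v \neq y$, hence $v \in T_{\heavy(x)} \setminus \{y\}$, and the prefix $P[u, v]$ would $xy$-connect $u$ to $v$, directly contradicting part (1). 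Therefore $P$ lives in $G \setminus T_{\heavy(x)}^+$, so $u$ and $c$ are $T_{\heavy(x)}^+$-connected.

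I do not anticipate a real obstacle here: the only non-routine ingredient is recognizing that truncating an $xy$-avoiding path at its first contact with $T_{\heavy(x)}^+$ produces a witness of precisely the form forbidden by (C4). The mild care required is in confirming that $u$ lies outside $T_{\heavy(x)}^+$ (which uses that $x_u$ is light) and that the first-hit vertex $c^\ast$ lies in $T_{\heavy(x)} \setminus \{y\}$ (which uses that $P$ avoids both faults).
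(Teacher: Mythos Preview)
Your proposal is correct and follows essentially the same approach as the paper: a first-hit truncation argument for part (1), and for part (2) the trivial ``if'' direction plus deriving ``only if'' by showing any $xy$-avoiding $u$-$c$ path cannot touch $T_{\heavy(x)}$ via part (1). The only cosmetic difference is that you take the first hit in $T_{\heavy(x)}^+$ rather than $T_{\heavy(x)}$, which is equivalent since $P$ avoids $x$.
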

\def\CFOURPROOF{
\begin{proof}
	For (1), assume towards a contradiction that there exists a $u$-$c$ path $P$ in $\Gsub{x,y}$.
	Let $c'$ be the first (closest to $u$) vertex from $T_{\heavy(x)}$ appearing in $P$.
	Since $c' \neq y$, and the subpath $P[u,c']$ internally avoids $T_{\heavy(x)}^+$, we get a contradiction to (C4).
	
	The ``if'' direction of (2) is trivial as $\{x,y\} \subseteq T_{\heavy(x)}^+$.
	For the ``only if'' direction, let $P$ be a $u$-$c$ path in $\Gsub{x,y}$.
	It suffices to prove that $P$ avoids $T_{\heavy(x)}$, but this follows directly from (1).
\end{proof}
}\CFOURPROOF

We handle case (C4) as follows.
If $v \in T_{\heavy(x)}$, then by Claim \ref{claim:avoid_Thx}(1) we determine that $u,v$ are $xy$-disconnected, and we are done.
Else, the child $x_v$ of $x$ on $T[x,v]$ is light, hence $x_v \in I(v)$. Therefore $\FTLabel{dep} (v)$ contains $\ccid(v, G\setminus T_{\heavy(x)}^+)$ (set $a=v$, $b = x$ and $b' = x_v$).
As $x_u$ is also light by (C4), we can find  $\ccid(u, G\setminus T_{\heavy(x)}^+)$ in a similar fashion. By Claim \ref{claim:avoid_Thx}(2), comparing these $\ccid$s allows us to determine the $xy$-connectivity of $u,v$, and we are done again.

If the corner case (C4) does not hold, then $\widehat{u}$ is indeed a valid $xy$-analog of $u$. Namely:
\begin{claim}\label{claim:u_hat}
	If $\widehat{u}$ is defined, then (i) $\widehat{u} \in T_{\heavy(x)}$, (ii) $u,\widehat{u}$ are $xy$-connected, and (iii) one can find the label $\FTLabel{AH} (\widehat{u})$.
\end{claim}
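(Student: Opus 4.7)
The plan is to dispose of all three parts by case analysis on the two branches in the definition of $\widehat{u}$: either $x_u$ is heavy and $\widehat{u} = u$, or $x_u$ is light and $\widehat{u}$ is some chosen $c \in \AnSet(u,x) \setminus \{y\}$. In the heavy branch, all three assertions are essentially immediate: $x_u = \heavy(x)$ forces $u \in T_{\heavy(x)}$, giving (i); assertion (ii) is trivial since $\widehat{u} = u$; and for (iii) the very first line of Algorithm \ref{alg:dep-label} stores $\FTLabel{AH}(u)$ inside $\FTLabel{dep}(u)$. So the real content lies in the light branch.

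For (i) in the light branch, membership $\widehat{u} = c \in T_{\heavy(x)}$ is part of the defining property of the analog set, hence nothing to do. The main step is (ii): I would invoke the other half of the definition of $\AnSet(u,x)$, which supplies a $u$-$c$ path $P$ avoiding $T_{\heavy(x)}^+ \setminus \{c\}$. Recalling that $T_{\heavy(x)}^+$ consists of $T_{\heavy(x)}$ together with $x = \parent(\heavy(x))$, the set $T_{\heavy(x)}^+ \setminus \{c\}$ contains $x$ (as $x \neq c$); moreover we are in the branch of the decoding where $y \in T_{\heavy(x)}$, and $c$ was chosen different from $y$, so the same set contains $y$ as well. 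Hence $P$ lies entirely in $G \setminus \{x,y\}$, witnessing that $u$ and $\widehat{u}$ are $xy$-connected.

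For (iii) in the light branch, I would use the fact that $x_u$ is a light vertex on $T[s,u]$, which places $x_u \in I(u)$. Therefore the iteration of Algorithm \ref{alg:dep-label} with $b' = x_u$ and $b = \parent(x_u) = x$ explicitly stores $\AnSet(u,x)$ together with $\FTLabel{AH}(c_i)$ for every $c_i$ in it, so in particular $\FTLabel{AH}(\widehat{u})$ is available inside $\FTLabel{dep}(u)$. The only subtlety worth flagging is the bookkeeping around the $T_{b}^+$ convention: it is precisely the inclusion of the parent edge that makes the avoidance property from the analog set's definition automatically block $x$, while it is our standing case assumption $y \in T_{\heavy(x)}$ together with $c \neq y$ that blocks $y$. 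Beyond this, the argument is a direct unwinding of the definitions of $\widehat{u}$, $\AnSet$, and the label construction, so I do not expect any genuine obstacle.
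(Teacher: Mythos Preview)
Your proposal is correct and follows essentially the same approach as the paper's own proof: a case split on whether $x_u$ is heavy or light, with the light case handled by unwinding the definition of $\AnSet(u,x)$ to get (i) and (ii), and using $x_u \in I(u)$ to locate $\FTLabel{AH}(\widehat{u})$ inside $\FTLabel{dep}(u)$ for (iii). Your exposition is in fact slightly more explicit than the paper's about why $T_{\heavy(x)}^+ \setminus \{c\}$ contains both $x$ and $y$, which is a nice touch.
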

\def\UHATPROOF{
\begin{proof}
	If $x_u$ is heavy: Then (i) and (ii) are trivial. For (iii) we simply note that $\FTLabel{dep}(u)$ stores $\FTLabel{AH}(u)$.
	
	If $x_u$ is light and $\AnSet(u,x) \setminus \{y\} \neq \emptyset$:
	Then $\widehat{u} \in \AnSet(u,x) \setminus \{y\}$.
	By definition of $\AnSet(u,x)$ it holds that $\widehat{u} \in T_{\heavy(x)}$, which gives (i), and that there is a $u$-$\widehat{u}$ path avoiding $T_{\heavy(x)}^+ \setminus \{\widehat{u}\}$, and consequently also $\{x,y\}$, which gives (ii).
	For (iii), we note that as $x_u$ is light it holds that $x_u \in I(u)$. Thus, $\FTLabel{dep} (u)$ stores the $\FTLabel{AH}$ labels of the vertices in $\AnSet(u,x)$, and in particular stores $\FTLabel{AH}(\widehat{u})$.
\end{proof}
}\UHATPROOF

\smallskip\noindent\textbf{Finalizing.}
We have shown a procedure that either determines directly the $xy$-connectivity of $u,v$, or certifies that $y \in T_{\heavy(x)}$ and finds $xy$-analogs $\widehat{u}, \widehat{v} \in T_{\heavy(x)}$ of $u,v$ (respectively) along with their (AH)-labels $\FTLabel{AH} (\widehat{u}), \FTLabel{AH} (\widehat{v})$.
In the latter case, we answer the (AH)-query $\langle \widehat{u}, \widehat{v}, x, y \rangle$ using the $\FTLabel{AH}$ labels\footnote{$\FTLabel{AH} (x)$ and $\FTLabel{AH} (y)$ are stored in $\FTLabel{dep} (x)$ and $\FTLabel{dep} (y)$ respectively.} and determine the $xy$-connectivity of $\widehat{u}, \widehat{v}$, or equivalently of $u,v$.
This concludes the decoding algorithm for dependent failures. We therefore have:

\begin{lemma}\label{lem:dep-2vft}
	There exists $O(\log^3 n)$-bit labels $\FTLabel{dep}$ supporting the dependent failures case.
\end{lemma}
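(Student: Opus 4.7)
The plan is to construct the $\FTLabel{dep}$ labels by reducing every dependent-failures query $\langle u,v,x,y\rangle$ to a problem already solved, and to do this at a cost of only $O(\log n)$ overhead on top of labels already known to fit in $O(\log^2 n)$ bits. Assuming w.l.o.g.\ (by swapping via the extended IDs) that $y$ is a descendant of $x$, I would split on whether $x$ is a light or heavy ancestor of $y$. If light, the child of $x$ toward $y$ lies in $I(y)$, so I plan to store $\FTLabel{1F}(y,\Gsub{x})$ inside $\FTLabel{dep}(y)$ and then answer the single-failure query ``$y$-connectivity inside $\Gsub{x}$'' by invoking Theorem \ref{thm:1vft}. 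If heavy (equivalently $y\in T_{\heavy(x)}$), I plan to replace $u,v$ by $xy$-analogs $\widehat u,\widehat v\in T_{\heavy(x)}$ so that the query turns into the AH-query $\langle\widehat u,\widehat v,x,y\rangle$ solved by Lemma \ref{lem:AH-2vft}.

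To make both reductions accessible from the labels, for every $b'\in I(a)$ with parent $b$, the label of $a$ would carry (i) $\FTLabel{1F}(a,\Gsub{b})$ and $\FTLabel{1F}(b',\Gsub{b})$, (ii) a small analog set $\AnSet(a,b)\subseteq T_{\heavy(b)}$ together with the $\FTLabel{AH}$ labels of its members, and (iii) the component ID $\ccid(a,G\setminus T_{\heavy(b)}^+)$. Additionally $\FTLabel{AH}(a)$ itself is included. The key is that whenever $b=x$ is a light ancestor of the queried vertex, the interesting set catches $b'$, so the right sub-label is present; and whenever $b=x$ is a heavy ancestor, the analog set supplies a usable $\widehat a$ from inside $T_{\heavy(x)}$. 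For the light case I would define $\widetilde u$ to be $u$ when the child of $x$ toward $u$ is light, and $\heavy(x)$ otherwise; in either situation the appropriate $\FTLabel{1F}(\cdot,\Gsub{x})$ appears in either $\FTLabel{dep}(u)$ or $\FTLabel{dep}(x)$, and the resulting vertex is $xy$-connected to $u$.

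The main obstacle I foresee is the corner case where $\AnSet(u,x)\setminus\{y\}=\emptyset$ in the heavy sub-case (this is exactly (C4)). Here no AH-analog is available. The plan is to turn this apparent failure into a positive structural fact: if (C4) holds then any path from $u$ avoiding $x,y$ must also avoid $T_{\heavy(x)}$, since otherwise its first entry into $T_{\heavy(x)}$ would be a forbidden non-$y$ connection certifying that the analog set is nonempty. Consequently $u$ cannot reach any vertex of $T_{\heavy(x)}\setminus\{y\}$ in $\Gsub{x,y}$, and moreover its reachability outside $T_{\heavy(x)}$ is the same as reachability in $G\setminus T_{\heavy(x)}^+$. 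If $v\in T_{\heavy(x)}$ this already forces $u,v$ to be $xy$-disconnected; otherwise the stored component IDs $\ccid(\cdot,G\setminus T_{\heavy(x)}^+)$ for both $u$ and $v$ are accessible (since the children of $x$ toward $u$ and $v$ are light, they lie in the corresponding interesting sets), so comparing them settles the query. Defining $\AnSet(a,b)$ to contain \emph{two} candidates is precisely what prevents $y$ from spoiling the non-corner case.

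Finally, to account for the length: each $b'\in I(a)$ contributes one $\FTLabel{AH}$-sized block plus two $\FTLabel{1F}$-sized blocks plus $O(1)$ extra IDs, and by Lemma \ref{lem:AH-2vft} and Theorem \ref{thm:1vft} each of these is $O(\log^2 n)$ bits. Since $|I(a)|=O(\log n)$ by Lemma \ref{lem:instersting_sets}, the total is $O(\log^3 n)$ bits. Combined with the correctness of the two reductions above, and the fact that one can detect at query time (via extended IDs) which sub-case applies, this establishes the lemma.
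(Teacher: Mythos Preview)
Your proposal is correct and follows essentially the same route as the paper: the same label contents (Algorithm~\ref{alg:dep-label}), the same light/heavy split on the ancestry of $x$ over $y$, the same $\widetilde u$ trick in the light case and the same analog-set reduction to the (AH) case in the heavy one, including the (C4) corner case handled via $\ccid(\cdot, G\setminus T_{\heavy(x)}^+)$. The only minor imprecision is in your size accounting (``one $\FTLabel{AH}$-sized block'' per $b'$ should be up to two, for the members of $\AnSet(a,b)$), but this does not affect the $O(\log^3 n)$ bound.
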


By combining the $\FTLabel{dep}$ labels of Lemma \ref{lem:dep-2vft} with the  $\FTLabel{ind}$ labels of Lemma \ref{lem:ind-2vft}, we obtain the $2$-VFT labels of Theorem \ref{thm:2vft}. The preprocessing time analysis, which completes the proof of Theorem \ref{thm:2vft}, is found in Appendix \ref{sec:missing-proofs}.

\def\APPENDTIMETWOF{
\paragraph{Preprocessing Time for $2$-VFT Labels (Theorem \ref{thm:2vft}).}
We start by computing a $3$-vertex connectivity certificates $G' \subseteq G$ such that $|E(G')|=O(n)$, which can be done in $O(m)$ time by \cite{cheriyan1993scan}.
We then apply the labeling algorithm on the graph $G'$.
The first step is to compute the single-source $2$-VFT labels of \cite{Choudhary16}, which can be done in $O(|V(G')| \cdot |E(G')|) = O(n^2)$ time overall \cite{ChoudharyPersonal}.
Next, it is easy to see that each label $\FTLabel{ind} (v)$ or $\FTLabel{dep} (v)$ can be computed by applying $\widetilde{O}(1)$ connectivity algorithms and specific $1$-VFT label constructions, each taking $\widetilde{O}(n)$ time.
This completes the proof of Theorem \ref{thm:2vft}.
}

\section{Sublinear $f$-VFT Labels}\label{sec:sublinear}

In this section, we provide an $f$-VFT labeling scheme with sublinear size for any $f=o(\log\log n)$. Note that labels of near-linear size are directly obtained by the $f$-EFT labeling scheme of \cite{DoryP21} (e.g., by including in the labels of a vertex, the EFT-labels of all its incident edges). We show:

\begin{theorem}[$f$-VFT Labels with Sublinear Size]\label{thm:sublinear-labels}
For every $n$-vertex graph $G=(V,E)$ and fixed parameter $f=o(\log\log n)$, there is a polynomial time randomized algorithm for computing $f$-VFT labels of size $\widetilde{O}(n^{1-1/2^{f-2}})$. For every query $\langle u,v, F \rangle$ for $F \subseteq V$, $|F|\leq f$, the correctness holds w.h.p.
\end{theorem}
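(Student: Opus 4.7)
My plan is to prove Theorem \ref{thm:sublinear-labels} by combining an $f$-vertex-connectivity certificate with a recursive degree-thresholding construction that reduces an $f$-VFT query to either an $(f{-}1)$-VFT query on a vertex-deleted subgraph or an $f$-EFT query on a sparse subgraph. Letting $L(f,n)$ denote the target label length, the recursion bottoms out at $L(2,n) = \widetilde{O}(1)$ from Theorem \ref{thm:2vft}, and the stated exponent $1 - 1/2^{f-2}$ arises by iterating the recurrence $L(f,n) = \widetilde{O}\bigl(\sqrt{n \cdot L(f-1,n)}\bigr)$, which stays sublinear precisely when $f = o(\log\log n)$.

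First I would invoke the Cheriyan--Thurimella $f$-vertex-connectivity certificate \cite{cheriyan1993scan} to replace $G$ by a subgraph $H \subseteq G$ with $|E(H)| = O(fn)$ satisfying $\conn(u,v,G \setminus F) = \conn(u,v,H \setminus F)$ for every $|F| \leq f$; since $H \setminus \{h\}$ is still an $(f{-}1)$-certificate of $G \setminus \{h\}$, the recursion may proceed on vertex-deleted copies of $H$ without recomputing the certificate. Next I set a degree threshold $d_f := n^{1/2^{f-2}}$, dictated by the recurrence balance, and classify each vertex of $H$ as \emph{light} if its $H$-degree is at most $d_f$ and \emph{heavy} otherwise, so a handshake argument on $|E(H)|$ gives at most $O(fn/d_f)$ heavy vertices. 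The label of a vertex $v$ then consists of three sublabels: (i) the EFT vertex-label of $v$ in $H$ produced by \cite{DoryP21} for up to $f d_f$ edge faults; (ii) if $v$ is light, the matching EFT edge-labels of all $H$-edges incident to $v$; and (iii) for each heavy vertex $h$, the $(f{-}1)$-VFT sublabel of $v$ obtained by applying the same scheme recursively to $H \setminus \{h\}$ with fault parameter $f-1$.

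To answer a query $\langle u,v,F \rangle$ I would branch on whether $F$ intersects the heavy set, which is decidable from the labels by appending a light/heavy bit to each $\EID$. If some heavy $h \in F$ exists, I invoke the $(f{-}1)$-VFT decoder on $H \setminus \{h\}$ with failure set $F \setminus \{h\}$, using the recursive sublabels (iii) of $u$, $v$, and of each vertex of $F \setminus \{h\}$. Otherwise every $x \in F$ is light, so the sublabels (ii) of the vertices of $F$ collectively expose the EFT edge-labels of every $H$-edge incident to $F$ (at most $f d_f$ edges), and I run the EFT decoder of \cite{DoryP21} on $H$ with this edge-failure set, using the vertex-sublabels (i) of $u$ and $v$. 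In either branch the certificate guarantee translates the $H \setminus F$ answer back to $G \setminus F$. For the label size, (i) and (ii) contribute $\widetilde{O}(d_f)$ bits (absorbing $\mathrm{poly}(f,\log n)$ factors into $\widetilde{O}(\cdot)$), while (iii) contributes $\widetilde{O}\bigl((fn/d_f) \cdot L(f-1,n)\bigr)$; balancing with $d_f = \sqrt{fn \cdot L(f-1,n)}$ gives the claimed recurrence, and induction on $f$ starting from $L(2,n) = \widetilde{O}(1)$ yields the stated exponent.

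The main obstacle I anticipate is controlling the dependence of the EFT label length on the number $f d_f$ of edge faults it must tolerate: keeping the contribution of (i) and (ii) at $\widetilde{O}(d_f)$ requires the $k$-EFT labels of \cite{DoryP21} to grow at most polynomially in $k$, an overhead the $\widetilde{O}$ notation absorbs together with a polylogarithmic factor from a union bound over the $O(f) = o(\log\log n)$ recursion levels (preserving the high-probability correctness). A secondary concern is that the threshold $d_j = n^{1/2^{j-2}}$ varies across recursion levels, so the heavy/light classification must be recomputed and stored at each level, adding only another $\mathrm{poly}(f)$ bits per label and a polynomial factor to the construction time.
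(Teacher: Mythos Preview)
Your approach is essentially the same as the paper's: sparsify via an $f$-vertex-connectivity certificate, split vertices by a degree threshold, recurse on $H\setminus\{h\}$ for each heavy $h$, and fall back to the EFT labels of \cite{DoryP21} when all failures are light. Two small points are worth correcting. First, your stated threshold $d_f = n^{1/2^{f-2}}$ is inconsistent with your own balancing computation: setting $d_f = \sqrt{fn\cdot L(f-1,n)}$ with $L(f-1,n)=\widetilde{O}(n^{1-1/2^{f-3}})$ gives $d_f \approx n^{1-1/2^{f-2}}$, which is indeed the threshold the paper uses (and for $f\ge 4$ these two exponents differ). Second, the ``main obstacle'' you anticipate is not one: the EFT labels of \cite{DoryP21} (Theorem~\ref{thm:EFT-labels} in the paper) have length $O(\log^3 n)$ \emph{independent of the number of edge faults}, so parts (i) and (ii) cost $\widetilde{O}(d_f)$ bits with no caveat, and no polynomial-in-$k$ assumption is needed.
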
 

We use the EFT-labeling scheme of Dory and Parter \cite{DoryP21}, whose label size is independent in the number of faults. The correctness guarantee holds w.h.p. for a \emph{polynomial} number of queries. 
\begin{theorem}[Slight Restatement of Theorem 3.7 in \cite{DoryP21}]\label{thm:EFT-labels}
For every undirected $n$-vertex graph $G=(V,E)$, there is a randomized EFT connectivity labeling scheme with labels $\EFTLabel: V \cup E \to \{0,1\}^{\ell}$ of length $\ell=O(\log^3 n)$ bits (independent in the number of faults).
For a given triplet along w the $\EFTLabel$ labels of $u,v \in V$ and every edge set $F \subseteq E$, the decoding algorithm determines, w.h.p., if $u$ and $v$ are connected in $G \setminus F$. 
\end{theorem}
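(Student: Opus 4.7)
The plan is to use the EFT labeling scheme of Theorem \ref{thm:EFT-labels} as a black box together with a sparse $f$-vertex-connectivity certificate $H\subseteq G$ of Cheriyan--Kao--Thurimella \cite{cheriyan1993scan} with $|E(H)|=O(fn)$, and proceed by induction on $f$. The base case $f=2$ is supplied by Theorem \ref{thm:2vft} (label length $\widetilde{O}(1)$ bits). For the inductive step, assume we already have an $(f-1)$-VFT scheme of length $B_{f-1}(n)$. Fix a threshold $t$ to be tuned later, and declare a vertex $v$ \emph{heavy} if $\deg_H(v)>t$ and \emph{light} otherwise; a degree-sum argument gives $|V_H|=O(fn/t)$, where $V_H$ is the set of heavy vertices.

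Equip every vertex and every edge of $H$ with its EFT label from Theorem \ref{thm:EFT-labels}. The label $L_f(v)$ of a vertex $v$ bundles three ingredients. First, its own EFT label $\EFTLabel(v)$. Second, if $v$ is light, the EFT labels $\EFTLabel(e)$ of all of its $O(t)$ incident $H$-edges, contributing $O(t\log^3 n)$ bits. Third, for every heavy vertex $h\in V_H$, the $(f-1)$-VFT label of $v$ constructed recursively inside the graph $G\setminus\{h\}$, contributing $|V_H|\cdot B_{f-1}(n)=O(fn B_{f-1}(n)/t)$ bits. Balancing the two dominant terms by choosing $t=\Theta(\sqrt{fn\,B_{f-1}(n)})$ gives the recurrence $B_f(n)=\widetilde{O}\bigl(\sqrt{n\,B_{f-1}(n)}\bigr)$. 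Unfolding with $B_2(n)=\widetilde{O}(1)$ yields $B_f(n)=\widetilde{O}\bigl(n^{1-1/2^{f-2}}\bigr)$, which is sublinear exactly when $2^{f-2}=o(\log n)$, i.e.\ when $f=o(\log\log n)$.

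Decoding a query $\langle u,v,F\rangle$ splits $F=F_H\sqcup F_L$ with $F_H=F\cap V_H$ and $F_L=F\setminus V_H$. If $F_H=\emptyset$, every failed vertex is light, so the set $F_e$ of $H$-edges incident to some vertex of $F_L$ has $|F_e|\le ft$; the labels $L_f(w)$ of the vertices $w\in F_L$ collectively expose $\EFTLabel(e)$ for every $e\in F_e$, and feeding $u,v$ and $F_e$ to the EFT decoder of Theorem \ref{thm:EFT-labels} returns $\conn(u,v,H\setminus F_e)$, which equals $\conn(u,v,G\setminus F)$ by the certificate property. If $F_H\neq\emptyset$, pick any $h\in F_H$ and invoke the inductive decoder on the ambient graph $G\setminus\{h\}$ with faulty set $F\setminus\{h\}$ of size $\le f-1$; the required $(f-1)$-VFT labels of $u$, $v$, and each $w\in F\setminus\{h\}$ in $G\setminus\{h\}$ are exactly the recursive items stored under the heavy vertex $h$ in each of their respective labels.

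The main obstacle I anticipate is verifying the sparse-certificate guarantee $\conn(u,v,H\setminus F)=\conn(u,v,G\setminus F)$ for all vertex sets $|F|\le f$ (not merely for the global $f$-connectivity of $H$ itself); this is exactly what is proved in \cite{cheriyan1993scan} for scan-first-search certificates, and I would simply cite it. The remaining formalities are routine: the w.h.p.\ correctness of Theorem \ref{thm:EFT-labels} survives the induction because each $f$-VFT query unfolds into at most $f=o(\log\log n)$ recursive calls and hence $O(f)$ EFT invocations per top-level query, so a polynomial query stream still triggers only polynomially many EFT queries and the union bound suffices; and preprocessing runs in polynomial time because each level builds $O(fn/t)$ copies of an $(f-1)$-VFT scheme on $n$-vertex subgraphs, recursing $f-2$ times.
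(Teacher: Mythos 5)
The statement you were asked about is Theorem~\ref{thm:EFT-labels}, which asserts the existence of an \emph{edge}-fault-tolerant (EFT) connectivity labeling scheme with $O(\log^3 n)$-bit labels independent of the number of faults. This is a result imported from Dory and Parter \cite{DoryP21} (their Theorem~3.7); the present paper cites it as a black box and does not prove it. Your proposal does not prove this statement at all---instead it \emph{uses} Theorem~\ref{thm:EFT-labels} as a black-box ingredient and derives the sublinear $f$-VFT labeling scheme of Theorem~\ref{thm:sublinear-labels}. In other words, you have proved a downstream consequence of the target theorem rather than the theorem itself.

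For what it is worth, your derivation of Theorem~\ref{thm:sublinear-labels} does closely mirror the paper's Section~\ref{sec:sublinear}: both use a sparse $f$-vertex-connectivity certificate \cite{cheriyan1993scan}, both split vertices by degree (your threshold $t=\Theta(\sqrt{fn\,B_{f-1}(n)})$ balances to the paper's explicit $\Delta = 2f\,n^{1-1/2^{f-2}}$), both store recursive $(f-1)$-VFT labels in $G\setminus\{h\}$ for each heavy $h$, and both decode by casing on whether a heavy vertex fails. So if the target had been Theorem~\ref{thm:sublinear-labels}, the proposal would be essentially correct and aligned with the paper. But to prove Theorem~\ref{thm:EFT-labels} itself one would need the machinery internal to \cite{DoryP21}: linear graph sketches with XOR-decodable edge samples, $\epsilon$-biased unique edge identifiers for collision avoidance, and a spanning-tree aggregation (via heavy-path or cut-equivalence structure) that lets one recover replacement edges across any failed edge set---none of which appears in your proposal. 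The recursive degree-splitting idea is not a substitute; it reduces vertex faults to edge faults, but it does not construct the edge-fault labels that the theorem promises.
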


\paragraph{The Labels.} Our starting observation is that one can assume, w.l.o.g., that $|E(G)|\leq fn$ edges. This holds as it is always sufficient to apply the labeling scheme on the sparse $f$ (vertex) connectivity certificate of $G$, which has at most $f n$ edges, see e.g., \cite{cheriyan1993scan}. The labeling scheme is inductive where the construction of $f$-VFT labels is based on the construction of $(f-1)$ VFT labels given by the induction assumption. For the base of the induction ($f=2$), we use the $2$-VFT labels of Theorem \ref{thm:2vft}. The approach is then based on dividing the vertices into high-degree and low-degree vertices based on a degree threshold $\Delta=2f \cdot n^{1-1/2^{f-2}}$. Formally, let $V_H$ be all vertices with degree at least $\Delta$. 
By our assumption, the number of high-degree vertices is at most $|V_H|\leq O(f n/ \Delta)$. Letting $\EFTLabel(\cdot)$ denote that $f$-EFT labeling scheme of Theorem \ref{thm:EFT-labels} by \cite{DoryP21}, the $f$-VFT label of $v$ is given by Algorithm \ref{alg:f-vft-label}.
\begin{algorithm}[!h]
	\caption{Construction of label $\VFTLabel{f} (v)$ of for vertex $v$}\label{alg:f-vft-label}
	\textbf{store} $\EFTLabel (v)$\;
	\For{each $x \in V_H$}{
		\textbf{store} $\VFTLabel{f-1} (v, \Gsub{x})$\;
	}
	\If{$v \in V \setminus V_H$}{
		\textbf{store} $\EFTLabel(e = (u,v))$ for every adjecent edge $(u,v) \in G$\;
	}
\end{algorithm}

\paragraph{The Decoding Algorithm.} Consider a query $\langle u,v, F \rangle \in V \times V \times V^{\leq f}$. We distinguish between two cases, based on the degrees of the faults $F$ in the graph $G$. Assume first that there exists at least one high-degree vertex $x \in F \cap V_H$. In this case, the labels of every $w \in \{u,v\} \cup (F \setminus \{x\})$ includes the $(f-1)$ VFT label in $G \setminus \{x\}$, namely, $\VFTLabel{f-1}(v,G \setminus \{w\})$. We can then determine the $F$-connectivity of $u$,$v$ using the decoding algorithm of the $(f-1)$-VFT labels (given by the induction assumption). It remains to consider the case where all vertices have low-degrees. In this case, the VFT-labels include the EFT-labels of $u,v$, and all failed edges, incident to the failed vertices. This holds as the label of every failed vertex $x \in F$ contains $\EFTLabel(e=(x,z))$ for each of its incident edges $(x,z)$ in $G$. This allows us to apply the decoding algorithm of Theorem \ref{thm:sublinear-labels} in a black-box manner. 

\paragraph{Label Size.}
We now turn to bound the label size. For every $f \leq n$, let $\sigma_V(f, n), \sigma_E(n)$ be an upper bound on $f$-VFT (resp., EFT) labels for $n$-vertex graphs. Assume by induction on $g \leq f-1$ that 
\begin{equation}\label{eq:sublinear-recurse}
\sigma_V(g, n)=2^{g-2}\cdot g \cdot n^{1-1/2^{g-2}}\cdot c\cdot\log^3 n~,
\end{equation}
where $c\cdot \log^3 n$ is the bound on the EFT labels of Theorem \ref{thm:EFT-labels}.
This clearly holds for $g=2$ (by Theorem \ref{thm:2vft}). Denote the length of the $f$-VFT label for vertex $v$ by $|\VFTLabel{f} (v)|$. By taking $\Delta(f,n)=2f \cdot n^{1-1/2^{f-2}}$ to be the degree threshold $\Delta$ in our $f$-VFT label construction, and using Eq. (\ref{eq:sublinear-recurse}), we have:
\[
|\VFTLabel{f} (v)| \leq \sigma_V(f-1, n-1) \cdot (2nf/\Delta(f,n)) + \Delta(f,n)\cdot \sigma_E(f, n)~ \leq \sigma_V(f, n)~.
\]
This satisfies the induction step and provides a bound of $\sigma_V(f, n)=\widetilde{O}( n^{1-1/2^{f-2}})$ for every $f =o(\log\log n)$, as desired. Theorem \ref{thm:sublinear-labels} follows.

\paragraph{Acknowledgments.}
We would like to thank Michal Dory for useful discussions.

\bibliographystyle{alpha}
\bibliography{dist-cut}

\newpage

\appendix
\section{Preprocessing Times}\label{sec:missing-proofs}

\APPENDTIMEONEF


\APPENDTIMETWOF
\section{Distributed Computation of Labels and Cut Vertices}\label{sec:distributed-comp}

We now turn to describe distributed computation of our $1$-VFT labels in the congest model of distributed computing \cite{Peleg:2000}. In this model, the communication is abstracted as an $n$-vertex graph $G=(V,E)$. Each vertex holds a unique identifier of $O(\log n)$ bits. The algorithm works in synchronous rounds, where in each round, a vertex can exchange $O(\log n)$-bit messages with each of its neighbors. In \cite{DistCut2022}, the authors describe a distributed algorithm for detecting single vertex cuts. The exact same algorithm can also provide $1$-VFT labels. For the sake of completeness, we describe the algorithm here. 

\subsection{Graph Sketches}
The algorithm is based on the notion of \emph{graph sketches}. The implementation of the graph sketches technique  calls for providing each of the graph edges with \emph{convenient} identifiers that ease the subsequent connectivity algorithms (in which these sketches get used). We start by defining these identifiers and then provide the formal sketch definition. 

\paragraph{Extended Edge Identifiers \cite{GhaffariP16,DoryP21}.} The decoding of the sketch information requires one to distinguish between an identifier of a single edge to the bitwise XOR of several edges. For this purpose, we define for each edge $e$ an extended edge identifier $\EID_T(e)$ that allows distinguishing between these cases, and serves as the identifier of the edge. The extended edge identifier $\EID_T(e)$ consists of a (randomized) unique distinguishing identifier $\UID(e)$, as well as additional tree related information. The computation of $\UID(e)$ is based on the notion of $\epsilon$-\emph{bias} sets \cite{naor1993small}. The construction is randomized and guarantees that, w.h.p., the XOR of the $\UID$ part of each given subset of edges $S \subseteq E$, for $|S|\geq 2$, is not a legal $\UID$ identifier of any edge. Let $\XOR(S)$ be the bitwise XOR of the extended identifiers of edges in $S$, i.e., $\XOR(S)=\oplus_{e \in S} \EID_T(e)$. In addition, let $\XOR_U(S)=\oplus_{e \in S} \UID(e)$. 
\begin{lemma}[Modification of Lemma 2.4 in \cite{GhaffariP16}]
\label{cl:epsbias}
There is an $\widetilde{O}(D)$-round algorithm that sends all nodes  a random seed $\mathcal{S}_{ID}$ of $O(\log^2 n)$ bits. Using this seed, all nodes locally define a (consistent) collection $\mathcal{I}=\{\UID(e_1), \ldots, \UID(e_{M})\}$ of $M=\binom{n}{2}$ random identifiers for all possible edges $(u,v)$, each of $O(\log n)$-bits. These identifiers are such that for each subset $E' \subseteq E$, where $|E'|\neq 1$, we have $\Pr[\XOR_U(E') \in \mathcal{I}] \leq 1/n^{10}$. In addition, given the identifiers $\ID(u), \ID(v)$ of the edge $e=(u,v)$ endpoints, and the seed $\mathcal{S}_{ID}$, one can determine $\UID(e)$ in $\widetilde{O}(1)$ time.
\end{lemma}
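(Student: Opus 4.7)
The plan is to instantiate the $\UID$ identifiers using a standard $\epsilon$-biased sample space. Recall that a distribution $\mathcal{D}$ on $\{0,1\}^N$ is $\epsilon$-biased if for every nonempty $T \subseteq [N]$, $\bigl| \Pr_{X \sim \mathcal{D}} [ \bigoplus_{i \in T} X_i = 0 ] - 1/2 \bigr| \leq \epsilon$. The classical constructions give such a distribution using a seed of only $O(\log(N/\epsilon))$ bits, with the additional property that any single bit $X_i$ can be extracted from the seed and the index $i$ in $\mathrm{poly}(\log N, \log(1/\epsilon))$ time. I use this as a black box, taking $N = M = \binom{n}{2}$ and $\epsilon = 1/n^{20}$, so that each independent $\epsilon$-biased string costs $O(\log n)$ seed bits. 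After fixing a canonical ordering of the $\binom{n}{2}$ unordered pairs of vertex IDs so that each potential edge $e = (u,v)$ has a position $i(e)$ computable from $\ID(u), \ID(v)$ in $\widetilde{O}(1)$ time, I will sample $L = 12\log n$ independent $\epsilon$-biased strings $X^{(1)}, \dots, X^{(L)} \in \{0,1\}^M$ (total seed length $O(\log^2 n)$) and set $\UID(e) = (X^{(1)}_{i(e)}, \dots, X^{(L)}_{i(e)})$. This directly yields both the $O(\log n)$-bit label size and the $\widetilde{O}(1)$-time local evaluation.

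The key collision bound is obtained by a columnwise argument. Fix $E' \subseteq E$ with $|E'| \neq 1$ and any candidate $e^* \in V \times V$, and let $S = E' \triangle \{e^*\}$. Because $|E'| \neq 1$, the set $S$ is never empty: if $|E'| = 0$ then $|S| = 1$, while if $|E'| \geq 2$ then $|S| \in \{|E'|-1, |E'|+1\}$, which is at least $1$. For each $k \in \{1, \dots, L\}$, the $k$-th bit of $\XOR_U(E') \oplus \UID(e^*)$ is the parity of $\{X^{(k)}_{i(e)}\}_{e \in S}$ over this nonempty index set, so by the $\epsilon$-bias property it equals $0$ with probability at most $1/2 + \epsilon$. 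Since the $L$ columns are sampled independently, the events across $k$ are mutually independent, and thus $\Pr[\XOR_U(E') = \UID(e^*)] \leq (1/2+\epsilon)^L$. Union-bounding over the $M = O(n^2)$ candidate identifiers in $\mathcal{I}$ yields $\Pr[\XOR_U(E') \in \mathcal{I}] \leq O(n^2) \cdot 2^{-L}(1+o(1)) \leq n^{-10}$ for the chosen $L$.

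For the distributed phase, I will build a BFS tree in $O(D)$ rounds, let the root sample $\mathcal{S}_{ID}$ locally, and then pipeline the $O(\log^2 n)$-bit seed downward in $\widetilde{O}(D)$ rounds by splitting it into $O(\log n)$ words of $O(\log n)$ bits each. Afterwards every node can evaluate $\UID(\cdot)$ locally in $\widetilde{O}(1)$ time. The only technical ingredient I rely on is the existence of $\epsilon$-biased spaces with polylog-time random access; the only real subtlety in the plan is calibrating $\epsilon$ and $L$ so that the union bound over $\mathcal{I}$ leaves the required $1/n^{10}$ slack while keeping the total seed length $O(\log(M/\epsilon)) \cdot L = O(\log^2 n)$.
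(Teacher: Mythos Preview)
The paper does not provide its own proof of this lemma; it simply cites it as a modification of Lemma~2.4 in \cite{GhaffariP16}, noting only that ``the computation of $\UID(e)$ is based on the notion of $\epsilon$-bias sets \cite{naor1993small}.'' Your proposal is a correct fleshing-out of exactly that construction: taking $L=\Theta(\log n)$ independent $\epsilon$-biased strings over $[M]$ with $\epsilon = n^{-\Theta(1)}$, reading off the $i(e)$-th coordinate of each to form $\UID(e)$, and using the columnwise parity bound together with a union bound over $\mathcal{I}$ to get the $n^{-10}$ collision probability. The symmetric-difference trick ensuring $S = E' \triangle \{e^*\}$ is nonempty whenever $|E'| \neq 1$ is the right observation, and the seed-length and pipelined-broadcast analyses are standard and accurate. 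In short, your argument is correct and is precisely the intended approach behind the citation.
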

The extended identifier $\EID_T(e)$ is given by
\begin{equation}\label{eq:extend-ID}
\EID_T(e)=[\UID(e), \ID(u), \ID(v), \LCALabel_T(u), \LCALabel_T(v)]~.
\end{equation}
The identifiers of $\ID(u), \ID(v)$ are used in order to verify the validity of the unique identifier $\UID(e)$.  When the tree $T$ is clear from the context, we might omit it and simply write $\EID(e)$. 

\paragraph{Defining Graph Sketches.} 
Graph sketches are a tool to identify outgoing edges. The sketches are based on random sub-sampling of the graph edges with logarithmic number of scales, i.e., with probability of $2^{-j}$ for every $j \in [0,\log m]$.  We follow \cite{DuanConnectivityArxiv16,DuanConnectivitySODA17} and use pairwise independent hash functions to decide whether to include edges in sampled sets.

Choose $L=c\log n$ pairwise independent hash functions $h_1, \ldots, h_{L}:\{0,1\}^{\Theta(\log n)} \to \{0, \ldots, 2^{\log m}-1\}$, and for each $i \in \{1, \ldots, L\}$ and $j \in [0,\log m]$, define the edge set 
$E_{i,j} =\{ e \in E ~\mid h_i(e) \in [0,2^{\log m-j})\}~.$
Each of these hash functions $h_i$ can be defined using a random seed $\mathcal{S}_{h_i}$ of logarithmic length \cite{TCS-010}. Thus, a 
random seed $\mathcal{S}_{\mathcal{H}}$ of length $O(L \log n)$ can be used to determine the collection of all these $L$ functions. 

As observed in \cite{DuanConnectivityArxiv16,GibbKKT15}, pairwise independence is sufficient to guarantee that for any non-empty set $E' \subset E$ and any $i$, there exists an index $j$, such that with constant probability $\XOR(E' \cap E_{i,j})$ is the name (extended identifier) of one edge in $E'$, for a proof see Lemma 5.2 in  \cite{GibbKKT15}.
Hence, by exploiting the structure of the extended IDs, it holds:
\begin{lemma}\label{lemma_unique}
(i) [Lemma 5.2 in  \cite{GibbKKT15}] For any non-empty edge set $E'$ and any $i$, with constant probability there exists a $j$ satisfying that $|E' \cap E_{i,j}|=1$; (ii) Given the seed $\mathcal{S}_{ID}$, one can determine in $\widetilde{O}(1)$ time if $\XOR(E' \cap E_{i,j})$ corresponds to a single edge ID in $G$ or not, w.h.p.
\end{lemma}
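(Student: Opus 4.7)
\medskip
\noindent\textbf{Proof proposal for Lemma \ref{lemma_unique}.} Part (i) is cited verbatim from \cite{GibbKKT15}, so the plan is to invoke it as a black box; the only work is for part (ii). The plan is to exploit the rigid block-structure of the extended identifiers defined in Eq.~(\ref{eq:extend-ID}): since XOR is computed bit-wise and the five components $[\UID(e), \ID(u), \ID(v), \LCALabel_T(u), \LCALabel_T(v)]$ live in disjoint, fixed bit-positions, the value $\XOR(E' \cap E_{i,j})$ naturally splits into five XORed ``candidate'' blocks, which we denote $\sigma_U, \sigma_1, \sigma_2, \sigma_3, \sigma_4$. A single decoder given $\mathcal{S}_{ID}$ will treat $(\sigma_1,\sigma_2)$ as a candidate pair of endpoint IDs, use Lemma~\ref{cl:epsbias} to locally recompute the predicted $\UID$ for the hypothetical edge $(\sigma_1,\sigma_2)$ in $\widetilde{O}(1)$ time, and accept iff this predicted $\UID$ equals $\sigma_U$ and, additionally, $(\sigma_3,\sigma_4)$ matches the $\LCALabel_T$-pair of the endpoints named by $(\sigma_1,\sigma_2)$.

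The correctness argument then splits into two cases. If $|E' \cap E_{i,j}|=1$, say the unique surviving edge is $e=(u,v)$, then every one of the five blocks equals the corresponding field of $\EID_T(e)$, the decoder's predicted $\UID$ is exactly $\UID(e)$, and acceptance follows. If instead $|E' \cap E_{i,j}|\geq 2$, then by Lemma~\ref{cl:epsbias} applied to the set $S = E' \cap E_{i,j}$, the XORed UID block $\sigma_U = \XOR_U(S)$ fails to belong to the family $\mathcal I$ of legitimate UIDs, with probability at least $1-1/n^{10}$; in particular it differs from the UID of any specific edge, so the consistency check fails and the decoder correctly reports ``not a single edge.''

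The last small point is to confirm the $\widetilde O(1)$ running time: reading the seed $\mathcal{S}_{ID}$ takes $O(\log^2 n)$ bits and the prediction of $\UID$ from a candidate endpoint pair via Lemma~\ref{cl:epsbias} is stated to run in $\widetilde O(1)$ time, and the remaining comparisons are on $O(\log n)$-bit blocks. A union bound over the (at most polynomially many) queries performed by subsequent algorithms gives the overall w.h.p.\ guarantee.

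The main obstacle I foresee is a purely cosmetic one: making sure the extended-ID format is used \emph{identically} inside the XOR decoder and inside the $\UID$-prediction routine of Lemma~\ref{cl:epsbias}, so that the ``predicted $\UID(e)$ vs.\ $\sigma_U$'' comparison is meaningful. This is not a genuine difficulty, but does require spelling out that the bit-layout is fixed before hashing, so that XORing preserves the field boundaries — a point I would make explicit in the write-up.
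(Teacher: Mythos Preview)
Your approach is correct and matches what the paper implies. The paper does not give a formal proof of this lemma; it only states, immediately after Eq.~(\ref{eq:extend-ID}), that ``The identifiers of $\ID(u), \ID(v)$ are used in order to verify the validity of the unique identifier $\UID(e)$,'' and then asserts the lemma. Your plan of reading off the $\ID$-blocks $\sigma_1,\sigma_2$, recomputing $\UID(\sigma_1,\sigma_2)$ from the seed via Lemma~\ref{cl:epsbias}, and comparing to $\sigma_U$ is exactly this intended verification.

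One small correction: the extra consistency check you propose on the ancestry-label blocks $(\sigma_3,\sigma_4)$ cannot actually be carried out. The lemma grants the decoder only the seed $\mathcal{S}_{\ID}$, which encodes the $\epsilon$-bias randomness but not the tree $T$ or the labels $\LCALabel_T(\cdot)$; there is no way to recompute $\LCALabel_T(\sigma_1),\LCALabel_T(\sigma_2)$ from $\mathcal{S}_{\ID}$ alone. Fortunately this check is also unnecessary: the $\UID$-consistency test already suffices, since for $|S|\ge 2$ Lemma~\ref{cl:epsbias} guarantees $\sigma_U\notin\mathcal{I}$ w.h.p., whereas $\UID(\sigma_1,\sigma_2)\in\mathcal{I}$ whenever $(\sigma_1,\sigma_2)$ names a pair of vertices (and if it does not, one can reject outright). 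So simply drop the $(\sigma_3,\sigma_4)$ check from your write-up; the ancestry labels are included in $\EID_T(e)$ for a different purpose (identifying which component an endpoint lies in during the Bor\r{u}vka simulation), not for validity testing.
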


For each vertex $v$ and indices $i,j$, let $E_{i,j}(v)$ be the edges incident to $v$ in $E_{i,j}$. 
The $i^{th}$ \emph{basic sketch unit} of each vertex $v$ is then given by:
\begin{equation}
\label{eq:vsketch}
\Sketch_{G,i}(v)=[\XOR(E_{i,0}(v)),\ldots,\XOR(E_{i,\log m}(v))].
\end{equation}
The sketch of each vertex $v$ is defined by a concatenation of $L=\Theta(\log n)$ basic sketch units: 
\[
\Sketch_G(v)=[\Sketch_{G,1}(v),\Sketch_{G,2}(v), \ldots\Sketch_{G,L}(v)]~.
\]
For every subset of vertices $S$, let 
$\Sketch_G(S)=\oplus_{v \in S}\Sketch_G(v).$ When the graph $G$ is clear from the context, we may omit it and write $\Sketch_{i}(v)$ and $\Sketch(v)$. 

\begin{lemma}\label{lem:sketch-property}
For any subset $S$, given one basic sketch unit $\Sketch_i(S)$ and the seed $\mathcal{S}_{ID}$ one can compute, with constant probability, an outgoing edge $E(S, V \setminus S)$ if such exists. The complexity is $\widetilde{O}(1)$ time.
\end{lemma}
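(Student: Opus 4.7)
The plan is to observe that the sketch construction is engineered so that XOR-ing the basic sketch units over a vertex set $S$ cancels internal edges and leaves only a summary of the cut $E(S, V \setminus S)$, after which the pairwise-independent sampling guarantee of Lemma \ref{lemma_unique}(i) together with the extended-ID decoding of Lemma \ref{lemma_unique}(ii) recovers a single cut edge.

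First, I would unpack the definition of $\Sketch_i(S) = \bigoplus_{v \in S} \Sketch_i(v)$ coordinate-wise in $j \in [0, \log m]$. At coordinate $j$ we have $\bigoplus_{v \in S} \XOR(E_{i,j}(v))$, and every edge $e = (u,w)$ with both endpoints in $S$ contributes $\EID_T(e)$ exactly twice to this XOR and hence cancels, while every edge with exactly one endpoint in $S$ contributes exactly once. Writing $E' = E(S, V \setminus S)$, this yields
\begin{equation*}
\Sketch_i(S)[j] \;=\; \XOR(E' \cap E_{i,j}).
\end{equation*}
So a single basic sketch unit of $S$ is precisely the tuple $(\XOR(E' \cap E_{i,j}))_{j=0}^{\log m}$.

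Next, assuming $E' \neq \emptyset$, I would invoke Lemma \ref{lemma_unique}(i): with constant probability, there exists an index $j^\star$ such that $|E' \cap E_{i,j^\star}| = 1$. At such an index, $\Sketch_i(S)[j^\star]$ equals $\EID_T(e^\star)$ for a single cut edge $e^\star \in E'$. The decoding algorithm therefore iterates over all $j \in \{0,\ldots,\log m\}$, and for each $j$ uses Lemma \ref{lemma_unique}(ii) (which requires only the seed $\mathcal{S}_{ID}$) to test whether $\Sketch_i(S)[j]$ is a valid single-edge extended ID; if it is, the algorithm outputs the corresponding edge. By part (ii), the test itself errs with probability at most $1/n^{10}$, so by a union bound over the $O(\log m)$ indices the procedure returns an actual edge of $E'$ whenever some $j^\star$ succeeds, which happens with constant probability.

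Finally, the complexity is immediate: there are $O(\log m) = O(\log n)$ coordinates to scan, and each decoding test costs $\widetilde{O}(1)$ by Lemma \ref{lemma_unique}(ii), giving $\widetilde{O}(1)$ overall. I do not anticipate any real obstacle here; the one subtlety worth stating carefully is the cancellation argument (the fact that extended IDs are XOR-ed identically at both endpoints of an internal edge), which is why $\EID_T(e)$ was designed to be a single, edge-intrinsic string that both endpoints of $e$ agree on.
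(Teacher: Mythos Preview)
Your argument is correct and is exactly the standard proof of this property of graph sketches; the paper itself states Lemma~\ref{lem:sketch-property} without proof, treating it as a well-known consequence of the sketch construction and of Lemma~\ref{lemma_unique}. Your write-up supplies precisely the missing details (the XOR cancellation of internal edges, then invoking parts (i) and (ii) of Lemma~\ref{lemma_unique}), so there is nothing to compare against.
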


\subsection{The Distributed Algorithms}
We are now ready to describe the distributed label construction. Throughout, let $T$ be a BFS tree rooted at some arbitrary vertex $s$. The latter can be computed in $O(D)$ rounds. 
By Lemma 2.1 in \cite{Censor-HillelD17}, we have:
\begin{lemma}\label{lem:dist-HL-anc}
Given a tree $T$, there is an $O(D(T))$-round algorithm for computing ancestry labels $\LCALabel_{T}(v)$ of $O(\log^2 n)$ bits for every $v \in V$, where $D(T)$ is the depth of the tree $T$.
\end{lemma}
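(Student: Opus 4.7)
}
The plan is to produce DFS pre-order interval labels $\LCALabel_T(v) = (\mathrm{in}(v), \mathrm{out}(v))$, based on subtree sizes, using two sweeps along $T$. These intervals are the standard ancestry labels of Kannan--Naor--Rudich: $a$ is an ancestor of $b$ in $T$ iff $\mathrm{in}(a) \le \mathrm{in}(b) \le \mathrm{out}(a)$. Each coordinate lies in $[1,n]$, so a label takes only $O(\log n)$ bits, comfortably within the claimed $O(\log^2 n)$ budget.

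First, I would compute subtree sizes bottom-up. Each leaf sends $|T_v|=1$ to its parent. Each internal vertex waits to receive $|T_{c}|$ from every child $c$, computes $|T_v| = 1 + \sum_c |T_c|$, and forwards this to its parent. Since a vertex at depth $d$ from the deepest leaf in its subtree finishes after $d$ rounds, the entire phase terminates in at most $D(T)$ rounds, each message being $O(\log n)$ bits.

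Second, I would propagate the DFS intervals top-down. The root $s$ sets $(\mathrm{in}(s), \mathrm{out}(s)) = (1, n)$. Once a vertex $v$ has received its own interval $(\mathrm{in}(v), \mathrm{out}(v))$, it fixes an arbitrary consistent ordering of its children (e.g.\ by their unique IDs) $c_1, \dots, c_k$, and computes for each $c_i$ the interval
\[
\mathrm{in}(c_i) \;=\; \mathrm{in}(v) + 1 + \sum_{j<i}|T_{c_j}|, \qquad \mathrm{out}(c_i) \;=\; \mathrm{in}(c_i) + |T_{c_i}| - 1,
\]
and sends this pair to $c_i$ in a single round (an $O(\log n)$-bit message). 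Since each level of $T$ is processed in one round, this phase also terminates in at most $D(T)$ rounds. Hence the total round complexity is $O(D(T))$, and each $\LCALabel_T(v)$ is an $O(\log n)$-bit interval that supports ancestry queries as required.

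The only mild subtlety is that a vertex with large degree must partition its interval among all its children correctly despite transmitting only $O(\log n)$ bits per edge per round; this is not an obstacle because each child receives only its own pair $(\mathrm{in}(c_i), \mathrm{out}(c_i))$, and the partial sums $\sum_{j<i}|T_{c_j}|$ are computed locally at $v$ after the bottom-up phase. Thus both phases respect the CONGEST bandwidth and finish within $O(D(T))$ rounds.
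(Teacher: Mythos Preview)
Your proposal is correct: the two-sweep convergecast/broadcast computing subtree sizes and then DFS-interval labels is the standard way to obtain Kannan--Naor--Rudich ancestry labels in the CONGEST model, and your accounting of rounds and message sizes is accurate. Note, however, that the paper does not give its own proof of this lemma at all --- it simply invokes Lemma~2.1 of \cite{Censor-HillelD17} as a black box --- so there is nothing to compare against beyond observing that your explicit construction is exactly the kind of argument that citation stands in for.
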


\paragraph{Step (0): Computation of Heavy-Light Tree Decomposition and Extended IDs.} Given the BFS tree $T$, the heavy-light tree decomposition can be computed in $O(D)$ rounds. In the distributed output format, each vertex $v$ knows its heavy child in $T$. By propgating tree edges downwards the tree, each vertex $v$ can learn its tree path $T[s,v]$ as well as to identify the light edges (and consequently also the heavy edges) on this path. 

It is easy to see that this can be done simply by computing the size of subtrees $|T_v|$ for each vertex $v$. This allows $v$ to determine its heavy child in $T$. Using Lemma \ref{lem:dist-HL-anc}, each vertex can obtain its ancestry label $\LCALabel_{T}(v)$. To complete the computation of the extended edge-IDs, the source $s$ samples a random seed $\mathcal{S}_{ID}$ of $\widetilde{O}(1)$ bits and share it with all vertices. Then, using \Cref{cl:epsbias}, each vertex $v$ can then locally compute the unique edge-ID $\UID(e)$ for each of its incident edges. 

\paragraph{Step (1): Computation of Subtree Sketches.}
The source $s$ locally samples the collection of $L=O(\log n)$ random seeds $\mathcal{S}_{\mathcal{H}} = \{\mathcal{S}_{h} \mid h \in \{1,\ldots, L\}\}$, and send it to all the vertices. These seeds provide all the required randomness for the computation of the sketch information. By aggregating the individual sketch information $\alpha_u=\Sketch_G(u)$ for every $u \in V$, from the leaf vertices on $T$ up to the root $s$, each vertex $v$ obtains its subtree sketch, given by $\Sketch_G(V(T_v))=\oplus_{u \in T_v} \alpha_u$. By the end of this sketch aggregation process, the source $s$ broadcasts its output sketch, namely $\Sketch_G(V)$, to all the vertices in the graph.

\paragraph{Step (2): Local Connectivity Computation.} This step is locally applied at every vertex $x$, and requires no additional communication. We show that each vertex $x$, given the received sketch information in Step (1), can locally simulate the Bor\r{u}vka's algorithm \cite{Boruvka} in the graph $G \setminus \{x\}$. Consequently, it can determine if $G \setminus \{x\}$ is connected, as well as assigning each of its children a unique ID of their connected component in $G \setminus \{x\}$. Let $u_1, \ldots, u_{d_x}$ be the children of $x$ in $T$. The connected components in $T \setminus \{x\}$ are denoted by 
\[
\mathcal{C}_x=\{ V(T_{u_j}), j \in \{1,\ldots, d_x\}\} \cup \{V \setminus V(T_x)\}.
\]
Observe that while $x$ does not know all vertices in the components of $\mathcal{C}_x$, it knows (at least) one vertex in each of these components (i.e., its immediate children and $s$). This allows $x$ to assign each component in $\mathcal{C}_x$ an $O(\log n)$-bit identifier. The ID of the component containing $s$ will be throughout defined by $\ID(s)$. 

By Step (1), $x$ holds the $G$-sketch of each component in $\mathcal{C}_x$: it has explicitly received $\Sketch_G(V(T_{u_j}))$ from each child $u_j$ for every $j \in \{1,\ldots, d_x\}$. In addition, it can locally infer $\Sketch(V \setminus V(T_x))=\Sketch(V) \oplus \Sketch(V(T_x))$. To locally implement the Bor\r{u}vka's algorithm on these connected components, it is first required to update these $G$-sketches into $(G \setminus \{x\})$-sketch, as described next. 
\\
\noindent \textbf{(2.1): Obtaining sketch information in $G \setminus \{x\}$.} Recall that the seeds $\mathcal{S}_h$ determine the sampling of edges into the sketches. Therefore, as $x$ knows $\mathcal{S}_h$ for every $h \in \{1,\ldots, L\}$, as well as, the extended identifiers of its incident edges (from Step (0)), it can cancel these edges from the respective entries in the sketches of each component $C \in \mathcal{C}_x$. 
Specifically, to cancel an adjacent edges $(x,u) \in G$, it first uses the ancestry labels of the edge to determine the connected component $C$ of $u$ in $\mathcal{C}_x$.
Then, using the $\mathcal{S}_h$ seeds, it can determine all the sketch entries of $\Sketch_G(C)$ in which $(x,u)$ has been sampled.
This allows $x$ to obtain $\Sketch_{G\setminus \{x\}}(C)$ for every $C \in \mathcal{C}_x$.
\\
\noindent \textbf{(2.2): Simulating the Bor\r{u}vka algorithm in $G \setminus \{x\}$.}
The input to this step is the identifiers of the components $\mathcal{C}_{x,0}=\mathcal{C}_{x}$ in $T \setminus \{x\}$, along with their sketch information in $G \setminus \{x\}$. The desired output is to determine a unique component-ID to each child of $x$ in $G \setminus \{x\}$. The algorithm consists of $L=O(\log n)$ phases of the Bor\r{u}vka algorithm, and works in an almost identical manner to the (centralized) decoding algorithm of \cite{DoryP21}. For completeness, we fully describe it here as well.

Each phase $i \in \{1,\ldots, L\}$ will be given as input a partitioning $\mathcal{C}_{x,i}=\{C_{i,1}, \ldots, C_{i,k_i}\}$ of (not necessarily maximal) connected components in $G \setminus \{x\}$; along with the sketch information of the components $\mathcal{C}_{x,i}$ in $G \setminus \{x\}$. The output of the phase is a partitioning $\mathcal{C}_{x,i+1}$, along with their sketch information in $G \setminus \{x\}$ and the identifiers of the components for each child of $x$.

A component $C_{i,j} \in \mathcal{C}_{i,x}$ is denoted as \emph{growable} if it has at least one outgoing edge to a vertex in $V \setminus (C_{i,j} \cup \{x\})$. Letting $N_i$ denote the number of growable components in $\mathcal{C}_i$, the output partitioning $\mathcal{C}_{i+1}$ of the $i^{th}$ step guarantees that $N_{i+1}\leq c N_i /2$ in expectation for some constant $0 < c \leq 1$ (see Lemma \ref{lem:sketch-property}). To obtain outgoings edges from the growable components in $\mathcal{C}_{x,i}$, the algorithm uses the $i^{th}$ basic-unit sketch $\Sketch_i(C_{i,j})$ of each $C_{i,j} \in \mathcal{C}_{x,i}$. By Lemma \ref{lem:sketch-property}, from every growable component in $\mathcal{C}_{x,i}$, we get one outgoing edge $e'=(u,v)$ with constant probability. Using the extended edge identifier of $e'$ the algorithm can also detect the component $C_{i,j'}$ to which the second endpoint, say $v$, of $e'$ belongs using the ancestry label of the detected edge $e'$. 
That allows us to compute the component of $v$ in the initial partitioning $T \setminus \{x\}$, i.e., the component $C_{q,0}$ of $v$ in $\mathcal{C}_{x,0}$. Thus $y$ belongs to the unique component $C_{i,j'} \in \mathcal{C}_i$ that contains $C_{q,0}$. Note that it is important to use fresh randomness (i.e., independent sketch information) in each of the Boruvka phases \cite{ahn2012analyzing,kapron2013dynamic,DuanConnectivityArxiv16}. The algorithm then computes the updated sketches of the merged components. For this purpose, each Bor\r{u}vka phase $i$ is based on the sketch information obtained with a fresh seed $\mathcal{S}_h$. The sketch information for phase $i+1$ is given by XORing over the sketches of the components in $\mathcal{C}_{x,i}$ that got merged into a single component in $\mathcal{C}_{x,i+1}$. In expectation, the number of growable components is reduced by a constant factor in each phase. Thus after $L=O(\log n)$ phases, the expected number of growable components is at most $1/n^5$, and using Markov inequality, we conclude that w.h.p there are no growable components. The final partitioning $\mathcal{C}_{x,L}$ corresponds, w.h.p, to the maximal connected components in $G \setminus \{x\}$. 

This local simulation of the Bor\r{u}vka's algorithm allows $x$ to deduce the following useful information. 
First, it can determine if it is a cut-vertex which holds, w.h.p., iff $|\mathcal{C}_{x,L}|\geq 2$. In addition, $x$ can compute a unique component ID for each of its children in $G \setminus \{x\}$. The ID of the component containing $s$ is set to $ID(s)$, and the ID of any other component in $\mathcal{C}_{x,L}$ (if exists) is set to be the largest ID among all $x$'s children contained in that component. 
Within another round of communication, each vertex $v$ holds its component-ID in $G \setminus \{\parent(v)\}$. We denote this value by $\ccid(v)$, namely $\ccid(v) := \ccid(v, \Gsub{\parent(v)})$. 

\paragraph{Step (3): Propagation of Component IDs Information.} To compute the $1$-VFT labels, it is required for each vertex $v$ to learn the component-ID $\ccid(u)$ for every light edge $(\parent(u),u) \in T[s,v]$. This information is obtained by letting every light vertex $u$ propagate its $\ccid(u)$ over $T_u$. Since each vertex $v$ has $O(\log n)$ light edges on its tree path $T[s,v]$, overall, it is required to receive $O(\log^2 n)$ bits of information. This can be done in $\widetilde{O}(D)$ rounds, by a simple pipeline. 

\paragraph{Step (4): Local Computation of $1$-VFT Labels.} At this point, each vertex $v$ is equipped with all the required information to locally compute its $1$-VFT label, $\FTLabel{1F} (v)$, as specified in Sec. \ref{sec:single-fault}. 
By Step (0), $v$ knows its interesting set $I(v)$ (see Def. \ref{def:intersets}). From Step (2) it holds the component-ID of its heavy child $\ccid(\heavy(v))$; and from Step (3) it holds the component-ID $\ccid(u)$ for every light-edge $(\parent(u),u) \in T[s,v]$. Since we set the IDs of components containing $s$ to be $ID(s)$, the received component-IDs also indicate on the connectivity to $s$ in the corresponding $G \setminus \{x\}$ subgraphs. This completes the description of the algorithm. 

We are now ready to complete the proof of Theorem \ref{thm:1vft} by analyzing the round complexity of the algorithm, and discussing its centralized implementation.

The round complexity of Steps (0-1) is $\widetilde{O}(D)$ by Lemma \ref{lem:dist-HL-anc} and the aggregation of $\widetilde{O}(1)$-bit values. Step (3) can be implemented in $\widetilde{O}(D)$ by a simple pipeline mechanism. As each vertex $v$ collects the component-ID of $u$ in $G \setminus \{\parent(u)\}$ for every light edge $(\parent(u),u)$ in the path $T[s,v]$, it collects $\widetilde{O}(1)$ bits of information, which can be done in $\widetilde{O}(D)$ rounds.  

Our congest algorithm can be simulated in $\widetilde{O}(m)$ time. The computation of the $n$ sketches $\Sketch_{G}(v)$ for every vertex $v$ takes $\widetilde{O}(m)$ time. The aggregate of sketch information can be then implemented in $\widetilde{O}(n)$ time. Finally, we claim that the local computation each vertex $x$ can be done in time $\widetilde{O}(\deg(x))$ where $\deg(x)$ is the degree of $x$ in $G$. This holds, as in each phase of Bor\r{u}vka's we have $O(\deg(x))$ components, and the computation of each outgoing edge using the sketch information  takes $\widetilde{O}(1)$ time, per component. Overall, the collection of all $n$ connectivity algorithms in $G \setminus \{x\}$ for every $x \in V$ can be implemented in $\widetilde{O}(m)$ time. This completes the randomized construction time of Theorem \ref{thm:1vft}.

\section{Single-Source Dual Failure Connectivity Labels}\label{sec:single-source-dual}
In this section we provide an independent approach for single-source $2$-VFT connectivity labels, and prove the following variant of Lemma \ref{lem:ss-2vft_labels}:

\begin{lemma}\label{lem:orig_ss-2vft_labels}
	There is a \emph{single-source} 2-VFT connectivity labeling scheme with label length $O(\log^2 n)$ bits.
	That is, given a query of vertices $\langle t,x,y \rangle$ along with their labels, one can determine whether the (fixed) source $s$ and $t$ are $xy$-connected. 
\end{lemma}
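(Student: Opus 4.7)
My plan is to imitate the all-pairs approach from Section~\ref{sec:dual-fault}, but exploit the fact that one endpoint is a fixed source $s$ to collapse a logarithmic factor. The first step is to equip every vertex with the $O(\log^2 n)$-bit $1$-VFT label of Theorem~\ref{thm:1vft} and with the single-source $1$-VFT label of $s$. Using these, the decoding algorithm can check whether $s$ and $t$ are $x$-connected and $y$-connected; if one of these already fails, the answer is negative, so from now on I can assume the analog of condition (C1) together with the target disconnection (C2). Since I am keeping only $\poly(\log n)$ additive $1$-VFT labels, the total budget is still $O(\log^2 n)$ bits, which is the crucial quantitative point to monitor throughout.

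Next, as in Section~\ref{sec:dual-fault}, I split into \emph{dependent} and \emph{independent} failures. For dependent failures, assume WLOG that $x$ is an ancestor of $y$ in $T$. A short tree-path argument rules out the case where $t$ is not a descendant of $x$, so I may further assume $t \in T_x$; similarly, if $t$ is not a descendant of $y$ then the relevant connectivity in $\Gsub{x,y}$ can already be decided by a $1$-VFT query in $\Gsub{x}$ (delivered via the sublabels of $\FTLabel{dep}$ in Section~\ref{sec:reduction_to_AH}). This leaves the core configuration in which $t$ is a descendant of $y$. For this configuration, I imitate the construction of $\FTLabel{AH}$ from Section~\ref{sec:AH}: for each interesting descendant $b' \in I(a)$ with parent $b$, I store the first-hit vertex $f_{b'}$ of the replacement path $P_{b',s,b}$ on $T[s,b)$, together with the single bit $\conn(b', s, \Gsub{b, f_{b'}})$. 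By Observation~\ref{obs:the_f_vertex} and the argument in Claim~\ref{claim_par(y)}, the location of the actual failure $x$ within $T[s,y)$ either lies above $f_{y'}$ (making the replacement path fault-free and hence $s,t$ $xy$-connected), equals $f_{y'}$ (so the stored bit gives the answer directly), or lies in $T(f_{y'}, y)$, which is ruled out because otherwise the replacement path would contradict (C2). Lemma~\ref{lem:child_on_path} guarantees that the required triple $b,b',f_{b'}$ sits in $\FTLabel{}(t)$ or $\FTLabel{}(y)$.

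For independent failures, the key structural fact is that $T[s,t]$ must contain at least one of $x,y$ (else (C2) fails); by symmetry assume $x \in T[s,t]$, and let $x'$ be the child of $x$ on $T[x,t]$. Since $T[x',t]$ is fault-free by independence, it is enough to decide the $xy$-connectivity of $s$ and $x'$. I will repeat the $\FTLabel{P}$-style idea of Section~\ref{sec:dual-ind}, but specialized to a single target: for each $b' \in I(a)$ with parent $b$ I store $b, b', \ell_{b'}$ (the last vertex of $P_{s,b',b}$ outside $T_b$) and, for every $c \in I^{\uparrow}(\ell_{b'})$, just the single bit $\conn(s, b', \Gsub{b,c})$. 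When $y \in I^{\uparrow}(\ell_{x'})$ this bit is read directly; otherwise Lemma~\ref{lem:child_on_path} forces the child of $y$ on $T[y,\ell_{x'}]$ to be $\heavy(y)$, so I can swap the roles of $x$ and $y$, and the same argument with $y$ as the ancestor of $t$-less configuration recovers the answer (or certifies connectivity outright). Because the outer loop over $I(a)$ has $O(\log n)$ iterations and the inner loop over $I^{\uparrow}(\ell_{b'})$ stores only a constant number of bits per entry rather than the $O(\log n)$ bits of CIDs needed in the all-pairs labels, the total length is $O(\log^2 n)$.

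The main obstacle is the last point: squeezing down to $O(\log^2 n)$. In the all-pairs construction the inner entries have to carry full component IDs because the decoder must compare the components of $u$ and $v$; here the decoder only has to compare with $s$, which is a known fixed vertex, so a single connectivity bit per entry is enough. I expect most of the work in the actual write-up to be verifying that every structural case I collapse above is genuinely covered, in particular that the "swap $x$ and $y$" step in the independent case is symmetric given the one-sided storage pattern, and that the dependent sub-case with $y$ a light ancestor is fully handled by the embedded $1$-VFT labels without blowing up the budget.
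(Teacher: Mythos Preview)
Your sketch has two genuine gaps, both in the residual ``hard'' configuration that the replacement-path argument leaves open.

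In your dependent Case~B (so $s,x,y,t$ lie on the root-to-$t$ path in that order), the AH-style data $f_{y'}$ together with one bit is not enough. First, the ``above/below'' cases are reversed: $P_{y',s,y}=Q\circ T[f_{y'},s]$ is fault-free precisely when $x\in T(f_{y'},y)$, not when $x$ lies above $f_{y'}$. More importantly, in the remaining case $f_{y'}\in T(x,y)$ (equivalently $x$ above $f_{y'}$) the path $Q\circ T[f_{y'},\parent(y)]$ only certifies that $y'$ is $xy$-connected to $\parent(y)$; the replacement path still goes through $x$ on its way to $s$. In the all-pairs AH argument of Section~\ref{sec:AH} this suffices because one then compares $u$ with the \emph{other} query vertex $v$ (Case~2 of the decoding there), but here the other endpoint is $s$ itself, sitting above $x$, and you still have to cross $x$. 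Invoking ``(C2)'' is circular: in the single-source problem, ``$s,t$ are $xy$-disconnected'' is exactly the bit you are trying to output. The paper's Up case handles this residual configuration with three further pointer types $a_{x'}$, $b_{y,Q_x}$, $c_x$ and a four-step decoding; a single $f_{y'}$ bit does not suffice.

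In the independent case your ``swap $x$ and $y$'' step can cycle without resolving the query. If $y\notin I^{\uparrow}(\ell_{x'})$, then after swapping also $x\notin I^{\uparrow}(\ell_{\heavy(y)})$ and (swapping again) $y\notin I^{\uparrow}(\ell_{\heavy(x)})$, you only learn that $\heavy(x),\heavy(y),x'$ are mutually $xy$-connected, never whether $s$ reaches this cluster; repeating the swap re-derives the same fact. The paper breaks this loop with the extra pointers $d_{x,Q_y}$, $d_{y,Q_x}$ (Steps~3--4 of the Independent case). A related storage gap also hits your Case~A when $y\in T_{\heavy(x)}$: the single-source $1$-VFT label of the \emph{failure} $y$ in $\Gsub{x}$ is only stored when $x$ is a light ancestor of $y$; the paper's Down and Side cases introduce separate machinery ($\alpha_{x'},\beta_{x'}$ and $g_{x'}$) precisely for the heavy-ancestor configuration.
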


\paragraph{Heavy Paths.}
For this section, will need another concept related to the heavy light decomposition.
Consider the subgraph $T'$ of $T$ formed by taking in all the vertices but only the heavy edges.
The connected components of $T'$ are paths (some of length zero, i.e. isolated vertices), each corresponding to choosing a rooting light vertex in $T$ and going downwards along the heavy edges until reaching a leaf.
Such a path $Q$ is called a \emph{heavy path} of $T$.
The collection of all heavy paths is denoted by $\mathcal{Q}$.
We give each heavy path $Q \in \mathcal{Q}$ a unique identifier $\ID(Q)$ of $O(\log n)$ bits (e.g., the ID of its rooting light vertex).
We write $Q_u$ to denote the unique heavy path containing vertex $u \in V$.
Occasionally, it will be useful to consider the extension of a heavy path $Q$ up until the root $s$, denoted $Q^{\uparrow}$.
That is, $Q^{\uparrow}$ is the unique root-to-leaf path in $T$ containing $Q$.
Similarly to Observation \ref{obs:heavy-light}, we have:
\begin{observation}\label{obs:heavy_paths}
	Any root-to-leaf path in $T$ intersects only $O(\log n)$ heavy paths in $\mathcal{Q}$.
\end{observation}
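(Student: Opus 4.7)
The plan is to reduce the statement directly to Observation \ref{obs:heavy-light}, which bounds the number of light vertices/edges on any root-to-leaf path by $O(\log n)$.

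Let $P$ be an arbitrary root-to-leaf path in $T$, and traverse $P$ starting from $s$ towards the leaf. I would track the heavy path of $\mathcal{Q}$ containing the current vertex. The key point to establish is that the set of heavy paths intersected by $P$ changes only at light edges. Concretely, recall that $\mathcal{Q}$ consists of the connected components of the subgraph $T'$ formed by all vertices but only heavy edges. Therefore, if $(u,u')$ is a heavy edge traversed by $P$ (with $u' = \heavy(u)$), both $u$ and $u'$ lie in the same component of $T'$ and hence in the same heavy path. Conversely, when $P$ traverses a light edge $(u,u')$ with $u'$ a light child of $u$, the vertex $u'$ belongs to a different component of $T'$ than $u$ (since the edge connecting them is absent from $T'$), so $u'$ starts a new heavy path in the traversal.

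Consequently, the number of distinct heavy paths intersected by $P$ is at most the number of light edges along $P$, plus one for the initial heavy path $Q_s$ containing the root. By Observation \ref{obs:heavy-light}, the number of light edges on $P$ is $O(\log n)$, and the bound follows.

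I do not anticipate a real obstacle here: the argument is essentially a definitional unwinding of what it means for consecutive vertices on $P$ to belong to the same heavy path, combined with the previously established Observation \ref{obs:heavy-light}. The only mild care needed is in handling the boundary (the heavy path containing $s$ itself, which need not be preceded by a light edge along $P$), which contributes the additive $+1$ that is absorbed into the $O(\log n)$ bound.
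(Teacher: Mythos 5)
Your proof is correct and matches the approach the paper implicitly intends (the paper simply states the observation with ``Similarly to Observation~\ref{obs:heavy-light}'' and gives no explicit proof): you observe that the heavy path changes exactly at light edges, so the count is bounded by the number of light edges plus one, which is $O(\log n)$ by Observation~\ref{obs:heavy-light}.
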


\paragraph{Extended IDs Revisited.}
We slightly augment our extended vertex IDs presented in Section \ref{sec:prelim}, by including in $\EID(u)$ the number of light vertices in $T[s,u]$, denoted $\nlights(u)$, and the ID of the heavy path containing $u$, $\ID(Q_u)$.
That is, the extended ID of vertex $u$ is given by
\[
\EID(u) = \left[ \ID(u), \LCALabel_{T}(u), \ID(\heavy(u)), \LCALabel_{T}(\heavy(u)), \nlights(u), \ID(Q_u) \right].
\]
This gives the following:

\begin{lemma}\label{lem:bitstrings}
	Let $v$ be a vertex.
	Assume each $w \in I^{\uparrow} (v)$ is associated with a bit $b(w) \in \{0,1\}$ (which may also depend on $v$).
	For $k = |I^{\uparrow} (v)|$, let $S$ be the $k$-bit string whose $i^{\text{th}}$ bit is $b(w)$, where $w$ is the $i^{\text{th}}$ highest vertex in $I^{\uparrow} (v)$, for all $i$.
	Then, given the string $S$ and $\EID(u)$ of any $u \in I^{\uparrow} (v)$, one can report the bit $b(u)$.
\end{lemma}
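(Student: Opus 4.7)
The plan is to show that the depth-rank of $u$ inside $I^{\uparrow}(v)$ depends on $u$ alone (specifically, through $\nlights(u)$) so that the correct bit of $S$ can be extracted from $\EID(u)$ without any further reference to $v$.

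First I would describe $I^{\uparrow}(v)$ explicitly. Let $\ell_1, \ldots, \ell_m$ denote the light vertices of $T[s,v]$ ordered by increasing depth. By Definition \ref{def:intersets}, $I(v) = \{\ell_1, \ldots, \ell_m, \heavy(v)\}$ (omitting $\heavy(v)$ if $v$ is a leaf), and taking parents together with $v$ itself — using $\parent(\heavy(v)) = v$ — gives
\begin{equation*}
I^{\uparrow}(v) = \{\parent(\ell_1),\, \parent(\ell_2),\, \ldots,\, \parent(\ell_m),\, v\}.
\end{equation*}
A short check shows that the depths of these elements are strictly increasing: $\depth(\parent(\ell_i)) = \depth(\ell_i)-1$ and the $\ell_i$'s have strictly increasing depths, while $\parent(\ell_m)$ is a proper ancestor of $v$ (since $\ell_m \in T[s,v]$ is equal to or an ancestor of $v$). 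Hence $k = m+1$ and the $i$-th shallowest element of $I^{\uparrow}(v)$ is $\parent(\ell_i)$ for $i \le m$, and is $v$ for $i = m+1$.

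The core claim is then: for every $u \in I^{\uparrow}(v)$, the depth-rank of $u$ within $I^{\uparrow}(v)$ equals $\nlights(u)+1$. I would verify this by cases. If $u = \parent(\ell_j)$, then $u$ is an ancestor of $v$ and $\ell_j \notin T[s,u]$, so the light vertices of $T[s,u]$ are precisely $\ell_1, \ldots, \ell_{j-1}$; this gives $\nlights(u) = j-1$, matching the rank $j$. If $u = v$, then all $m$ light vertices on $T[s,v]$ contribute, yielding $\nlights(v)+1 = m+1 = k$, which is $v$'s rank. Both cases confirm the formula.

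With this identity in hand, the decoding procedure is immediate: extract $\nlights(u)$ from $\EID(u)$, compute $i = \nlights(u) + 1$, and output the $i$-th bit of $S$; by construction this bit equals $b(w_i) = b(u)$. Note that $\nlights(u) \le \nlights(v) = m = k-1$ for $u \in I^{\uparrow}(v)$, so the index $i$ is always in range. No serious obstacle arises; the only conceptual point worth flagging is precisely what the argument resolves — that although $I^{\uparrow}(v)$ itself depends on $v$, the rank of $u$ within it is determined solely by $u$'s own light-ancestor count, which is already baked into $\EID(u)$.
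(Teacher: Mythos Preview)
Your proof is correct and follows the same approach as the paper: both establish that the rank of $u$ within $I^{\uparrow}(v)$ is $\nlights(u)+1$, so the desired bit can be read off directly from $\EID(u)$. Your version simply spells out in more detail the structure of $I^{\uparrow}(v)$ and the case analysis that the paper compresses into a single sentence.
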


\begin{proof}
	There are exactly $\nlights(u)$ vertices higher than $u$ in $I^{\uparrow} (v)$, given by $\parent(w')$ for each light $w' \in T[s,u]$.
	Thus $b(u)$ is the $(\nlights(u) + 1)^\text{th}$ bit of $S$.
\end{proof}

\paragraph{Single-Source 1-VFT Labels.}
In Section \ref{sec:single-fault} we presented $1$-VFT connectivity labels of size $O(\log^2 n)$.
If we restrict ourselves to \emph{single-source}, we can modify them to get $O(\log n)$ size, as follows:

The label of vertex $a$ holds the bitstring composed of $\conn(s, b', \Gsub{b})$ for every $b' \in I(a)$ with $\parent(b') = b$ from highest to lowest, and additionally the bit $\conn(s, \heavy(a), \Gsub{a})$.

When decoding, given the labels of a target $t$ and a failure $x$, we first check if $x$ is a heavy or light ancestor of $t$.
If it is neither, we determine that $s,t$ are $x$-connected.
If it a heavy ancestor, we report $\conn(s,\heavy(x),\Gsub{x})$ from the label of $x$.
If it is a light ancestor, we report the $(\nlights(x) + 1)^{\text{th}}$ bit of the bitstring part of the label of $t$. (This is essentially using Lemma \ref{lem:bitstrings}.)

To summarize, we have:
\begin{lemma}\label{lem:ss1vft}
	There is a \emph{single-source} $1$-VFT connectivity labeling scheme with $O(\log n)$-bit labels $\FTLabel{SS1F}(a)$ for every $a \in V$.
	That is, given a query of vertices $\langle t,x \rangle$ along with their $\FTLabel{SS1F}$ labels, one can determine whether the (fixed) source $s$ and $t$ are $x$-connected. 
\end{lemma}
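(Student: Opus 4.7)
The plan is to compress the $1$-VFT labeling scheme of Theorem \ref{thm:1vft} (Algorithm \ref{alg:1f-label}) down to $O(\log n)$ bits by exploiting two simplifications available in the single-source setting. First, since we only need $s$-$t$ connectivity rather than any component ID, it suffices to store the bit $\conn(s, b', \Gsub{b})$ for each $b' \in I(a)$, dropping the $O(\log n)$-bit $\ccid(b', \Gsub{b})$ value. Second, the explicit vertices $b, b'$ attached to each entry need not be named inside the label, if we use the extended IDs together with the positional encoding of Lemma \ref{lem:bitstrings}.

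Concretely, I would define $\FTLabel{SS1F}(a)$ to consist of $\EID(a)$ (which is $O(\log n)$ bits by Section \ref{sec:prelim}) followed by a bitstring of length $|I(a)| = O(\log n)$ (Lemma \ref{lem:instersting_sets}). The bits are ordered according to the depth of the corresponding vertex in $I^{\uparrow}(a)$, from highest to lowest: for each light ancestor $b$ of $a$, the entry associated with $b$ is $\conn(s, b', \Gsub{b})$ where $b'$ is the (light) child of $b$ on $T[b,a]$; the final entry, associated with $b=a$, is $\conn(s, \heavy(a), \Gsub{a})$ (omitted if $a$ is a leaf). Total label size is $O(\log n)$ bits.

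For decoding a query $\langle t, x\rangle$ using $\FTLabel{SS1F}(t)$ and $\FTLabel{SS1F}(x)$, I would use the extended IDs to first check whether $x$ is an ancestor of $t$ in $T$ and, if so, whether it is a heavy or light ancestor. If $x$ is not an ancestor of $t$, then $T[s,t]$ avoids $x$, so $s,t$ are $x$-connected and we output $1$. Otherwise let $x'$ be the child of $x$ on $T[x,t]$; since $T[x',t]$ is failure-free, $x'$ and $t$ are $x$-connected, so the answer equals $\conn(s, x', \Gsub{x})$. In the heavy-ancestor case $x' = \heavy(x)$, and this bit is the final entry of the bitstring inside $\FTLabel{SS1F}(x)$ itself. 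In the light-ancestor case, $x \in I^{\uparrow}(t) \setminus \{t\}$ and $x' \in I(t)$ is the light child of $x$ on $T[x,t]$ (by Lemma \ref{lem:child_on_path}(1) with $a = x$, $b = t$), so the desired bit sits in the bitstring of $\FTLabel{SS1F}(t)$ at the entry associated with $x$; we extract it by invoking Lemma \ref{lem:bitstrings} with $v = t$ and $u = x$.

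I do not anticipate any significant obstacles. Correctness follows from the same replacement-path observation as in Claim \ref{claim:1vft}: once we pass below $x$ along $T$, the tree path to $t$ is failure-free, so the $x$-connectivity of $s,t$ reduces to that of $s,x'$. The size bound is immediate from Lemma \ref{lem:instersting_sets} and the $O(\log n)$-bit extended IDs. The only minor subtlety is to use a consistent ordering (highest-to-lowest) for the bitstring and for the offset $\nlights(\cdot)+1$ used inside Lemma \ref{lem:bitstrings}; this is purely notational.
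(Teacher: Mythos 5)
Your proposal is correct and matches the paper's proof essentially verbatim: both drop the $\ccid$ field from the $1$-VFT labels and store only the bitstring of $\conn(s, b', \Gsub{b})$ bits over $I(a)$ in depth order, then decode by checking (via extended IDs) whether $x$ is a heavy or light ancestor of $t$ and extracting the relevant bit from $\FTLabel{SS1F}(x)$ or $\FTLabel{SS1F}(t)$ respectively, via the positional encoding of Lemma~\ref{lem:bitstrings}. No differences worth noting.
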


\paragraph{Preliminary Assumptions.}
We assume, w.l.o.g., that the following two conditions hold for the query $\langle t,x,y \rangle$:
\begin{enumerate}[(S1)]
	\item $s,t$ are both $x$-connected and $y$-connected. 
	\item $x \in T[s,t]$ and $y \notin T[x,t]$.
\end{enumerate}
To verify (S1) we use the single-source $1$-VFT labels, which consume only $O(\log n)$ bits, hence within our budget.
If (S1) does not hold, then $s,t$ are clearly $xy$-disconnected and we are done.
We enforce condition (S2) using extended IDs, as follows.
If none of the failures $x,y$ lie in $T[s,t]$, then $s,t$ are $xy$-connected and we are done.
Otherwise, by swapping $x$ and $y$ if necessary, we may assume that $x$ is the lowest failure on $T[s,t]$, which gives (S2).
From this point on, let $x'$ be the child of $x$ on the path $T[x,t]$.
By (S2) we get that $x',t$ are $xy$-connected.
Therefore, in the following it is sufficient for us to determine the $xy$-connectivity of $s,x'$.

Condition (S1) serves to justify the existence of several replacement paths and important vertices which we define in our labels.
However, such explicit justifications are omitted for clarity of presentation.
For the same reason, we also omit explanations about how to extract specific pieces of information from a label storing them (e.g. by using Lemma \ref{lem:bitstrings}).
Throughout, we make extensive use of the properties of interesting sets from Lemma \ref{lem:instersting_sets}.

\paragraph{Strategy.}
We handle differently four different cases based on the location of $y$. The first three are cases handling dependent failures, and the last is for independent failures.\footnote{To distinguish between the cases, we use extended IDs.}
\begin{itemize}
	\item The Down case: $y \in T_{x'}$
	\item The Up case: $y \in T[s,x)$
	\item The Side case: $y \in T_x \setminus T_{x'}$
	\item The Independent case: $x$ and $y$ are independent.
\end{itemize}

\subsection{The Down Case: $y \in T_{x'}$}

\paragraph{Labels for Down Case.}
For a vertex $u$, define
\begin{align}
	A_u &= \{v \in T_u \mid \text{$s,v$ are $(T_u^+ \setminus \{v\})$-connected} \}, \label{eq:A_u} \\
	\alpha_u &= \LCA(A_u), \label{eq:alpha_u} \\ 
	\beta_u &=  \arg \max_{v} \{\depth(v) \mid \text{$v \in T[s,\parent(u))$ and $u,v$ are $(T[s,\parent(u)] \setminus \{v\}))$-connected} \}. \label{eq:beta_u}
\end{align}
The labels for the Down case constructed by Algorithm \ref{alg:D-label}.
Using Lemma \ref{lem:instersting_sets} we see that $\FTLabel{D} (u)$ has length $O(\log^2 n)$ bits.
\begin{algorithm}[!h]
	\caption{Construction of label $\FTLabel{D} (u)$ for vertex $u$}\label{alg:D-label}
	\For{each $v' \in I(u)$ with $\parent(v')=v$}{
		\textbf{store} vertices $v,v', \alpha_{v'}, \beta_{v'}$\;
		\textbf{store} the bitstring 			$
		[ \conn(s,v',\Gsub{v,w}), \text{ $\forall w\in I^{\uparrow}(\alpha_{v'})$ from highest to lowest} ]
		$\;
	}
\end{algorithm}

\paragraph{Decoding Algorithm for Down Case.}
The algorithm proceeds by the following steps.

\bigskip\noindent\textbf{Step (1): Detecting $\alpha := \alpha_{x'}$.}
As $x' \in I(t) \cup I(x)$, $\alpha := \alpha_{x'}$ is stored in either $\FTLabel{D} (t)$ or $\FTLabel{D} (x)$. We treat two easy cases:

\smallskip\noindent{\textit{Case 1: $\alpha \notin T_y$.}}
As $\alpha = \LCA(A_{x'})$, there must be some $u \in A_{x'} \setminus T_y$.
By Eq. \eqref{eq:A_u}, $u \in T_{x'}$ and there is an $s$-$u$ path $P$ internally avoiding $T_{x'}^+$.
Now $P \circ T[u,x']$ certifies that $s,x'$ are $xy$-connected, and we are done.

\smallskip\noindent{\textit{Case 2: $y \in I^{\uparrow}(\alpha)$.}}
Then $\conn(s,x',\Gsub{x,y})$ is specified along with $\alpha$, and we are done.

From now on, assume that $\alpha \in T_y$ and $y \notin I^{\uparrow}(\alpha)$.
Hence, the child of $y$ on $T[y, \alpha]$ is $\heavy(y)$.
It is not hard to see, using Eqs. \eqref{eq:A_u} and \eqref{eq:alpha_u}, that $s,\heavy(y)$ are $xy$-connected by a path of the form $P \circ T[u,\heavy(x)]$ for some $u \in A_{x'}$, where $P$ is an $s$-$u$ path internally avoiding $T_{x'}^+$.

\bigskip\noindent\textbf{Step (2): Detecting $\beta := \beta_{\heavy(y)}$.}
Observe that $\beta := \beta_{\heavy(y)}$ in stored in $\FTLabel{D} (y)$ as $h(y) \in I(y)$.
There are two possible cases.

\smallskip\noindent{\textit{Case 1: $\beta \in T(x,y)$.}}
By Eq. \eqref{eq:beta_u}, there is an $\heavy(y)$-$\beta$ path $P$ internally avoiding $T[s,y]$. Now $P \circ T[\beta, x']$ certifies that $\heavy(y),x'$ are $xy$-connected, hence this is also true for $s,x'$.

\smallskip\noindent{\textit  {Case 2: $\beta \in T[s,x]$.}}
See Illustration in Figure \ref{fig:down}.
\begin{figure}
	\centering
	\includegraphics[scale=0.5]{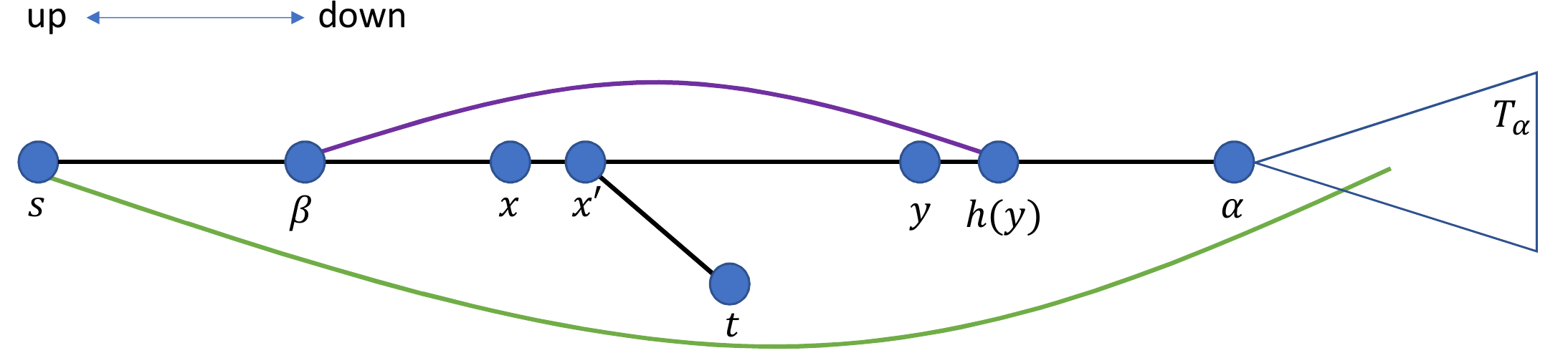}
	\caption{
		The final configuration reached in decoding for the Down case.
		If this configuration is reached then $s,t$ are $xy$-disconnected.
	}
	\label{fig:down}
\end{figure}
We prove that $s,x'$ are $xy$-disconnected.
Assume the contrary, so $P = P_{s,x',\{x,y\}}$ exists.
Let $\widetilde{\alpha}$ be the first vertex from $T_{x'}$ in $P$.
Then $P[s, \widetilde{\alpha}]$ internally avoids $T_{x'}^+$, hence, by Eqs. \eqref{eq:A_u} and \eqref{eq:alpha_u} for $A_{x'}$ and $\alpha = \alpha_{x'}$,
\[
\widetilde{\alpha} \in A_{x'} \subseteq T_{\alpha} \subseteq T_{\heavy(y)} \subseteq T_x.
\]
Now let $\widetilde{\beta}$ be the first vertex from $T[s,y)$ in $P[\widetilde{\alpha}, x']$.
\begin{itemize}
	\item \textit{Case $\widetilde{\beta} \in T[s,x)$:}
	Then $T[s,\widetilde{\beta}]$ avoids $x,y$, so as $P$ is a replacement path going through $s$ and $\widetilde{\beta}$ we obtain
	\[
	\widetilde{\alpha} \in P[s,\widetilde{\beta}] = T[s,\widetilde{\beta}] \subseteq T[s,x),
	\]
	which is a contradiction as $\widetilde{\alpha} \in T_x$.
	
	\item  \textit{Case $\widetilde{\beta} \in T(x,y)$:}
	Then $T[\heavy(y), \widetilde{\alpha}] \circ P[\widetilde{\alpha}, \widetilde{\beta}]$ is an $\heavy(y)$-$\widetilde{\beta}$ path internally avoiding $T[s,y]$,  which contradicts the definition of $\beta = \beta_{\heavy(y)}$ in Eq. \eqref{eq:beta_u} as the lowest vertex having such a path.
\end{itemize}
This concludes the decoding algorithm.

For the sake of optimizing the label length in the subsequent constructions, we observe that one can optimize the length of the labels under if an additional promise is given:
\begin{lemma}\label{lem:down_special}
	There are $O(\log n)$-bit labels $\FTLabel{D'} (u)$ for all vertices $u \in V$ such that given a query of vertices $\langle t, x, y \rangle$ along with their labels, where it is promised that $t,y \in T_{\heavy(x)}$ and $y \notin T[\heavy(x), t]$, one can determine the $xy$-connectivity of $s,t$.
\end{lemma}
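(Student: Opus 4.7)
The plan is to specialize the general Down-case construction from Algorithm \ref{alg:D-label} by exploiting the extra promise $t, y \in T_{\heavy(x)}$ and $y \notin T[\heavy(x), t]$. First observe that this promise places us squarely in the Down case with $x' = \heavy(x)$: since $t \in T_{\heavy(x)}$, the child of $x$ on $T[x,t]$ is $\heavy(x)$, so $x' = \heavy(x)$; since $y \in T_{\heavy(x)} = T_{x'}$, the failure $y$ lies in the Down region. Moreover, the path $T[\heavy(x),t]$ avoids $x$ by construction and avoids $y$ by the promise, so $\heavy(x)$ and $t$ are $xy$-connected via $T$. Hence determining the $xy$-connectivity of $s$ and $t$ reduces to determining that of $s$ and $x' = \heavy(x)$, and we may apply the existing Down-case decoding algorithm for this query.

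Next I would inspect which pieces of the general $\FTLabel{D}$ labels are actually accessed by that decoding under our promise. Step~(1) retrieves $\alpha := \alpha_{x'}$ together with the bitstring $[\conn(s,x',\Gsub{x,w}) : w \in I^{\uparrow}(\alpha_{x'})]$; in the general case this lives in either $\FTLabel{D}(t)$ or $\FTLabel{D}(x)$ since $x' \in I(t) \cup I(x)$, but under our promise $x' = \heavy(x) \in I(x)$, so it resides in $\FTLabel{D}(x)$ inside the iteration $v' = \heavy(x)$, $v = x$. Step~(2) retrieves $\beta := \beta_{\heavy(y)}$, which by $\heavy(y) \in I(y)$ resides in $\FTLabel{D}(y)$ inside the iteration $v' = \heavy(y)$, $v = y$. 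In both cases only the single ``heavy-child iteration'' of Algorithm \ref{alg:D-label} is consulted, while the remaining $O(\log n)$ iterations over light ancestors play no role.

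This motivates defining $\FTLabel{D'}(u)$ to store exactly the $v' = \heavy(u)$, $v = u$ iteration of Algorithm \ref{alg:D-label}: namely the vertices $\heavy(u), \alpha_{\heavy(u)}, \beta_{\heavy(u)}$ (each stored together with its extended ID), and the bitstring
\[
[\,\conn(s,\heavy(u),\Gsub{u,w})\ :\ w \in I^{\uparrow}(\alpha_{\heavy(u)})\text{ from highest to lowest}\,].
\]
By Lemma~\ref{lem:instersting_sets} the bitstring has length $|I^{\uparrow}(\alpha_{\heavy(u)})| = O(\log n)$, and each stored (extended) vertex ID contributes $O(\log n)$ bits, yielding $|\FTLabel{D'}(u)| = O(\log n)$ as required.

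The decoding algorithm then simply runs Steps~(1) and~(2) of the Down-case decoder: it extracts $\alpha_{\heavy(x)}$ and, when needed, the bit $\conn(s,\heavy(x),\Gsub{x,y})$ from $\FTLabel{D'}(x)$ via Lemma~\ref{lem:bitstrings}, and $\beta_{\heavy(y)}$ from $\FTLabel{D'}(y)$. The correctness of these two steps, and in particular the case analysis distinguishing $\alpha \notin T_y$ from $y \in I^{\uparrow}(\alpha)$ in Step~(1) and $\beta \in T(x,y)$ from $\beta \in T[s,x]$ in Step~(2), carries over unchanged from the Down case. The only subtle point I would want to check carefully is that the general Down analysis never implicitly relies on information from $\FTLabel{D}(t)$; this is where the promise pays off, since $x' = \heavy(x) \in I(x)$ guarantees that $t$'s label is never needed either to locate $x'$ or to retrieve $\alpha$, and no further appeal to $t$'s label occurs thereafter.
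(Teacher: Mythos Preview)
Your proposal is correct and is exactly the paper's approach: the paper's proof says to ``simply replace $I(u)$ by $\{\heavy(u)\}$ in the definition of $\FTLabel{D}(u)$,'' which is precisely your choice to keep only the $v'=\heavy(u)$ iteration, and then runs the unchanged Down-case decoder using that $x'=\heavy(x)\in I(x)$ so all needed information is found in $\FTLabel{D'}(x)$ and $\FTLabel{D'}(y)$. Your write-up is just a more explicit unpacking of the same idea.
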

\begin{proof}
	We simply replace $I(u)$ by $\{\heavy(u)\}$ in the definition of $\FTLabel{D} (u)$ to get $\FTLabel{D'} (u)$.
	The decoding algorithm is identical to the Down case, only now $x' = \heavy(x) \in I(x)$, which implies that all the required information is stored in the smaller $\FTLabel{D'}$ labels.
\end{proof}

\subsection{The Up Case: $y \in T[s,x)$}

\paragraph{Labels for Up Case.}
For a vertex $u$ with an $s$-$u$ replacement path $P = P_{s,u,\parent(u)}$, let $q_u \in P$ be the first (closest to $s$) vertex in $T_u$.
\begin{observation}\label{obs:the_q_vertex}
	$P_{s,u,\parent(u)} = P' \circ T[q_u, u]$ where $P'$ avoids $T_u$.
\end{observation}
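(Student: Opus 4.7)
The plan is to mimic the structure of the proofs of Observations \ref{obs:the_ell_vertex} and \ref{obs:the_f_vertex}, exploiting the key structural property of replacement paths under our weight assignment: for any two vertices $c,d \in P_{a,b,F}$ with $T[c,d] \cap F = \emptyset$, the subpath $P_{a,b,F}[c,d]$ coincides with $T[c,d]$.

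First I would verify that $q_u$ is well-defined. Since $u$ itself is the endpoint of $P = P_{s,u,\parent(u)}$ and $u \in T_u$, the intersection $P \cap T_u$ is non-empty, so a first (closest to $s$) such vertex exists. Let $P' = P[s, q_u]$ be the prefix of $P$ ending at $q_u$. By the minimality in the definition of $q_u$, every vertex of $P'$ other than $q_u$ itself lies outside $T_u$, which is the claimed sense in which $P'$ ``avoids $T_u$'' (following the same mild abuse of language used in Observations \ref{obs:the_ell_vertex} and \ref{obs:the_f_vertex}).

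Next I would argue that the suffix $P[q_u, u]$ coincides with the tree path $T[q_u, u]$. Both $q_u$ and $u$ belong to $T_u$, hence $T[q_u, u] \subseteq T_u$. The unique fault is $\parent(u)$, and since $\parent(u) \notin T_u$ we have $T[q_u, u] \cap \{\parent(u)\} = \emptyset$. Applying the replacement-path property recalled above with $c = q_u$ and $d = u$ yields $P[q_u, u] = T[q_u, u]$. Combining the two pieces gives the decomposition $P = P' \circ T[q_u, u]$, as desired.

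There is no genuine obstacle here; the only subtle point is the convention about what ``avoids $T_u$'' means at the meeting vertex $q_u$, which is resolved exactly as in the parallel observations. The argument is otherwise a direct three-line application of the definition of $q_u$ together with the tree-hugging property of our weighted replacement paths.
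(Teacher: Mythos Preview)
Your argument is correct and is exactly the intended one: the paper states this observation without proof, relying on precisely the replacement-path property you invoke (together with the definition of $q_u$ as the first vertex of $P$ in $T_u$). There is nothing to add.
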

Next, let $Q \in \mathcal{Q}$ be a heavy path such that $u \in Q^{\uparrow}$. We define:
\begin{align}
	a_u &= \arg \min_v \{ \depth(v) \mid v \in T[s,\parent(u)) \text{ and $\exists$ $s$-$u$ path avoiding $T(v,\parent(u)]$}  \}, \label{eq:a_u_first} \\
	b_{u,Q} &= \arg \min_{v} \{\depth(v) \mid \text{$v \in Q^{\uparrow} (u,\ell]$ and $\exists$ $s$-$v$ path internally avoiding $Q^{\uparrow} [u,\ell]$} \}, \label{eq:b_uQ} \\
	c_u &= \arg \max_{v} \{\depth(v) \mid \text{$v \in T[s,u)$ and $\exists$ $\heavy(u)$-$v$ path internally avoiding $T[s,u]$} \}. \label{eq:c_u}
\end{align}
We note that Eq. \eqref{eq:a_u_first} is equivalent to
\begin{align}
	a_u &= \arg \min_v \{ \depth(v) \mid v \in T[s,\parent(u)) \text{ and $\exists$ $v$-$u$ path internally avoiding $T[s,\parent(u)]$} \}. \label{eq:a_u_second}
\end{align}
The labels for the Up case are constructed by Algorithm \ref{alg:U-label}.

$\quad$\\
\begin{algorithm}[H]
	\caption{Construction of label $\FTLabel{U} (u)$ for vertex $u$}\label{alg:U-label}
	\For{each $v' \in I(u)$ with $\parent(v')=v$}{
		\textbf{store} vertices $v,v', a_{v'}$ and value $\conn(s, v', \Gsub{v,a_{v'}})$\;
		\textbf{store} the bitstring 			$
		[ \conn(s,v',\Gsub{v,w}), \text{ $\forall w\in I^{\uparrow}(v')$ from highest to lowest} ]
		$\;
	}
	\textbf{store} vertices $c_u, q_{\heavy(u)}$\;
	\For{each heavy path $Q \in \mathcal{Q}$ intersecting $T[s, q_{\heavy(u)}]$}{
		\textbf{store} $\ID(Q)$, $b_{u,Q}$ and $\conn(s, \heavy(u), \Gsub{u, b_{u,Q}})$\;
	}
\end{algorithm}
$\quad$\\

Using Lemma \ref{lem:instersting_sets} and Observation \ref{obs:heavy_paths} we see that $\FTLabel{U} (u)$ has length $O(\log^2 n)$ bits.

\paragraph{Decoding Algorithm for Up Case.}
The algorithm proceeds by the following steps.

\bigskip\noindent\textbf{Step (0):}
We first handle an easy case:

\smallskip\noindent{\textit{Case: $y \in I^{\uparrow}(x)$.}}
Then as $x' \in I(t) \cup I(x)$, the value $\conn(s,x',\Gsub{x,y})$ is stored in either $\FTLabel{U} (t)$ or $\FTLabel{U} (x)$ (to see this, set $v=x$, $v'=x'$ and $w=y$), so we are done.

From now on, assume that $y \notin I^{\uparrow} (x)$. Thus, the child of $y$ on $T[y,x]$ is $\heavy(y)$.

\bigskip\noindent\textbf{Step (1): Detecting $a := a_{x'}$.}
As $x' \in I(t) \cup I(x)$, $a := a_{x'}$ is stored in either $\FTLabel{U} (t)$ or $\FTLabel{U} (x)$.
We treat two easy cases:

\smallskip\noindent{\textit{Case 1: $a \in T[s,y)$.}}
Then by Eq. \eqref{eq:a_u_first}, there is an $s$-$x'$ path avoiding $T(a,x]$, and hence avoiding $x,y$. This shows that $s,x'$ are $xy$-connected, so we are done.

\smallskip\noindent{\textit{Case 2: $a = y$.}}
Then $\conn(s,x',\Gsub{x,y})$ is specified along with $a$, and we are done.

From now on, assume that $a \in T(y,x) = Q_x^{\uparrow} (y,x)$. By Eq. \eqref{eq:a_u_second}, there is an $a$-$x'$ path $P$ internally avoiding $T[s,x]$, and hence avoiding $x,y$. Thus, $T[\heavy(y), a] \circ P$ shows that $x',\heavy(y)$ are $xy$-connected.

\bigskip\noindent\textbf{Step (2): Detecting $q := q_{\heavy(y)}$.}
Observe that $q := q_{\heavy(y)}$ is stored in $\FTLabel{U} (y)$.
We treat an easy case:

\smallskip\noindent{\textit{Case: $x \notin T[s, q]$.}}
By Observation \ref{obs:the_q_vertex}, $P_{s,\heavy(y),y} = P' \circ T[q, \heavy(y)]$ where $P'$ avoids $T_{\heavy(y)}$, and hence avoids $x$.
Therefore, $P_{s,\heavy(y),y}$ is $s$-$\heavy(y)$ path that avoids both $x,y$.
Thus $s,\heavy(y)$ are $xy$-connected, hence this is also true for $s,x'$, and we are done.

From now on, assume that $x \in T[s, q]$.

\bigskip\noindent\textbf{Step (3): Detecting $b := b_{y, Q_x}$.}
As $x \in T[s, q]$, the heavy path $Q_x$ intersects $T[s,q]$, thus $b := b_{y,Q_x}$ is specified in $\FTLabel{U} (y)$.
Denote by $\ell$ the leaf which is the endpoint of $Q_x$.
We treat two easy cases:

\smallskip\noindent{\textit{Case 1: $b \in T(y,x) = Q_x^{\uparrow} (y,x)$.}}
By Eq. \eqref{eq:b_uQ}, there is an $s$-$b$ path $P$ internally avoiding $Q_x^{\uparrow} [y, \ell]$.
Then $P \circ T[b, \heavy(y)]$ certifies that $s,\heavy(y)$ are $xy$-connected, hence this is also true for $s,x'$, and we are done.

\smallskip\noindent{\textit{Case 2: $b = x$.}}
Then $\conn(s,\heavy(y), \Gsub{x,y})$ is specified with $b$, so we are done again.

From now on, assume that $b \in Q_x^{\uparrow} (x,\ell] = T[\heavy(x), \ell]$.

\bigskip\noindent\textbf{Step (4): Detecting $c := c_x$.}
Observe that $c := c_x$ is stored in $\FTLabel{U} (x)$.
There are two possible cases:

\smallskip\noindent{\textit{Case 1: $c = T(y,x) = Q_x^{\uparrow} (y,x)$.}}
By Eq. \eqref{eq:c_u}, there is an $\heavy(x)$-$c$ path $P$ internally avoiding $T[s,x]$, and hence avoiding $x,y$.
By Eq. \eqref{eq:b_uQ}, there is an $s$-$b$ path $P'$ internally avoiding $Q_x^{\uparrow}[y,\ell]$, and hence avoiding $x,y$.
The concatenation
$
P' \circ T[b,\heavy(x)] \circ P \circ T[c, \heavy(y)]
$
now certifies that $s,\heavy(y)$ are $xy$-connected, hence this is also true for $s,x'$, and we are done.

\smallskip\noindent{\textit{Case 2: $c \in T[s,y] = Q_x^{\uparrow} [s,y]$.}}
See illustration in Figure \ref{fig:up}.
\begin{figure}
	\centering
	\includegraphics[scale=0.5]{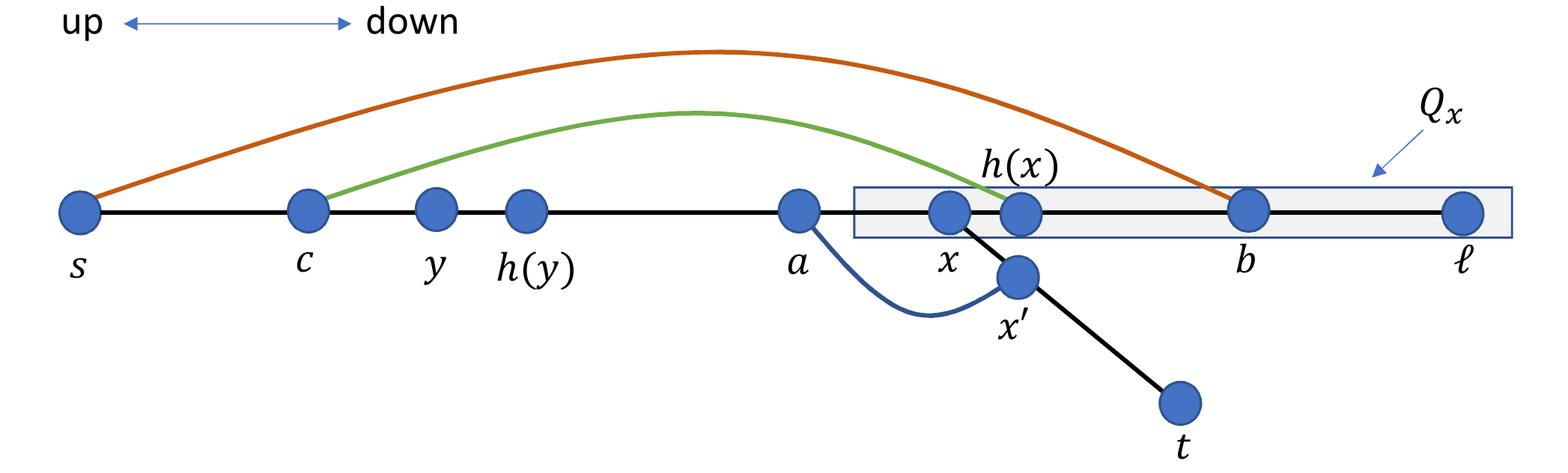}
	\caption{
		The final configuration reached in decoding for the Up case.
		If this configuration is reached then $s,t$ are $xy$-disconnected.
	}
	\label{fig:up}
\end{figure}
We prove that $s,x'$ are $xy$-disconnected.
Assume the contrary, so $P = P_{s,x', \{x,y\}}$ exists.
We divide to cases:
\begin{enumerate}[(a)]
	\item \textit{$P$ avoids $T(y,x) = Q_x^{\uparrow} (y,x)$:}
	Let $\widetilde{a}$ be the lowest vertex from $T[s,y)$ appearing in $P$.
	Then $P$ is an $s$-$x'$ path avoiding $T(\widetilde{a}, x]$, and $\widetilde{a}$ is higher than $a$ --- contradicting Eq. \eqref{eq:a_u_first} for $a = a_{x'}$.
	
	\item \textit{$P$ avoids $T(x,\ell] = Q_x^{\uparrow} (x,\ell]$:}
	As (a) yields a contradiction, $P$ must intersect $Q_x^{\uparrow}(y,x)$.
	Let $\widetilde{b} \in P$ be the first (closest to $s$) vertex in $Q_x^{\uparrow}(y,x)$.
	Then $P[s, \widetilde{b}]$ is an $s$-$\widetilde{b}$ path internally avoiding $Q_x^{\uparrow} [y, \ell]$, and $\widetilde{b}$ is higher than $b$ --- contradicting Eq. \eqref{eq:b_uQ} for $b = b_{y, Q_x}$.
	
	\item \textit{$P$ intersects $T(x,\ell] = Q_x^{\uparrow} (x,\ell]$:}
	Again, as (a) yields a contradiction, $P$ must intersect $Q_x^{\uparrow}(y,x)$.
	Let $\widetilde{c} \in P$ be the first (closest to $s$) vertex in $Q_x^{\uparrow}(y,x) = T(y,x)$.
	It cannot be that $P[s,\widetilde{c}]$ avoids $Q_x^{\uparrow} (x,\ell]$, as this yields the same contradictions as (b).
	Thus, there is a vertex $u \in Q_x^{\uparrow} (x,\ell] = T[\heavy(x), \ell]$ that precedes $\widetilde{c}$ on $P$. 
	As $P$ is a replacement path for faults $x,y$ and $T[s,y)$ is fault-free, $P$ starts by going down a segment of $T[s,y)$ and never returns to $T[s,y)$ again.
	Therefore, the first vertex from $T[s,x]$ that $P[u,\widetilde{c}]$ hits is $\widetilde{c}$.
	Thus, $T[\heavy(x), u] \circ P[u, \widetilde{c}]$ is an $\heavy(x)$-$\widetilde{c}$ path internally avoiding $T[s,x]$.
	Also, $\widetilde{c}$ is lower than $c$ --- contradicting Eq. \eqref{eq:c_u} for $c = c_x$. 
\end{enumerate}
This concludes the decoding algorithm.

Finally, we observe that if extra properties are promised to hold, one can optimize the labeling scheme for the Up case as follows:
\begin{lemma}\label{lem:up_special}
	There are $O(\log n)$-bit labels $\FTLabel{U'} (u)$ for all vertices $u \in V$ such that given a query of vertices $\langle t, x, y \rangle$ along with their labels, where it is promised that $t \in T_{\heavy(x)}$ and $x \notin T_\heavy(y)$, one can determine the $xy$-connectivity of $s,t$.
\end{lemma}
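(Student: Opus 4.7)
The plan is to mirror the reduction carried out in Lemma \ref{lem:down_special}, exploiting that the two promises collapse the full Up-case decoding algorithm down to its trivial sub-case (Step (0)). As a first move I would observe that the promise $t \in T_{\heavy(x)}$ forces the child $x'$ of $x$ on $T[x,t]$ to be exactly $\heavy(x)$, which lies in $I(x)$. Thus, exactly as in Lemma \ref{lem:down_special}, it suffices for $\FTLabel{U'}(u)$ to keep only the entry corresponding to $v' = \heavy(u)$ (paired with $v = u$), rather than iterating over all $v' \in I(u)$.

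Next, I would argue that the second promise $x \notin T_{\heavy(y)}$, combined with the standing preliminary condition (S2) placing us in the Up case $y \in T[s,x)$, means that $y$ is a \emph{light} ancestor of $x$. Since $T[y,\heavy(x)]$ is merely $T[y,x]$ extended by the heavy edge $(x,\heavy(x))$, the child of $y$ on $T[y,\heavy(x)]$ coincides with the child of $y$ on $T[y,x]$, so $y$ is also a light ancestor of $\heavy(x)$. Consequently $y \in I^{\uparrow}(\heavy(x))$, and the query automatically falls into Step (0) of the original Up-case decoding. None of the remaining data in $\FTLabel{U}(u)$ (namely the vertices $a_{v'}, c_u, q_{\heavy(u)}$ or the per-heavy-path information $\{b_{u,Q}\}$) is ever consulted, so it can all be discarded.

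Based on these observations, I define $\FTLabel{U'}(u)$ to store only the bitstring
\[
\big[\, \conn(s, \heavy(u), \Gsub{u, w}) \;\big|\; w \in I^{\uparrow}(\heavy(u)) \text{ ordered from highest to lowest} \,\big],
\]
together with the (already-implicit) extended ID of $u$. By Lemma \ref{lem:instersting_sets}, $|I^{\uparrow}(\heavy(u))| = O(\log n)$, so the label has length $O(\log n)$ bits as required. For decoding, given $\langle t,x,y \rangle$ and their $\FTLabel{U'}$ labels, one uses Lemma \ref{lem:bitstrings} (via the extended IDs) to look up the bit of the bitstring in $\FTLabel{U'}(x)$ indexed by $w = y$, which equals $\conn(s, \heavy(x), \Gsub{x,y})$. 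Correctness then follows because $T[\heavy(x),t] \subseteq T_{\heavy(x)}$ avoids $x$ trivially and avoids $y$ (since $y$ is a proper ancestor of $x$, hence not a descendant of $\heavy(x)$); thus $\heavy(x)$ and $t$ are $xy$-connected and the retrieved bit coincides with $\conn(s,t,\Gsub{x,y})$.

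The only subtle step, and hence the one I would treat most carefully, is the structural claim that $x \notin T_{\heavy(y)}$ together with $y \in T[s,x)$ implies $y \in I^{\uparrow}(\heavy(x))$; everything else is just bookkeeping. Apart from this small check, the argument is a direct specialization of the original Up-case construction, exactly in the spirit of Lemma \ref{lem:down_special}.
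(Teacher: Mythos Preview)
Your proof is correct and follows the same approach as the paper: restrict the $\FTLabel{U}$ label to the single entry $v' = \heavy(u)$, just as in Lemma~\ref{lem:down_special}. Your explicit observation that the promise $x \notin T_{\heavy(y)}$ (together with $y \in T[s,x)$) forces $y \in I^{\uparrow}(\heavy(x))$, so the Up-case decoding always terminates at Step~(0), is precisely what makes the $O(\log n)$ bound transparent---it shows the remaining label data ($a_{v'}$, $c_u$, $q_{\heavy(u)}$, and the per-heavy-path entries) is never consulted and can be dropped, a point the paper's terse proof leaves implicit.
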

\begin{proof}
	The proof is similar to Lemma \ref{lem:down_special}. Namely, we replace $I(u)$ by $\{\heavy(u)\}$ in the definition of $\FTLabel{U} (u)$ to get $\FTLabel{U'} (u)$, and can still run the  decoding algorithm for the Up case using these labels.
\end{proof}

\subsection{The Side Case: $y \in T_x \setminus T_{x'}$}

\paragraph{Labels for Side Case.}
For a vertex $u$ with a replacement path $P = P_{s,u,\parent(u)}$, let $g_u \in P$ as the last (closest to $u$) vertex in $T_{\heavy(\parent(u))}$.
In case $P$ does not intersect $T_{\heavy(\parent(u))}$, we define $g_u = \Null$.
\begin{observation}\label{obs:the_g_vertex}
	If $g \neq \Null$, then $P_{s,u,\parent(u)}[g_u, u]$ internally avoids $T_{\heavy(\parent(u))}^+$.
\end{observation}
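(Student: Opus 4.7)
The plan is to unpack the definition of $T_{\heavy(\parent(u))}^+$ and then combine the defining property of $g_u$ with the fact that $P$ is a replacement path avoiding $\parent(u)$. By the convention set in Section \ref{sec:prelim}, $T_a^+$ is obtained from $T_a$ by attaching the edge from $\parent(a)$ to $a$, so its vertex set is $V(T_a) \cup \{\parent(a)\}$. Specializing to $a = \heavy(\parent(u))$, whose parent is $\parent(u)$, we get $V(T_{\heavy(\parent(u))}^+) = V(T_{\heavy(\parent(u))}) \cup \{\parent(u)\}$.

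Next, I would exploit the defining property of $g_u$ as the \emph{last} (closest to $u$) vertex of $P$ lying in $T_{\heavy(\parent(u))}$: every vertex of $P$ strictly after $g_u$ lies outside $T_{\heavy(\parent(u))}$. The internal vertices of $P[g_u, u]$ are precisely the vertices strictly between $g_u$ and $u$ on $P$, so none of them belong to $T_{\heavy(\parent(u))}$.

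Finally, since $P = P_{s,u,\parent(u)}$ is by construction an $s$-$u$ path in $G \setminus \{\parent(u)\}$, no vertex of $P$ equals $\parent(u)$; in particular, no internal vertex of $P[g_u, u]$ does. Combining the two exclusions yields that the internal vertices of $P[g_u, u]$ lie in neither $T_{\heavy(\parent(u))}$ nor $\{\parent(u)\}$, hence avoid $T_{\heavy(\parent(u))}^+$ entirely.

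I expect no real obstacle here: the observation is an essentially immediate consequence of the definitions, and is included as a structural remark to be reused later (analogous to Observations \ref{obs:the_ell_vertex}, \ref{obs:the_f_vertex} and \ref{obs:the_q_vertex}, each of which was proved in a single line by the same kind of ``first/last crossing'' argument). The only mild care needed is to remember that $T_{\heavy(\parent(u))}^+$ includes $\parent(u)$ as an extra vertex, which is handled by the replacement-path property of $P$.
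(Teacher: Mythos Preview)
Your argument is correct and is exactly the intended one: the paper states this observation without proof, treating it (like Observations \ref{obs:the_ell_vertex}, \ref{obs:the_f_vertex} and \ref{obs:the_q_vertex}) as immediate from the ``last crossing'' definition of $g_u$ together with the fact that $P_{s,u,\parent(u)}$ avoids $\parent(u)$. There is nothing to add.
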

For a subgraph $G' \subseteq G$, denote by $\FTLabel{SS1F} (u, G')$ the single-source $1$-VFT label of $u$ from Lemma \ref{lem:ss1vft} constructed w.r.t. the graph $G'$ with the same source $s$.

The labels for the Side case are constructed by Algorithm \ref{alg:S-label}.

$\quad$\\
\begin{algorithm}[H]
	\caption{Construction of label $\FTLabel{S} (u)$ for vertex $u$}\label{alg:S-label}
	\textbf{store} $\FTLabel{D'}(u)$, $\FTLabel{U'}(u)$\;
	\For{each $v' \in I(u)$ with $\parent(v')=v$}{
		\textbf{store} $\FTLabel{SS1F}(u, \Gsub{v})$, $\FTLabel{SS1F}(v', \Gsub{v})$\;
		\textbf{store} vertex $g_{v'}$, $\conn(s, v', \Gsub{v,g_{v'}})$, $\FTLabel{D'}(g_{v'})$, $\FTLabel{U'}(g_{v'})$\;
		\textbf{store} the bitstring
		$
		[ \conn(s,g_{v'},\Gsub{v,w}), \text{ $\forall w\in I^{\uparrow}(g_{v'})$ from highest to lowest} ]
		$\;
	}
\end{algorithm}

Using Lemma \ref{lem:instersting_sets}, Lemma \ref{lem:ss1vft}, Lemma \ref{lem:down_special} and Lemma \ref{lem:up_special}, we see that  $\FTLabel{S} (u)$ has length $O(\log^2 n)$ bits.

\paragraph{Decoding Algorithm for Side Case.}
The algorithm proceeds by the following steps.

\bigskip\noindent\textbf{Step (0):}
Let $x''$ be this child of $x$ on $T[x,y]$. Then $x'' \neq x'$. We first handle an easy case:

\smallskip\noindent{\textit{Case: $x''$ is light.}}
Then $x'' \in I(y)$, hence $\FTLabel{SS1F} (y, \Gsub{x})$ is stored in $\FTLabel{S} (y)$.
As $x' \in I(t) \cup I(x)$, $\FTLabel{SS1F} (x', \Gsub{x})$ is stored in either $\FTLabel{S} (t)$ or $\FTLabel{S} (x)$.
Using these $\FTLabel{SS1F}$ labels w.r.t. $\Gsub{x}$, we determine if $s,x'$ are connected in $(\Gsub{x}) \setminus \{y\} = \Gsub{x,y}$, and we are done.

From now on assume that $x'' = \heavy(x)$, namely $y \in T_{\heavy(x)}$.

\bigskip\noindent\textbf{Step (1): Detecting $g := g_{x'}$.}
As $x' \in I(t) \cup I(x)$, $g := g_{x'}$ is stored in either $\FTLabel{S} (t)$ or $\FTLabel{S} (x)$.
If $g = \Null$ then the replacement path $P_{s,x',x}$ avoids $T_{\heavy(x)}$, and particularly avoids $y$.
Thus, $s,x'$ are $xy$-connected and we are done.
If $g = y$, then $\conn(s,x', \Gsub{x,y})$ is specified with $g$, so we are done again.

From now on, assume that $g \in V \setminus\{ y \}$.
It follows from Observation \ref{obs:the_g_vertex} that $g,x'$ are $xy$-connected.
Thus, it suffices to determine the $xy$-connectivity of $s,g$.
Observe that $\FTLabel{D'} (g), \FTLabel{U'} (g)$ are specified with $g$.
Also, $\FTLabel{D'} (x), \FTLabel{U'} (x)$ and $\FTLabel{D'} (y), \FTLabel{U'} (y)$ are stored in $\FTLabel{S} (x)$ and $\FTLabel{S} (y)$ respectively.
There are two possible cases (see illustration in Figure \ref{fig:side}):
\begin{figure}
	\centering
	\includegraphics[scale=0.5]{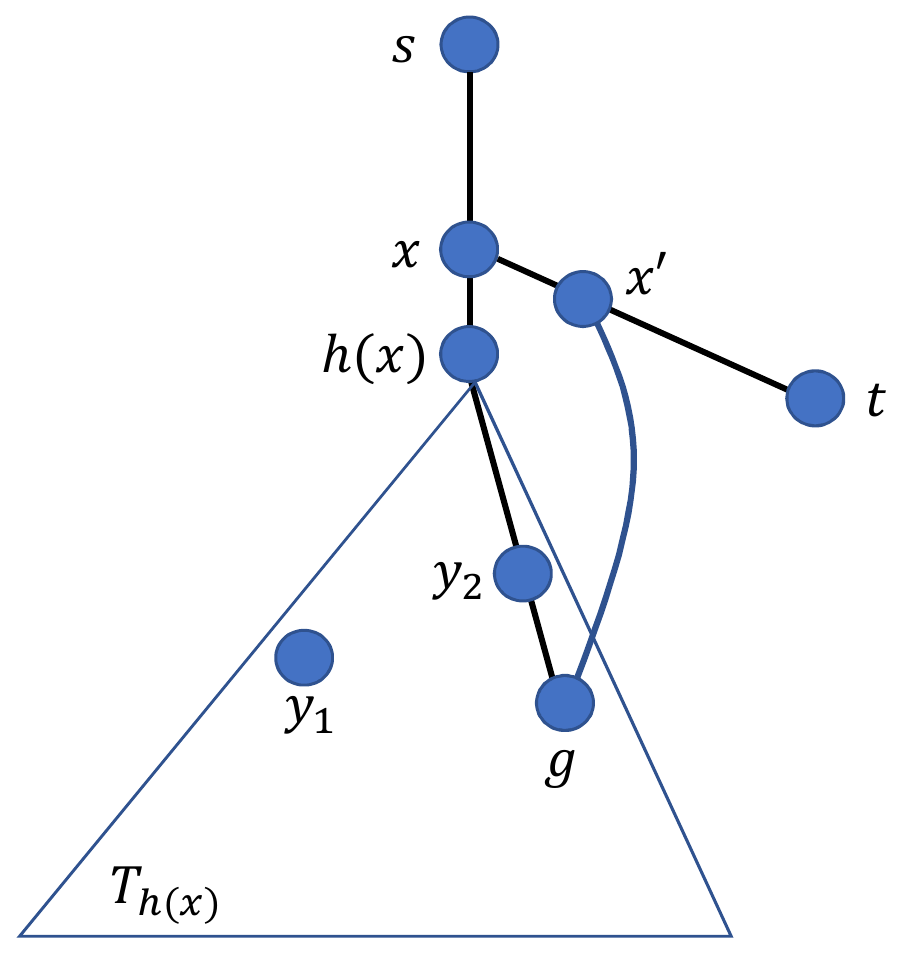}
	\caption{
		The two possible final configurations reached in decoding for the Side case. Vertices $y_1$ and $y_2$ represent the different options for the location of the failure $y$.
	}	
	\label{fig:side}
\end{figure}

\smallskip\noindent{\textit{Case 1: $y \notin T[s,g]$.}
Then $g,y \in T_{\heavy(x)}$ and $y \notin T[\heavy(x),g]$. Thus we apply Lemma \ref{lem:down_special} with query $\langle g, x, y \rangle$ and the corresponding $\FTLabel{D'}$ labels to determine the $xy$-connectivity of $s,g$.

\smallskip\noindent{\textit{Case 2: $y \in T[s,g)$.}
If $y \in I^{\uparrow}(g)$, then $\conn(s,g,\Gsub{x,y})$ is specified along with $g$ and we are done.
Otherwise, the child of $y$ on $T[s,g]$ is $\heavy(y)$, namely $g \in T_{\heavy(y)}$.
Recall that $y \in T_{\heavy(x)}$.
Thus we apply Lemma \ref{lem:up_special} with query $\langle g, y, x \rangle$ and the corresponding $\FTLabel{U'}$ labels to determine the $xy$-connectivity of $s,g$.

\subsection{The Independent Case: $x$ and $y$ are Independent}\label{sec:single-source-dual-ind}

\paragraph{Labels for Independent Case.}
Let $u$ be a vertex. We will need the definition of $\ell_u$ and Observation \ref{obs:the_ell_vertex} from Section \ref{sec:dual-ind}.
Recall that for the replacement path $P = P_{s,u,\parent(u)}$, the vertex $\ell_u \in P$ is the last (closest to $u$) vertex in $T\setminus T_{\parent(u)}$, and Observation \ref{obs:the_ell_vertex} states that $P = T[s, \ell_u] \circ P'$ where $P' \subseteq T_{\parent(u)}$.
Next, let $Q \in \mathcal{Q}$ be a heavy path. We define:
\begin{align}
	d_{u,Q} = \arg \max_{v} \{\depth(v) \mid \text{$v \in Q$ and $\exists$ $s$-$v$ path internally avoiding $\{u\} \cup (Q \setminus T[s,u) )$} \} \label{eq:d_uQ}
\end{align}
The labels for the Independent case are:

$\quad$\\
\begin{algorithm}[H]
	\caption{Construction of label $\FTLabel{IN} (u)$ for vertex $u$}\label{alg:IN-label}
	\For{each $v' \in I(u)$ with $\parent(v')=v$}{
		\textbf{store} $v,v',\ell_{v'}$\;
		\textbf{store} the bitstring
		$
		[ \conn(s,v',\Gsub{v,w}), \text{ $\forall w\in I^{\uparrow}(\ell_{v'})$ from highest to lowest} ]
		$\;
	}
	\For{each heavy path $Q \in \mathcal{Q}$ intersecting $T[s,\ell_{\heavy(u)}]$}{
		\textbf{store} $\ID(Q)$ and $d_{u,Q}$\;
	}
\end{algorithm}

$\quad$\\

Using Lemma \ref{lem:instersting_sets} and Observation \ref{obs:heavy_paths} we see that $\FTLabel{IN} (u)$ has length $O(\log^2 n)$ bits.

\paragraph{Decoding Algorithm for Independent Case.}
The algorithm proceeds by the following steps.

\bigskip\noindent\textbf{Step (1): Detecting $\ell_{x}$.}
As $x' \in I(t) \cup I(x)$, $\ell_{x'}$ is stored in either $\FTLabel{IN} (t)$ or $\FTLabel{IN} (x)$.
We handle two easy cases:

\smallskip\noindent{\textit{Case 1: $y \notin T[s,\ell_{x'}]$.}}
By Observation \ref{obs:the_ell_vertex}, it holds that $P_{s,x',x} = T[s,\ell_{x'}] \circ P'$ where $P' \subseteq T_x$. Therefore, $P_{s,x',x}$ also avoids $y$, so $s,x'$ are $xy$-connected and we are done.

\smallskip\noindent{\textit{Case 2: $y \in I^{\uparrow}(\ell_{x'})$.}}
Then $\conn(s,x',\Gsub{x,y})$ is specified next to $\ell_{x'}$, so we are done.

From now on assume that $y \in T[s,\ell_{x'}]$ and $y \notin I^{\uparrow} (\ell_{x'})$.
Thus, the child of $y$ on $T[y,\ell_{x'}]$ is $\heavy(y)$.
Now $P_{s,x',x} [\heavy(y), x']$ shows that $x', \heavy(y)$ are $xy$-connected.

\bigskip\noindent\textbf{Step (2): Detecting $\ell_{\heavy(y)}$ and $\ell_{\heavy(x)}$.}
As $\heavy(y) \in I(y)$, $\ell_{\heavy(y)}$ is stored in $\FTLabel{IN} (y)$.
The cases where $x \notin T[s, \ell_{\heavy(y)}]$ and where $x \in I^{\uparrow}(\ell_{\heavy(y)})$  are treated similarly to the symmetric cases for $\ell_{x'}$.
Hence, in these cases we can determine the $xy$-connectivity of $s,\heavy(y)$, and thus also of $s,x'$.

From now on assume that $x \in T[s, \ell_{\heavy(y)}]$ and $x \notin I^{\uparrow} (\ell_{\heavy(y)})$.
Thus, the child of $x$ on $T[x, \ell_{\heavy(y)}]$ is $\heavy(x)$.
Now $P_{s,\heavy(y), y}[\heavy(x), \heavy(y)]$ shows that $\heavy(y), \heavy(x)$ are $xy$-connected.

In a symmetric manner, we find $\ell_{\heavy(x)}$ in $\FTLabel{IN} (x)$.
We act in a similar fashion as with $\ell_{\heavy(y)}$.
The only case where we do not finish is when $y \in T[s, \ell_{\heavy(x)}]$ and $y \notin I^{\uparrow} (\ell_{\heavy(x)})$, so from now on we assume this case holds.

\bigskip\noindent\textbf{Step (3): Inspecting $Q_x$ and $Q_y$.}
Let $b_x,b_y$ be the leaves of $T$ which are the endpoints of $Q_x, Q_y$ respectively.
Every $w \in Q_x[\heavy(x), b_x]$ is $xy$-connected to $\heavy(x)$, and every $w \in Q_y[\heavy(y), b_y]$ is $xy$-connected to $\heavy(y)$.
Recall now that $x', \heavy(x), \heavy(y)$ are all $xy$-connected.
Let $B = Q_x[\heavy(x), b_x] \cup  Q_y[\heavy(y), b_y]$ (i.e., all vertices below $x,y$ in $Q_x,Q_y$ respectively).
Then all vertices in $B$ are $xy$-connected, and our goal becomes determining whether there exists $w \in B$ that is $xy$-connected to $s$.
Next, let $A = T[s,x] \cup T[s,y]$. We claim that all vertices in $B$ are not only $xy$-connected, but in fact $A$-connected. Indeed, the path $P_{s,\heavy(y),y} [\heavy(x), \heavy(y)]$ serves as a bridge avoiding $A$ between the parts $Q_x [\heavy(x), b_x]$ and $Q_y [\heavy(y), b_y]$ of $B$.

\bigskip\noindent\textbf{Step (4): Detecting $d_x := d_{x,Q_y}$ and $d_y := d_{y,Q_x}$.}
As $y \in T[s, \ell_{\heavy(x)}]$, $Q_y$ intersects $T[s, \ell_{\heavy(x)}]$, so $d_x := d_{x,Q_y}$ is stored in $\FTLabel{IN} (x)$.
We also find $d_y := d_{y, Q_x}$ symmetrically.
There are two possible cases:

\smallskip\noindent{\textit{Case 1: $d_x \in B$ or $d_y \in B$.}}
Assume $d_x \in B$ (the case $d_y \in B$ is similar).
By Eq. \eqref{eq:d_uQ}, there exists an $s$-$d_x$ path $P$ internally avoiding $\{x\} \cup (Q_y \setminus T[s,x) )$, hence $P$ avoids $x,y$.
Thus, $s$ and $d_x \in B$ are $xy$-connected, so we are done.

\smallskip\noindent{\textit{Case 2: $d_x, d_y \in A$.}}
See illustration in Figure \ref{fig:ind_ss}.
\begin{figure}
	\centering
	\includegraphics[scale=0.5]{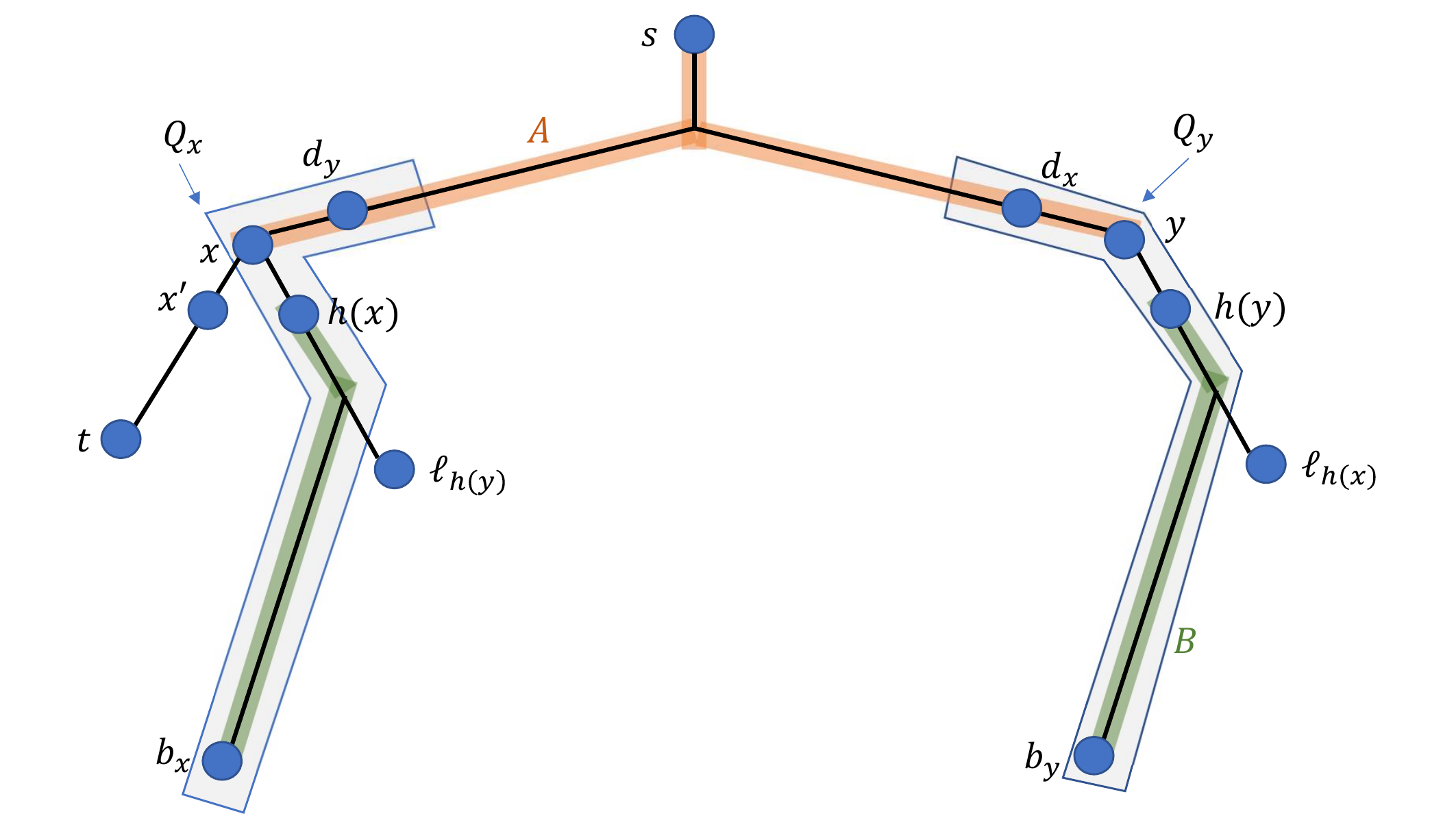}
	\caption{
		The final configuration reached in decoding for the Independent case.
		If this configuration is reached then $s,t$ are $xy$-disconnected.
	}
	\label{fig:ind_ss}
\end{figure}
We prove that all vertices in $B$ are $xy$-disconnected from $s$, which concludes the algorithm.
Assume the contrary, and let $P$ be an $s$-$w$ path avoiding $x,y$ for some $w \in B$.
We may assume that $w$ is the only vertex from $B$ in $P$ (otherwise, replace $w$ by the first hitting point in $B$, and trim $P$ to end there).
Let $u \in P$ be the last (closest to $w$) vertex from $A$.
Assume $u \in T[s,x)$ (the case $u \in T[s,y)$ is similar).
Then $T[s,u] \circ P[u,w]$ internally avoids $\{x\} \cup (Q_y \setminus T[s,x))$.
We divide to cases:
\begin{itemize}
	\item $w \in Q_y[\heavy(y), b_y]$:
	Then $w$ is lower than $d_x$, contradicting Eq. \eqref{eq:d_uQ} for $d_x = d_{x,Q_y}$.
	\item $w \in Q_x [\heavy(x), b_x]$:
	As all vertices in $B$ are $A$-connected and $\heavy(y) \in B$, there exists a $w$-$\heavy(y)$ path $P'$ internally avoiding $A$.
	Let $w'$ be the first vertex from $Q_y$ that $P'$ hits.
	Then $T[s,u] \circ P[u,w] \circ P'[w,w']$ internally avoids $\{x\} \cup (Q_y \setminus T[s,x))$. But as $w' \in Q_y \setminus A$, it is lower than $d_x$, which again contradicts Eq. \eqref{eq:d_uQ} for $d_x = d_{x,Q_y}$.
\end{itemize}
This concludes the decoding label for the Independent case.

Finally, by concatenating the labels $\FTLabel{D} (u)$, $\FTLabel{U} (u)$, $\FTLabel{S} (u)$ and $\FTLabel{IN} (u)$ labels for each $u \in V$, we get a labeling scheme proving Lemma \ref{lem:orig_ss-2vft_labels}.

\end{document}